\documentclass[11pt,a4paper]{article}
\usepackage{amsmath, amsfonts, amssymb}
\usepackage{a4wide}
\usepackage[left=0.8in,bottom=1in,top=1in,right=0.8in]{geometry}
\usepackage{parskip}
\usepackage{enumitem}
\usepackage{xcolor}

\usepackage{hyperref}
\usepackage{soul} 
\usepackage{booktabs}
\usepackage{caption}
\usepackage{siunitx}
\usepackage{tabularx}
\usepackage{comment}
\usepackage{graphicx}
\usepackage{adjustbox}
\usepackage{multirow,tensor}
\usepackage{amsthm}
\usepackage{bm, makecell}
\usepackage{lscape}
\usepackage{rotating}
\usepackage{floatrow}
\usepackage{threeparttable} 
\usepackage[normalem]{ulem}
\usepackage[noadjust]{cite}

\def\qed{\hfill {$\square$}\goodbreak \medskip}

\newtheorem{theorem}{Theorem}[section]
\newtheorem{lemma}[theorem]{Lemma}
\newtheorem{corollary}[theorem]{Corollary}

\newtheorem{definition}[theorem]{Definition}
\newtheorem{example}[theorem]{Example}
\newtheorem{remark}[theorem]{Remark}

\numberwithin{equation}{section}

\newcommand{\Supp}{\textnormal{Supp}}
\DeclareMathOperator{\wt}{wt}

\usepackage{tikz,xcolor,hyperref}
\usepackage{mathdots}

\definecolor{lime}{HTML}{A6CE39}
\DeclareRobustCommand{\orcidicon}{%
	\begin{tikzpicture}
		\draw[lime, fill=lime] (0,0) 
		circle [radius=0.16] 
		node[white] {{\fontfamily{qag}\selectfont \tiny ID}};
		\draw[white, fill=white] (-0.0625,0.095) 
		circle [radius=0.007];
	\end{tikzpicture}
	\hspace{-2mm}
}

\foreach \x in {A, ..., Z}{%
	\expandafter\xdef\csname orcid\x\endcsname{\noexpand\href{https://orcid.org/\csname orcidauthor\x\endcsname}{\noexpand\orcidicon}}
}

\begin{document}
	\date{}
		\title{New Quaternary codes with small Plotkin-defects from two-generator simplicial complexes}

		\author{{\bf Ankit Yadav\footnote{email: {\tt ankityadav10102000@gmail.com}}\orcidA{},\; 
        \bf Nilay Kumar Mondal\footnote{email: {\tt nilaym@srmist.edu.in }}\orcidB{} and \bf
        Ritumoni Sarma\footnote{email: {\tt ritumoni407@gmail.com}}\orcidC{}} \\ $^{\ast \ddagger }$Department of Mathematics\\ Indian Institute of Technology Delhi\\Hauz Khas, New Delhi-110016, India\\
        $^{\dagger }$Department of Mathematics\\ SRM Institute of Science and Technology\\Kattankulathur,
Tamil Nadu 603203, India
        \medskip \\
  }
  
\maketitle

\begin{abstract}
In this article, we construct infinite families of quaternary (that is, over the ring $\mathbb{Z}_4$) $\mathcal{C}_{D}$-codes, where the defining set $D$ is derived utilizing a two-generator simplicial complex, and determine their Lee weight distributions. As a result, we find three quaternary linear code families with Plotkin-defect 1 \& 2 and report at least 32 new or improved parameters having small Plotkin-defects, including 19 projective and 7 optimal parameters. We additionally report 5 quaternary linear codes with best-known parameters that are also projective. Further, we establish necessary and sufficient conditions for their Gray image to be linear, which in turn gives two infinite families of distance-optimal, one infinite family of at least almost dimension-optimal binary linear codes and five infinite families of minimal binary linear codes.

\medskip

\noindent \textit{Keywords:} Quaternary code, Simplicial complex, Plotkin-defect, Projective code, Minimal code
			
\medskip
			
\noindent \textit{2020 Mathematics Subject Classification:} 94B05, 94B25, 94B60, 11T71

\end{abstract}

\section{Introduction}\label{Sec1}
\subsection{Background}
The study of codes over rings has attracted significant attention following the remarkable article by Hammons et al. \cite{hammons1994Z4}, which showed that several well-known binary non-linear codes can be realized as the Gray images of $\mathbb{Z}_4$-linear codes. The term quaternary is used in the literature for both codes over the finite field $\mathbb{F}_{4}$ and codes over $\mathbb{Z}_{4}$. In this article, however, codes over $\mathbb{Z}_{4}$ are referred to as quaternary codes. Quaternary linear codes exhibit deep connections with lattices, combinatorial designs, and low-correlation sequences, as demonstrated in \cite{wan1997quaternary}. Since then, the construction of quaternary linear codes with good parameters has become an important problem in coding theory, and various works have been done in this direction \cite{shi2017consta, Meng2024generalized,habibul2021z4,maheshanand2017skew}. We also direct the reader to several recent studies on this topic \cite{zhu2019new,wang2026trace,hyun2025multivariable
,shi2019trace,carlet2000one,dougherty2001maximum,Kiermaier2016new,Kiermaier2011hyperoval,Kiermaier2013new,shi2014optimal,tang2025plotkin,tang2023new}.

On the other hand, simplicial complexes have been shown to be a powerful tool for constructing projective few-weight binary linear codes, as evidenced by the following literature. Chang and Hyun \cite{Chang_Hyun2018simplicial} introduced a construction of binary linear codes from simplicial complexes and obtained infinite families of optimal few-weight binary linear codes. Subsequently, in \cite{hyun2020infinite}, the authors constructed infinite families of optimal few-weight binary linear codes from one or two-generator simplicial complexes by employing the defining-set technique proposed by Ding and Niederreiter \cite{ding2007cyclotomic}. In \cite{wu2020binary}, the authors used punctured one-generator simplicial complexes to construct few-weight binary linear codes which are distance-optimal, self-orthogonal or LCD. Thereafter, many researchers have used simplicial complexes to construct several families of optimal few-weight binary linear codes via different approaches. One can refer to \cite{yadav2025optimal, vidya2024nonunital, Mondal2024mixed_alphabet, wu2020optimal, hu2024new, chen2025optimal, wu2024survey, shi2022few-weight, anuj2025subfield, chen2025griesmer} and the references therein for a more detailed study. Furthermore, simplicial complexes have also been used as an effective tool for constructing projective minimal binary linear codes. For example, the authors in \cite{Chang_Hyun2018simplicial} obtained an infinite family of minimal binary linear codes that do not satisfy the Ashikhmin-Barg condition. The construction of minimal linear codes has attracted considerable research interest due to their wide applications in secret sharing schemes \cite{shamir1979secret, yuan2006secret} and secure two-party computation \cite{chabanne2014twoparty}. In addition, these codes are amenable to minimum-distance decoding \cite{Ashikhmin1998}, which further enhances their theoretical and practical significance.

\subsection{Motivation and Objectives}
Wu et al. \cite{wu2024quaternary} first introduced the use of binary simplicial complexes to construct two infinite families of quaternary linear codes from two-generator simplicial complexes and determined their Lee weight distributions. They reported at least 9 new quaternary linear codes as per the database \cite{aydin2022updated} and also obtained infinite families of minimal binary linear codes as Gray images. However, all the resulting codes are Lee weight non-projective in contrast to the fact that simplicial complexes are utilized in the binary case to yield projective codes. Further, in \cite{wu2024two}, Wu et al. employed one-generator simplicial complexes as the support set of Boolean functions to define two infinite families (one linear and one non-linear) of quaternary codes. For other studies on quaternary linear codes constructed from defining sets, we refer the reader to \cite{hyun2025multivariable,shi2019trace,wang2026trace,zhu2019new}. Very recently, Tang \cite{tang2025plotkin} completely characterized all possible lengths of Plotkin-optimal codes of arbitrary type, together with their corresponding existence results. A particularly significant consequence of this work is that it yields explicit parameter sets for which no Plotkin-optimal code can exist.

Motivated by the above developments on quaternary linear codes, especially by employing simplicial complexes and nonexistence results of Plotkin-optimal codes for several  parameters, we raise the following questions:
\begin{itemize}
    \item Can good quaternary linear codes be constructed using a defining set consisting of punctured two-generator simplicial complexes, a case that, to the best of our knowledge, has not yet been explored in the quaternary setting? 

    \item If so, can the defining set be further modified to construct quaternary linear codes that are also Lee weight projective?  Subsequently, is it possible to find even some projective quaternary linear codes with best-known parameters that outperform the currently reported best-known codes due to their projectivity?

    \item In \cite{wu2024quaternary}, defining sets were considered where a two-generator and a one-generator simplicial complex share a common generator. Can the Lee weight distribution be determined when such a restriction is removed, and does this lead to good quaternary linear codes with more flexible lengths?

    \item  Can these constructions lead us to the best possible codes when Plotkin-optimality is unattainable, more precisely, codes that fall just short of the Plotkin bound, namely, codes with ``Plotkin-defect" 1?
\end{itemize}

\subsection{Our contribution}

To answer the above questions, first we use a punctured two-generator simplicial complex, namely, $\Delta_{B,C}\setminus \Delta_{B \cap C}$ to employ the defining set $D=\Delta_{A}+ 2 (\Delta_{B,C}\setminus \Delta_{B \cap C})$ in order to construct infinite families of quaternary linear codes $\mathcal{C}_{D}$. Further, we modified this defining set as $D=\Delta_{A}^{*}+ 2 (\Delta_{B,C}\setminus \Delta_{B \cap C})$ with $|D|>1$ and $A\cap(B\cup C) = \emptyset$ to construct infinite families of Lee weight projective quaternary linear codes. Lastly, we use the complement of a two-generator simplicial complex $(\Delta_{B,C})^{c}$ to employ the defining set $D=\Delta_{A}+ 2 (\Delta_{B,C})^{c}$. We must note that in all three cases above, the generators are distinct, which removes the restriction on the set of generators of the simplicial complexes used as defining sets in \cite{wu2024quaternary}. As an immediate advantage, this enables the construction of quaternary codes with odd lengths, which was not achievable in \cite{wu2024quaternary}. 
\vskip 2pt
Further, we find the Lee weight distributions for all three choices of defining sets and obtain several quaternary linear codes with small Plotkin-defects, and report at least 32 new or improved quaternary linear codes by comparing with the database of best-known quaternary codes \cite{aydin2022updated} and with all possible parameters of Plotkin-optimal codes \cite{tang2025plotkin}. Among these 32 codes, 19 are Lee weight projective and 7 are optimal. We additionally report 5 projective quaternary linear codes with best-known parameters that might outperform the currently reported best-known codes due to their projectivity. Moreover, we obtain two families of quaternary linear codes with Plotkin-defect 1, that is, the codes in this family are either optimal or almost optimal, depending on whether Plotkin-optimal codes of the corresponding length and type exist, and also obtain a family of quaternary linear codes with Plotkin-defect 2. For the convenience of readers, we clarify the following terminologies: We use `new' to describe a particular parameter set of a quaternary linear code for which no Plotkin-optimal code of that length and type is reported in \cite{tang2025plotkin}, and no code of the same length and type is reported in the database \cite{aydin2022updated}. We use `improved' when no Plotkin-optimal code of that length and type is reported in \cite{tang2025plotkin}, but the best-known quaternary code of the same length and type in the database has a smaller Lee distance than the code we construct. By `projective' quaternary linear codes, we mean Lee weight projective, and `family' refers to an infinite family.
\vskip 2pt
Furthermore, we characterize a necessary and sufficient condition for their Gray image to be linear. For the projective quaternary linear code family, however, this condition is only sufficient. By analyzing the parameters and weight distribution of the linear Gray image, we obtain a family of Griesmer codes already reported in \cite[Theorem 5]{chen2025griesmer}, a distance-optimal family and at least one almost dimension-optimal family, along with several families of binary minimal codes. Moreover, by applying the MacWilliams identities to the Gray image parameters of a family of quaternary linear codes constructed in this paper, we found 9 new parameters from the corresponding dual codes, presented in Table \ref{table6}.

The article is organized as follows. The subsequent section presents the necessary preliminaries. Section \ref{Sec3} comprises the Lee weight distributions of the three families of quaternary codes. In Section \ref{Sec4}, we characterize the conditions for the linearity and minimality of the quaternary codes obtained in Section \ref{Sec3}. Finally, we conclude our article in Section \ref{Sec5}.

\section{Preliminaries} \label{Sec2}
Throughout this article, we use the following notation.
\begin{table}[H]
\centering
\begin{tabular}{l|l}
\hline
Notation & Description \\
\hline
$m$ & a positive integer \\
$[m]$ & the set $\{1, 2, \dots, m\}$\\
$\#S$ &  the cardinality of $S$ \\
$\mathcal{P}(S)$ & the power set of $S$\\
$\mathbb{Z}_{2}$ & the ring of integers modulo $2$ \\
$\Supp(\mathbf{v})$ & $\{j\in [m]: v_j\ne 0\}$, for $\mathbf{v}\in\mathbb{Z}_{2}^{m}$\\
$\wt_{H}(\mathbf{v})$ & $\#\Supp (\mathbf{v})$\\
$\mathbf{x}\mathbf{y}$ & the standard Euclidean inner product of $\mathbf{x},\mathbf{y}\in \mathbb{Z}_{2}^{m}$\\
\hline
\end{tabular}
\end{table}
A \textit{ binary linear code} $\mathcal{C}$ with parameters $[n,k,d]$ of length $n$ over $\mathbb{Z}_{2}$ is a $k$-dimensional subspace of $\mathbb{Z}_{2}^{n}$, 
where $d$ is the minimum (Hamming) distance of $\mathcal{C}$. Let $A_{j}$ be the number of codewords of $\mathcal{C}$ having weight $j$, then the string $(A_{0},A_{1},\ldots,A_{n})$ is said to be the \textit{Hamming weight distribution} of $\mathcal{C}$. The code $\mathcal{C}$  is called a $t$-\textit{weight} linear code if exactly $t$ entries among $A_{1},A_{2},\ldots,A_{n}$ are non-zero. 
Let $\mathcal{C}^{\perp}$ be the dual code of $\mathcal{C}$ with respect to the usual inner product. The MacWilliams identities relate the weight distributions of $\mathcal{C}$ and $\mathcal{C}^{\perp}$, and are given as follows.
\begin{lemma} \cite{huffman2010book}\label{lem:macwilliams}
    Let $\mathcal{C}$ be an $[n,k,d]$ binary linear code with weight distribution $(A_{0},A_{1},\ldots,A_{n})$, and let the weight distribution of $\mathcal{C}^{\perp}$ be $(A_{0}^{\perp},A_{1}^{\perp},\ldots,A_{n}^{\perp})$. Then 
\begin{equation}\label{eq:MacWilliams}
\sum_{j=0}^{n-r} \binom{n-j}{r} A_j
=
2^{k-r}
\sum_{j=0}^{r} \binom{n-j}{n-r} A_j^{\perp}
\qquad \text{for } 0 \le r \le n.
\end{equation}
\end{lemma}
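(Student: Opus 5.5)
We prove the MacWilliams identities (Lemma \ref{lem:macwilliams}) by double counting pairs $(\mathbf{c},S)$ with $\mathbf{c}\in\mathcal{C}$, $S\subseteq[n]$, $\#S=r$ and $\Supp(\mathbf{c})\cap S=\emptyset$. For a fixed codeword of weight $j$ there are exactly $\binom{n-j}{r}$ admissible sets $S$, because $S$ has to be chosen among the $n-j$ coordinates outside the support. Hence the left-hand side of \eqref{eq:MacWilliams} counts all such pairs. Counting instead by $S$, we get $\sum_{\#S=r}\#\mathcal{C}(S)$, where $\mathcal{C}(S)=\{\mathbf{c}\in\mathcal{C}:c_i=0 \text{ for all } i\in S\}$ is the shortened subcode. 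It is a subspace of $\mathcal{C}$.

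The key step is to compute $\dim\mathcal{C}(S)$ by duality. Consider the projection $\pi_S:\mathcal{C}\to\mathbb{Z}_2^{S}$. Its kernel is $\mathcal{C}(S)$, so $\dim\mathcal{C}(S)=k-\dim\pi_S(\mathcal{C})$. Next, let $\mathcal{C}^{\perp}[S]=\{\mathbf{y}\in\mathcal{C}^{\perp}:\Supp(\mathbf{y})\subseteq S\}$. Its restriction to $S$ is exactly the orthogonal complement of $\pi_S(\mathcal{C})$ inside $\mathbb{Z}_2^{S}$: a vector $\mathbf{y}$ supported on $S$ satisfies $\mathbf{x}\mathbf{y}=\pi_S(\mathbf{x})\cdot\mathbf{y}|_S$. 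Therefore $\dim\mathcal{C}^{\perp}[S]=r-\dim\pi_S(\mathcal{C})$, and so $\#\mathcal{C}(S)=2^{k-r}\,\#\mathcal{C}^{\perp}[S]$.

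Summing over all $S$ with $\#S=r$ gives $2^{k-r}$ times the number of pairs $(\mathbf{y},S)$ with $\mathbf{y}\in\mathcal{C}^{\perp}$ and $\Supp(\mathbf{y})\subseteq S$. We count these by $\mathbf{y}$. If $\wt_H(\mathbf{y})=j$, then $S$ must contain the $j$ support coordinates, and the remaining $r-j$ coordinates are chosen from the other $n-j$. This gives $\binom{n-j}{r-j}=\binom{n-j}{n-r}$ choices, and it forces $j\le r$. This is exactly the right-hand side of \eqref{eq:MacWilliams}.

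The only nontrivial step is the orthogonality identification $\mathcal{C}^{\perp}[S]|_S=\pi_S(\mathcal{C})^{\perp}$, and it follows directly from the support condition. The rest is bookkeeping of binomial coefficients, including the range of summation $j\le n-r$ on the left, since $\binom{n-j}{r}=0$ otherwise.
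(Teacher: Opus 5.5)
Your argument is correct and complete: the identity $\#\mathcal{C}(S)=2^{k-r}\,\#\mathcal{C}^{\perp}[S]$, which you derive from $\mathcal{C}^{\perp}[S]|_S=\pi_S(\mathcal{C})^{\perp}$, holds for every $S$ (including $r>k$ and the extreme cases $r=0$ and $r=n$), and your two double counts give exactly the two sides of \eqref{eq:MacWilliams}. The paper gives no proof of its own and only cites Huffman--Pless; to my recollection, this binomial-moment form is proved there by essentially the same double count of pairs $(\mathbf{c},S)$, using the duality between shortened and punctured codes, so your approach matches the standard one.
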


\subsection{Quaternary codes}\label{subsection 2.1}
Let $\mathbb{Z}_{4}$ be the ring of integers modulo $4$. Any non-empty subset $\mathcal{C}$ of $\mathbb{Z}_{4}^{n}$ is called a \textit{quaternary code}. A \textit{linear code} $\mathcal{C}$ of length $n$ over $\mathbb{Z}_{4}$, also called a \textit{quaternary linear code} or a $\mathbb{Z}_{4}$-\textit{linear code}, is a $\mathbb{Z}_{4}$-submodule of $\mathbb{Z}_{4}^{n}.$ 

Let $\mathcal{C}$ and $\mathcal{D}$ be two $\mathbb{Z}_{4}$-linear codes of length $n$. The codes $\mathcal{C}$ and $\mathcal{D}$ are said to be \textit{permutation equivalent} if there exists an $n\times n$ permutation matrix $P$ such that $\mathcal{D}=\mathcal{C}P.$

Any $\mathbb{Z}_{4}$-linear code $\mathcal{C}$ is permutation equivalent to a code with generator matrix $G$ of the form
\[
G = \left[
\begin{tabular}{c c c}
    $I_{k_{1}}$ & $A_{1}$ & $B_{1}+2B_{2}$  \\
     $O$ & $2I_{k_{2}}$ & $2A_{2}$ 
\end{tabular}
\right],
\]
where $A_{1},A_{2},B_{1}$, and $B_{2}$ are matrices over $\mathbb{Z}_{2}$, and $O$ denotes the $k_{2}\times k_{1}$ zero matrix. The code $\mathcal{C}$ is said to be of type $4^{k_{1}}2^{k_{2}}.$

We recall the Gray map \cite{hammons1994Z4}, $\phi:\mathbb{Z}_{4}\to \mathbb{Z}_{2}^{2}$, defined by
\[
\phi(0) = 00, \phi(1) = 01, \phi(2) = 11, \text{ and } \phi(3) = 10.
\]
Since every element of $\mathbb{Z}_{4}$ admits a unique representation of the form $b+2c$, where $b,c \in \mathbb{Z}_{2}$, the Gray map $\phi$ can be equivalently expressed as 
\[
b+2c \mapsto (c, b+c).
\]
This map extends componentwise to a map, $\Phi:\mathbb{Z}_{4}^{n} \to \mathbb{Z}_{2}^{2n}$, defined by
\[
\mathbf{b}+2\mathbf{c} \mapsto (\mathbf{c}, \mathbf{b}+\mathbf{c}),
\]
where $\mathbf{b},\mathbf{c}\in\mathbb{Z}_{2}^{n}.$

The\textit{ Lee weight} of a vector $\mathbf{v} = \mathbf{b}+2\mathbf{c}\in\mathbb{Z}_{4}^{n}$ is defined as 
\[
\wt_{L}(\mathbf{v}) := \wt_{H}(\Phi(\mathbf{v})) = \wt_{H}(\mathbf{c})+\wt_{H}(\mathbf{b}+\mathbf{c}).
\]
The \textit{minimum Lee weight} of $\mathcal{C}$ (denoted by $\wt_{L}(\mathcal{C})$) is defined as 
\[
\wt_{L}(\mathcal{C}) := \min\{ \wt_{L}(\mathbf{c}) \mid \mathbf{c} \in \mathcal{C}, \mathbf{c}\neq \mathbf{0}\}.
\]
Suppose $\mathcal{C}$ is a $\mathbb{Z}_{4}$-linear code of length $n$.
For $0\leq i \leq 2n,$ let $A_{i}$ denotes the number of codewords of $\mathcal{C}$ having Lee weight $i$. The sequence $(A_{0}, A_{1}, \ldots, A_{2n})$ is said to be the \textit{Lee weight distribution} of $\mathcal{C},$ and the homogeneous polynomial 
\[
\text{Lee}_{\mathcal{C}}(x,y) = \sum\limits_{\mathbf{c}\in\mathcal{C}} x^{\wt_{L}(\mathbf{c})}y^{2n-\wt_{L}(\mathbf{c})}
\]
is said to be the \textit{Lee weight enumerator} of $\mathcal{C}$. 

The \textit{Lee distance} between two vectors $\mathbf{u}$ and $\mathbf{v}$ is defined as 
\[
d_{L}(\mathbf{u},\mathbf{v}) := \wt_{L}(\mathbf{u}-\mathbf{v}).
\]
The \textit{minimum Lee distance} of $\mathcal{C}$ (denoted by $d_{L}(\mathcal{C})$) is defined as 
\[
d_{L}(\mathcal{C}) := \min\{ d_L(\mathbf{x},\mathbf{y}) \mid \mathbf{x},\mathbf{y} \in \mathcal{C}, \mathbf{x}\neq \mathbf{y}\}.
\]
If $\mathcal{C}$ is a $\mathbb{Z}_{4}$-linear code, then its minimum Lee weight coincides with its minimum Lee distance. We denote a quaternary code of length $n$, size $k$ and minimum Lee distance $d_{L}$ by $(n, k, d_{L})$.

The Gray map $\Phi$ is a distance preserving map from $\mathbb{Z}_{4}^{n}$, equipped with the Lee distance, to $\mathbb{Z}_{2}^{2n}$, equipped with the Hamming distance. Consequently, the Lee weight distribution of a $\mathbb{Z}_{4}$-linear code $\mathcal{C}$ coincides with the Hamming weight distribution of its Gray image $\Phi(\mathcal{C})$.

A linear code $\mathcal{C}$ of length $n$ over $\mathbb{Z}_{4}$ is called \textit{projective} if $d_{L}(\mathcal{C}^{\perp})\geq 3$, where $\mathcal{C}^{\perp}$ denotes the dual of $\mathcal{C}$ with respect to the standard Euclidean inner product. The dual code $\mathcal{C}^{\perp}$ is also a quaternary linear code of length $n$ and of type $4^{n-k_1-k_2}2^{k_2}$, provided $\mathcal{C}$ is of length $n$ and type $4^{k_1}2^{k_2}$ \cite{huffman2010book}.

The following lemma gives a necessary and sufficient condition for a quaternary linear code to be projective.
\begin{lemma}\cite[Lemma 4.8]{shi2020two}\label{lem 2.2}
	Let $\mathcal{C}$ be a length $n$ quaternary linear code of type $4^{k_1}2^{k_2}$ with a generator
	matrix $G$. Then $\mathcal{C}$ is projective if and only if the following two conditions hold:
    \begin{enumerate}
        \item every column of $G$ contains $1$ or $3$,
        \item no two columns of $G$ are $\pm1$ multiple of each other.
    \end{enumerate}
\end{lemma}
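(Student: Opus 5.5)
We prove both directions by characterizing which dual codewords have Lee weight $1$ or $2$. Since the minimum Lee distance equals the minimum Lee weight for a linear code, $d_L(\mathcal{C}^{\perp})\geq 3$ means exactly that $\mathcal{C}^{\perp}$ contains no nonzero vector of Lee weight $1$ or $2$. A vector $\mathbf{x}\in\mathbb{Z}_4^n$ lies in $\mathcal{C}^{\perp}$ iff $G\mathbf{x}^{T}=\mathbf{0}$ over $\mathbb{Z}_4$, because the rows of $G$ generate $\mathcal{C}$. Writing $\mathbf{g}_j$ for the $j$-th column of $G$, this reads $\sum_j x_j\mathbf{g}_j=\mathbf{0}$. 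So everything reduces to listing the nonzero vectors of Lee weight at most $2$ and translating each membership condition into a condition on columns.

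Since $\wt_L(1)=\wt_L(3)=1$ and $\wt_L(2)=2$, the nonzero vectors of Lee weight $\le 2$ fall into three kinds:
\begin{enumerate}
\item $\pm\mathbf{e}_j$ (Lee weight $1$);
\item $2\mathbf{e}_j$ (Lee weight $2$);
\item $\epsilon\mathbf{e}_i+\delta\mathbf{e}_j$ with $i\neq j$ and $\epsilon,\delta\in\{\pm1\}$ (Lee weight $2$).
\end{enumerate}
Their membership conditions are as follows:
\begin{itemize}
\item $\pm\mathbf{e}_j\in\mathcal{C}^{\perp}$ iff $\mathbf{g}_j=\mathbf{0}$.
\item $2\mathbf{e}_j\in\mathcal{C}^{\perp}$ iff $2\mathbf{g}_j=\mathbf{0}$, i.e.\ iff every entry of $\mathbf{g}_j$ lies in $\{0,2\}$. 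Equivalently, column $j$ contains no $1$ or $3$. This case also covers $\mathbf{g}_j=\mathbf{0}$.
\item $\epsilon\mathbf{e}_i+\delta\mathbf{e}_j\in\mathcal{C}^{\perp}$ iff $\epsilon\mathbf{g}_i+\delta\mathbf{g}_j=\mathbf{0}$. Multiplying by the unit $\epsilon$, this becomes $\mathbf{g}_i=\pm\mathbf{g}_j$, i.e.\ the two columns are $\pm1$ multiples of each other.
\end{itemize}

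Combining these, $\mathcal{C}^{\perp}$ contains no nonzero vector of Lee weight at most $2$ iff every column contains a $1$ or a $3$ (condition 1, which also excludes zero columns) and no two distinct columns are $\pm1$ multiples of each other (condition 2). This proves both directions simultaneously.

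The only delicate point is the reduction from $\mathcal{C}^{\perp}$ to the kernel of $G$. A $\mathbb{Z}_4$-linear code is the $\mathbb{Z}_4$-span of the rows of $G$. Hence $\mathbf{x}$ is orthogonal to all of $\mathcal{C}$ iff it is orthogonal to each row of $G$, even though rows of order $2$ are present. With that in place, the remaining work is the bookkeeping above. The type $4^{k_1}2^{k_2}$ plays no role beyond ensuring that $G$ generates $\mathcal{C}$.
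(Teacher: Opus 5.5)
Your proof is correct and complete. The nonzero vectors of Lee weight at most $2$ are $\pm\mathbf{e}_j$, $2\mathbf{e}_j$ and $\pm\mathbf{e}_i\pm\mathbf{e}_j$ with $i\neq j$. Since $\mathbf{x}\in\mathcal{C}^{\perp}$ exactly when $G\mathbf{x}^{T}=\mathbf{0}$, their membership in $\mathcal{C}^{\perp}$ translates exactly into the failure of condition 1 (zero or all-$\{0,2\}$ columns) or of condition 2 (columns with $\mathbf{g}_i=\pm\mathbf{g}_j$).

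The paper gives no proof of its own; it imports the lemma from \cite[Lemma 4.8]{shi2020two}. Your argument is the standard direct proof. As you note, it only uses that the rows of $G$ span $\mathcal{C}$. It therefore also covers the generating matrix $\mathcal{G}_D$, which need not be a generator matrix and is the matrix the lemma is applied to in the proof of Theorem~\ref{projective}.
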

We next recall an upper bound on the minimum Lee distance of a quaternary linear code.
\begin{lemma}[Plotkin-type Lee distance bound \cite{Wyner1968upperbound, tang2023general}]\label{lem 2.1} 
	Let $\mathcal{C}$ be an $(n,k,d_L)$-quaternary linear code. Then $d_L\le\frac{kn}{k-1}$.
\end{lemma}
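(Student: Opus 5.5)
We prove the bound by the classical Plotkin double-counting argument, carried out on the binary Gray image. Here $k$ denotes the size (number of codewords) of $\mathcal{C}$, written $M:=k=\#\mathcal{C}$. The strict inequality is harmless, so we may assume $M\ge 2$. Linearity is not needed. The only input is that $\Phi$ is a distance-preserving bijection from $(\mathbb{Z}_4^n,d_L)$ onto its image in $(\mathbb{Z}_2^{2n},d_H)$, as recalled in Subsection~\ref{subsection 2.1}. Hence $\Phi(\mathcal{C})$ is a binary code of length $2n$ with $M$ codewords and minimum Hamming distance $d_L$.

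Consider the sum
\[
S=\sum_{(\mathbf{x},\mathbf{y})\in\mathcal{C}^2,\ \mathbf{x}\ne\mathbf{y}} d_L(\mathbf{x},\mathbf{y})=\sum_{(\mathbf{u},\mathbf{v})\in\Phi(\mathcal{C})^2,\ \mathbf{u}\neq\mathbf{v}} d_H(\mathbf{u},\mathbf{v}).
\]
\emph{Lower bound.} There are $M(M-1)$ ordered pairs of distinct codewords. Each pair is at Lee distance at least $d_L$, so $S\ge M(M-1)d_L$.

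\emph{Upper bound.} Write the Hamming distance as a sum over the $2n$ binary coordinates and interchange the order of summation. Fix a coordinate $j\in[2n]$, and suppose $a_j$ images have a $0$ there and $M-a_j$ have a $1$. The contribution of coordinate $j$ to $S$ is $2a_j(M-a_j)$. By AM--GM this is at most $2(M/2)^2=M^2/2$. Summing over the $2n$ coordinates gives $S\le 2n\cdot M^2/2=nM^2$.

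Combining the two bounds gives $M(M-1)d_L\le nM^2$, that is,
\[
d_L\le \frac{nM}{M-1}=\frac{kn}{k-1}.
\]

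There is no serious obstacle. The only point needing care is the per-coordinate bound. One could do it directly over $\mathbb{Z}_4$: for a column with symbol counts $n_0,\dots,n_3$, bound $\sum_{a,b}n_an_b\,\wt_L(a-b)$ by $M^2$. Passing through the Gray map reduces this to the elementary binary estimate $a(M-a)\le M^2/4$, which is why the argument goes via $\Phi(\mathcal{C})$.
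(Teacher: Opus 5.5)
Your proof is correct and complete. With $k=\#\mathcal{C}$, as in the paper's $(n,k,d_L)$ convention, the double count $M(M-1)d_L\le S\le nM^2$ on the Gray image uses only that $\Phi$ is an isometry, so it also covers nonlinear codes. Your aside about a ``strict inequality'' is spurious: $M\ge 2$ is needed only because $d_L$ and $\frac{kn}{k-1}$ are undefined when $M=1$.

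The paper gives no proof of this lemma; it quotes it from Wyner--Graham and Tang--Suprijanto. There the bound comes from the same Plotkin averaging argument, applied column by column over $\mathbb{Z}_4$ with the Lee weight. Passing through $\Phi$ is simply a quick way to get the per-column estimate $\sum_{a,b}n_an_b\,\wt_L(a-b)\le M^2$, so this is essentially the same approach.
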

A quaternary linear code $\mathcal{C}$ with parameters $(n,k,d_L)$ is \textit{Plotkin-optimal} if $d_L=\lfloor\frac{kn}{k-1}\rfloor$. We define the Plotkin-defect  $\mathcal{C}$ as $\lfloor\frac{kn}{k-1}\rfloor-d_L$. We must note that, if there exists no Plotkin-optimal code with length $n$ and size $k$, a $(n,k,d_L)$ quaternary linear code having a Plotkin-defect $1$ is optimal, and a Plotkin-defect $2$ is at least almost optimal. Here, optimal refers to distance-optimal.

The following lemma characterizes the linearity of the Gray image of a quaternary linear code.
\begin{lemma}\cite[Theorem 12.2.3]{huffman2010book}\label{lem 2.3}
	Let $\mathcal{C}$ be a quaternary linear code of length $n$. Then the binary code $\Phi(\mathcal{C})$ is linear if and only if  $2({\bf{c}*\bf{d}})\in \mathcal{C}$ for all ${\bf{c},\bf{d}}\in \mathcal{C}$, where ${\bf{c}*\bf{d}}=(c_1d_1,\dots,c_nd_n)$ is the componentwise product in $\mathbb{Z}_4^n$.
\end{lemma}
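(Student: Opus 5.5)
The plan is to reduce everything to one algebraic identity for the Gray map: for all $\mathbf{u},\mathbf{v}\in\mathbb{Z}_4^n$,
\[
\Phi(\mathbf{u})+\Phi(\mathbf{v})=\Phi\bigl(\mathbf{u}+\mathbf{v}+2(\mathbf{u}*\mathbf{v})\bigr),
\]
where the sum on the left is taken in $\mathbb{Z}_2^{2n}$. Once this is established, both directions of Lemma \ref{lem 2.3} follow from the injectivity of $\Phi$ and the linearity of $\mathcal{C}$.

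\textbf{Step 1: the identity.} Write $\mathbf{u}=\mathbf{b}+2\mathbf{c}$ and $\mathbf{v}=\mathbf{b}'+2\mathbf{c}'$ with $\mathbf{b},\mathbf{c},\mathbf{b}',\mathbf{c}'\in\mathbb{Z}_2^n$.
\begin{itemize}
\item The left-hand side is $(\mathbf{c}+\mathbf{c}',\ \mathbf{b}+\mathbf{b}'+\mathbf{c}+\mathbf{c}')$ over $\mathbb{Z}_2$.
\item In $\mathbb{Z}_4$, adding two bits coordinatewise gives $b_j+b'_j=(b_j\oplus b'_j)+2b_jb'_j$, where $b_j b'_j$ is the carry. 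Hence $\mathbf{u}+\mathbf{v}=(\mathbf{b}+\mathbf{b}')+2(\mathbf{c}+\mathbf{c}'+\mathbf{b}*\mathbf{b}')$, with $\mathbf{b}+\mathbf{b}'$ computed mod $2$.
\item Since $2\cdot 2=0$ in $\mathbb{Z}_4$, we have $2(\mathbf{u}*\mathbf{v})=2(\mathbf{b}*\mathbf{b}')$.
\item Therefore $\mathbf{u}+\mathbf{v}+2(\mathbf{u}*\mathbf{v})=(\mathbf{b}+\mathbf{b}')+2(\mathbf{c}+\mathbf{c}')$, because the two carry terms cancel mod $2$.
\item Applying $\Phi(\mathbf{b}+2\mathbf{c})=(\mathbf{c},\mathbf{b}+\mathbf{c})$ to this vector gives exactly the left-hand side.
\end{itemize}
The only point needing care is tracking the carry correctly; I expect this bookkeeping to be the main (small) obstacle.

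\textbf{Step 2: sufficiency.} Assume $2(\mathbf{c}*\mathbf{d})\in\mathcal{C}$ for all $\mathbf{c},\mathbf{d}\in\mathcal{C}$. Take $\Phi(\mathbf{c}),\Phi(\mathbf{d})\in\Phi(\mathcal{C})$. Since $\mathcal{C}$ is closed under addition, $\mathbf{c}+\mathbf{d}+2(\mathbf{c}*\mathbf{d})\in\mathcal{C}$. By the identity, $\Phi(\mathbf{c})+\Phi(\mathbf{d})\in\Phi(\mathcal{C})$. A nonempty subset of $\mathbb{Z}_2^{2n}$ closed under addition is a subspace, so $\Phi(\mathcal{C})$ is linear.

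\textbf{Step 3: necessity.} Assume $\Phi(\mathcal{C})$ is linear and let $\mathbf{c},\mathbf{d}\in\mathcal{C}$. Then $\Phi(\mathbf{c})+\Phi(\mathbf{d})=\Phi(\mathbf{w})$ for some $\mathbf{w}\in\mathcal{C}$. By the identity and injectivity of $\Phi$ (it is a bijection $\mathbb{Z}_4^n\to\mathbb{Z}_2^{2n}$), $\mathbf{w}=\mathbf{c}+\mathbf{d}+2(\mathbf{c}*\mathbf{d})$. Hence $2(\mathbf{c}*\mathbf{d})=\mathbf{w}-\mathbf{c}-\mathbf{d}\in\mathcal{C}$ by the linearity of $\mathcal{C}$.
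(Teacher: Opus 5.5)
Your proof is correct and complete, and it is the standard argument. The paper gives no proof of its own; it only cites Huffman--Pless, Theorem 12.2.3, and the proof there (going back to Hammons et al.) also rests on the Gray-map addition identity $\Phi(\mathbf{u})+\Phi(\mathbf{v})=\Phi\bigl(\mathbf{u}+\mathbf{v}+2(\mathbf{u}*\mathbf{v})\bigr)$, combined with the bijectivity of $\Phi$ and the linearity of $\mathcal{C}$, exactly as in your Steps 2 and 3. Your carry bookkeeping in Step 1 is right: the term $2(\mathbf{u}*\mathbf{v})=2(\mathbf{b}*\mathbf{b}')$ cancels the carry $2(\mathbf{b}*\mathbf{b}')$ that appears in $\mathbf{u}+\mathbf{v}$.
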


\subsection{Simplicial Complex}
Let $\mathbf{v},\mathbf{w} \in \mathbb{Z}_{2}^{m}$. Then, $\mathbf{w}$ \textit{covers} $\mathbf{v}$ (we write $\mathbf{v} \preceq \mathbf{w}$) if $\Supp(\mathbf{w}) \supseteq \Supp(\mathbf{v})$. The map $\phi : \mathbb{Z}_{2}^{m} \to \mathcal{P}([m])$ given by $\mathbf{v}\mapsto \Supp(\mathbf{v})$ is a bijective map.
\begin{definition}
    A \textit{simplicial complex} $\Delta$ is a subset of $ \mathbb{Z}_{2}^{m}$ that satisfies the condition that $\mathbf{v}\in \Delta \implies \mathbf{w}\in \Delta$ for all $\mathbf{w}\preceq \mathbf{v}$. An element $\mathbf{v}\in\Delta$ is said to be \textit{maximal} if, whenever $\mathbf{v}\preceq\mathbf{w}$ for some $\mathbf{w}\in\Delta$, it follows that $\mathbf{v}=\mathbf{w}$.
\end{definition}
A simplicial complex generated by $A\subseteq [m]$ (denoted by $\Delta_{A}$) is a $|A|$-dimensional subspace of $\mathbb{Z}_{2}^{m}$ having one maximal element $A$, that is, $\Delta_{A} := \{\mathbf{v}\in\mathbb{Z}_{2}^{m}: \Supp(\mathbf{v}) \subseteq A\}.$ Furthermore, a simplicial complex generated by two maximal elements $A$ and $B$ (denoted by $\Delta_{A,B}$) is $\Delta_{A,B} := \Delta_{A} \cup \Delta_{B}$. Note that $ \Delta_{A} \cap \Delta_{B} = \Delta_{A\cap B}.$ We refer to any simplicial complex generated by two maximal elements as a \textit{two-generator simplicial complex}.

Let $\Delta_{A}\subseteq \mathbb{Z}_{2}^{m}$ be a simplicial complex. We denote by $\Delta_{A}^{\perp}$ the dual of $\Delta_{A}$ with respect to the standard Euclidean inner product. It is straightforward to verify that $\Delta_{A}^{\perp}= \{\mathbf{v}\in\mathbb{Z}_{2}^{m}: \Supp(\mathbf{v})\cap A = \emptyset\} = \Delta_{A^c}.$

For a subset $Q\subseteq \mathbb{Z}_{2}^{m}$, define 
\begin{equation}
    \mathcal{H}_{Q}(y_{1},y_{2},\ldots,y_{m}) := \sum\limits_{\mathbf{v}\in Q} \prod\limits_{j=1}^{m} y_{j}^{v_{j}} \in \mathbb{Z}[y_{1},y_{2},\ldots,y_{m}],
\end{equation}
where $\mathbf{v}=(v_{1},v_{2},\ldots,v_{m}).$

We next recall the following lemma from \cite{Chang_Hyun2018simplicial}.

\begin{lemma}\label{lemma_cardinality}
     Let $\mathcal{F}$ denote the set of all maximal elements of a simplicial complex $\Delta\subseteq \mathbb{Z}_{2}^{m}$. Then $$\mathcal{H}_{\Delta}(y_{1},y_{2},\ldots,y_{m}) = \sum\limits_{\emptyset \neq S \subseteq \mathcal{F}} (-1)^{|S|+1} \prod\limits_{j\in \cap S} (1+y_{j}),$$
    where $\bigcap S =  \underset{F\in S}{\bigcap}\Supp(F).$ In particular, we have $$|\Delta| = \sum\limits_{\emptyset \neq S \subseteq \mathcal{F}} (-1)^{|S|+1}2^{|\cap S|}.$$
\end{lemma}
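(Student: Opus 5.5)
Since the maximal elements of $\Delta$ are exactly its generators, $\Delta$ is the union of the simplicial complexes they generate:
\[
\Delta=\bigcup_{F\in\mathcal{F}}\Delta_{F},\qquad \Delta_F=\{\mathbf{v}\in\mathbb{Z}_2^m:\Supp(\mathbf{v})\subseteq \Supp(F)\}.
\]
The plan is to apply inclusion--exclusion to the function $\mathbf{v}\mapsto \prod_{j} y_j^{v_j}$, summed over this union, and then evaluate each intersection term explicitly.

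\textbf{Step 1 (the union).} Let $\mathbf{v}\in\Delta$. Since $\Delta$ is finite, we can move upward along $\preceq$ inside $\Delta$ until we reach an element that cannot be enlarged; that element is maximal and covers $\mathbf{v}$. Hence $\mathbf{v}\in\Delta_F$ for some $F\in\mathcal{F}$. Conversely, each $\Delta_F\subseteq\Delta$ by the downward-closure axiom. So $\Delta=\bigcup_{F\in\mathcal{F}}\Delta_F$.

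\textbf{Step 2 (intersections).} For a nonempty $S\subseteq\mathcal{F}$, the condition $\Supp(\mathbf{v})\subseteq\Supp(F)$ for every $F\in S$ is equivalent to $\Supp(\mathbf{v})\subseteq \bigcap S$. Therefore
\[
\bigcap_{F\in S}\Delta_F=\Delta_{\cap S},
\]
where $\Delta_{\cap S}$ denotes the set of vectors supported in $\bigcap S$.

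\textbf{Step 3 (inclusion--exclusion and evaluation).} For any weight function $w$ on a finite union, the indicator identity $1_{\bigcup X_i}=\sum_{\emptyset\ne S}(-1)^{|S|+1}1_{\bigcap_{i\in S} X_i}$ gives
\[
\sum_{\mathbf{v}\in\bigcup X_i}w(\mathbf{v})=\sum_{\emptyset\neq S}(-1)^{|S|+1}\sum_{\mathbf{v}\in\bigcap_{i\in S}X_i}w(\mathbf{v}).
\]
Apply this with $w(\mathbf{v})=\prod_j y_j^{v_j}$, so that $\mathcal{H}_\Delta=\sum_{\emptyset\ne S\subseteq\mathcal{F}}(-1)^{|S|+1}\mathcal{H}_{\Delta_{\cap S}}$.

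Now, for any $T\subseteq[m]$, the vectors supported in $T$ are precisely those whose coordinates $v_j$ are chosen freely from $\{0,1\}$ for $j\in T$ and are $0$ for $j\notin T$. Factoring the sum coordinatewise therefore gives
\[
\mathcal{H}_{\Delta_T}=\prod_{j\in T}(1+y_j).
\]
Taking $T=\bigcap S$ yields the stated formula.

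\textbf{Cardinality.} Setting all $y_j=1$ gives $|\Delta|=\mathcal{H}_\Delta(1,\dots,1)$ on the left, and each product $\prod_{j\in\cap S}(1+y_j)$ becomes $2^{|\cap S|}$. This is the second formula.

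The only step with real content is Step 1, the fact that every element of $\Delta$ lies below some maximal element, and it follows directly from finiteness. The rest is the weighted form of inclusion--exclusion combined with the product factorisation over a coordinate box.
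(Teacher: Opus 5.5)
Your proof is correct. It writes $\Delta=\bigcup_{F\in\mathcal{F}}\Delta_F$ (with the finiteness argument for Step 1), notes $\bigcap_{F\in S}\Delta_F=\Delta_{\cap S}$, applies weighted inclusion--exclusion, and evaluates $\mathcal{H}_{\Delta_T}=\prod_{j\in T}(1+y_j)$; the paper gives no proof of its own but recalls the lemma from Chang and Hyun, where it comes from this same inclusion--exclusion argument.
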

Let $\mathbf{v}\in \mathbb{Z}_{2}^{m}$ and $\emptyset\neq A\subseteq [m]$. Define a Boolean function $\psi(\cdot\mid A) : \mathbb{Z}_{2}^{m} \to \mathbb{Z}_{2}$ by
\[
\psi(\mathbf{v} \mid A) := 
\begin{cases}
1, & \text{if } \Supp(\mathbf{v}) \cap A = \emptyset, \\
0, & \text{if } \Supp(\mathbf{v}) \cap A \neq \emptyset.
\end{cases}
\]
Then, by Lemma \ref{lemma_cardinality}, we have
\begin{equation}\label{boolean_relation}
     \sum\limits_{\mathbf{t} \in \Delta_{A}} (-1)^{\mathbf{v} \mathbf{t}} = 2^{|A|}\psi(\mathbf{v} \mid A).
\end{equation}
The following lemma is an immediate consequence of the earlier definitions.
\begin{lemma}\label{dual_psi}
    Let $\Delta_{A}\subseteq \mathbb{Z}_{2}^{m}$ be a simplicial complex. Then $\mathbf{v} \in \Delta_{A}^{\perp}$ if and only if $\psi(\mathbf{v}\mid A) = 1.$
\end{lemma}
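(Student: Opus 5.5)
Lemma \ref{dual_psi} follows by unwinding the definitions together with the explicit description of $\Delta_{A}^{\perp}$ recalled above. The plan is to prove $\Delta_{A}^{\perp}=\Delta_{A^c}$ directly, and then note that membership in $\Delta_{A^c}$ is exactly the condition $\Supp(\mathbf{v})\cap A=\emptyset$, which is the defining condition for $\psi(\mathbf{v}\mid A)=1$.

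For the identification of the dual, first take $\mathbf{v}$ with $\Supp(\mathbf{v})\cap A=\emptyset$. For any $\mathbf{t}\in\Delta_{A}$ we have $\Supp(\mathbf{t})\subseteq A$, so $\Supp(\mathbf{v})\cap\Supp(\mathbf{t})=\emptyset$. Hence every term $v_jt_j$ vanishes and $\mathbf{v}\mathbf{t}=0$, giving $\mathbf{v}\in\Delta_{A}^{\perp}$.

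Conversely, suppose some $j\in\Supp(\mathbf{v})\cap A$ exists. The standard basis vector $\mathbf{e}_j$ has support $\{j\}\subseteq A$, so $\mathbf{e}_j\in\Delta_{A}$. Since $\mathbf{v}\mathbf{e}_j=v_j=1\neq 0$, it follows that $\mathbf{v}\notin\Delta_{A}^{\perp}$. Thus $\mathbf{v}\in\Delta_{A}^{\perp}$ if and only if $\Supp(\mathbf{v})\cap A=\emptyset$, and by definition this holds if and only if $\psi(\mathbf{v}\mid A)=1$.

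As a consistency check, the same conclusion can be read off from \eqref{boolean_relation}. The map $\mathbf{t}\mapsto(-1)^{\mathbf{v}\mathbf{t}}$ is a character of the group $\Delta_{A}$. Its sum over $\Delta_{A}$ equals $2^{|A|}$ when the character is trivial, that is, when $\mathbf{v}\in\Delta_{A}^{\perp}$, and equals $0$ otherwise. Comparing this with $2^{|A|}\psi(\mathbf{v}\mid A)$ yields the lemma again. There is no real obstacle here; the only point requiring care is the converse direction, where a single basis vector $\mathbf{e}_j$ with $j\in A$ serves as the witness.
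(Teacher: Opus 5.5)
Your proof is correct and follows the paper's route: the paper calls the lemma immediate from the identity $\Delta_{A}^{\perp}=\{\mathbf{v}:\Supp(\mathbf{v})\cap A=\emptyset\}$ (stated there as ``straightforward to verify'') together with the definition of $\psi$, and you supply that omitted verification, using $\mathbf{e}_j$ as the witness for the converse. Your character-sum cross-check via \eqref{boolean_relation} is also valid and not circular, because that relation comes from evaluating $\mathcal{H}_{\Delta_A}$ rather than from the description of the dual.
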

We now recall the definition of minimal codes.
\begin{definition}
    Let $\mathcal{C}$ be a binary linear code of length $m$ and $\mathbf{0}\neq \mathbf{v} \in \mathcal{C}$. Then $\mathbf{v}$ is said to be \textit{minimal} if $\mathbf{w}\preceq \mathbf{v}$ implies $ \mathbf{w}=\mathbf{0}$ or $\mathbf{w} = \mathbf{v}$. If every non-zero codeword of $\mathcal{C}$ is minimal, then $\mathcal{C}$ is called a \textit{minimal code}.
\end{definition}
The following lemma, due to Ashikhmin and Barg \cite{Ashikhmin1998}, provides a sufficient condition for a linear code to be minimal.
\begin{lemma}\label{minimal_lemma}
    A binary linear code $\mathcal{C}$ is minimal if 
    \[
    \frac{\wt_{\min}}{\wt_{\max}}>\frac{1}{2},
    \]
     where $\wt_{\min}$ and $\wt_{\max}$ represent the minimum and maximum non-zero Hamming weights of $\mathcal{C}$, respectively.
\end{lemma}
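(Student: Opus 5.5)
This is the Ashikhmin--Barg sufficient condition. I will prove it by contraposition: assume $\mathcal{C}$ is not minimal, and produce two nonzero codewords whose weight ratio is at most $\tfrac12$.

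\emph{Setting up the witnesses.} If $\mathcal{C}$ is not minimal, there are codewords $\mathbf{0}\neq\mathbf{w}\neq\mathbf{v}$ in $\mathcal{C}$ with $\mathbf{w}\preceq\mathbf{v}$, that is, $\Supp(\mathbf{w})\subsetneq\Supp(\mathbf{v})$. Since $\mathcal{C}$ is a binary linear code, $\mathbf{u}:=\mathbf{v}+\mathbf{w}$ also lies in $\mathcal{C}$. Over $\mathbb{Z}_2$ we have $\Supp(\mathbf{u})=\Supp(\mathbf{v})\setminus\Supp(\mathbf{w})$, because the supports are nested. This set is nonempty, since the inclusion is strict, so $\mathbf{u}\neq\mathbf{0}$.

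\emph{Comparing weights.} The supports of $\mathbf{u}$ and $\mathbf{w}$ partition $\Supp(\mathbf{v})$, so
\[
\wt_H(\mathbf{v})=\wt_H(\mathbf{u})+\wt_H(\mathbf{w}).
\]
Hence
\[
\min\{\wt_H(\mathbf{u}),\wt_H(\mathbf{w})\}\le \tfrac12\wt_H(\mathbf{v})\le\tfrac12\wt_{\max}.
\]
Because $\mathbf{u}$ and $\mathbf{w}$ are both nonzero codewords, $\wt_{\min}$ is at most the smaller of their weights. Therefore $\wt_{\min}/\wt_{\max}\le\tfrac12$, which contradicts the hypothesis. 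So every nonzero codeword of $\mathcal{C}$ is minimal.

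\emph{Main point to check.} Nothing here is a serious obstacle. The only step needing care is that over $\mathbb{Z}_2$ nested supports make $\mathbf{v}-\mathbf{w}=\mathbf{v}+\mathbf{w}$ supported exactly on the set difference of the supports. This relies on the binary setting; over larger fields one would instead choose a suitable scalar multiple.
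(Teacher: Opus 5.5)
Your proof is correct and complete: writing $\mathbf{v}=\mathbf{w}+(\mathbf{v}+\mathbf{w})$ with disjoint nonempty supports forces one of these two nonzero codewords to have weight at most $\tfrac12\wt_H(\mathbf{v})\le\tfrac12\wt_{\max}$, which contradicts the hypothesis. The paper gives no proof of its own and simply cites Ashikhmin and Barg; your argument is the standard proof of that result specialized to the binary case, where the scalar multiples needed over larger fields collapse to the single codeword $\mathbf{v}+\mathbf{w}$.
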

\subsection{\texorpdfstring{$\mathcal{C}_{D}$}{CD}-construction and Weight Formula}
Let $D = D_{1}+2D_{2}\subseteq \mathbb{Z}_{4}^{m}$, where $D_{1},D_{2} \subseteq \mathbb{Z}_{2}^{m}$. Then define
\begin{equation}\label{C_D code}
    \mathcal{C}_{D}:= \{c_{D}(\mathbf{v}) = (\mathbf{v}\cdot\mathbf{d})_{\mathbf{d}\in D}: \mathbf{v}\in \mathbb{Z}_{4}^{m}\},
\end{equation}
where $\cdot$ denotes the standard Euclidean inner product on $\mathbb{Z}_{4}^{m}.$ Then $\mathcal{C}_{D}$ is a linear code over $\mathbb{Z}_{4}$ and $D$ is called the defining set of $\mathcal{C}_{D}.$ Considering all elements of the defining set $D$ as columns, we get a generating matrix $\mathcal{G}_D$ for $\mathcal{C}_{D}$. We must note that $\mathcal{G}_D$ is not a generator matrix in general, as the rows form a spanning set for $\mathcal{C}_{D}$; however, we can always obtain a generator matrix of $\mathcal{C}_{D}$ from $\mathcal{G}_D$ by applying elementary row operations using the units of $\mathbb{Z}_4$.

 Let $\mathbf{v}=\mathbf{b}+2\mathbf{c}\in\mathbb Z_4^m$, where $\mathbf{b},\mathbf{c}\in \mathbb{Z}_{2}^{m}$. Since $D=D_1+2D_2$, every $\mathbf{d}\in D$ can be written as $\mathbf{d}=\mathbf{d}_1+2\mathbf{d}_2$, where $\mathbf{d}_i\in D_i$ for $i=1,2$. Then
 
\begin{align}\label{weight_expession}
     \wt_{L}(c_{D}(\mathbf{v})) &= \sum\limits_{\mathbf{d}_{1}\in D_{1}} \sum\limits_{\mathbf{d}_{2}\in D_{2}}\wt_{L}\left((\mathbf{b}+ 2 \mathbf{c})\cdot(\mathbf{d}_{1}+2\mathbf{d}_{2})\right) \nonumber \\ 
     &= \sum\limits_{\mathbf{d}_{1}\in D_{1}} \sum\limits_{\mathbf{d}_{2}\in D_{2}}\wt_{L}(\mathbf{b}\mathbf{d}_{1}+2(\mathbf{c}\mathbf{d}_{1}+\mathbf{b}\mathbf{d}_{2})) \nonumber \\
      &= \sum\limits_{\mathbf{d}_{1}\in D_{1}} \sum\limits_{\mathbf{d}_{2}\in D_{2}}\wt_{H}(\mathbf{c}\mathbf{d}_{1}+\mathbf{b}\mathbf{d}_{2}) + \sum\limits_{\mathbf{d}_{1}\in D_{1}} \sum\limits_{\mathbf{d}_{2}\in D_{2}}\wt_{H}((\mathbf{b}+\mathbf{c})\mathbf{d}_{1}+\mathbf{b}\mathbf{d}_{2}) \nonumber \\
      &= |D|-\frac{1}{2}\sum\limits_{\mathbf{d}_{1}\in D_{1}}\sum\limits_{\mathbf{d}_{2}\in D_{2}}\left(1+(-1)^{\mathbf{c}\mathbf{d}_{1}+\mathbf{b}\mathbf{d}_{2}}\right) \nonumber \\
      &\quad  + |D|-\frac{1}{2}\sum\limits_{\mathbf{d}_{1}\in D_{1}}\sum\limits_{\mathbf{d}_{2}\in D_{2}}\left(1+(-1)^{(\mathbf{b}+\mathbf{c})\mathbf{d}_{1}+\mathbf{b}\mathbf{d}_{2}}\right) \nonumber \\
      &= |D|-\frac{1}{2}\sum\limits_{\mathbf{d}_{2}\in D_{2}} (-1)^{\mathbf{b}\mathbf{d}_{2}}  \left( \sum\limits_{\mathbf{d}_{1}\in D_{1}} (-1)^{\mathbf{c}\mathbf{d}_{1}} + \sum\limits_{\mathbf{d}_{1}\in D_{1}} (-1)^{(\mathbf{b}+\mathbf{c})\mathbf{d}_{1}}  \right).
\end{align}

\section{Lee Weight Distributions} \label{Sec3}
In this section, we choose two defining sets consisting of punctured two-generator simplicial complexes (Theorem \ref{maintheorem1}) and the complement of two-generator simplicial complexes (Theorem \ref{maintheorem2}) to construct quaternary linear codes. Moreover, we modified the defining set in Theorem \ref{maintheorem1} to construct projective quaternary linear codes in Theorem \ref{projective}. Furthermore, we determine their Lee weight distributions.

\begin{theorem}\label{maintheorem1}
    Let $m\geq 2$ be a positive integer and $A,B,C \subseteq [m]$ with $B \neq C$. If $D = \Delta_{A}+ 2 (\Delta_{B,C}\setminus \Delta_{B \cap C}),$ then the code $\mathcal{C}_{D}$ defined in \eqref{C_D code} is an at most eight-weight quaternary linear code of length $2^{|A|} (2^{|B|}+2^{|C|}-2^{|B \cap C|+1})$ and type $4^{|A|}2^{|(B \cup C)\setminus A|}$. The corresponding Lee weight distribution is displayed in Table \ref{table1}.
\end{theorem}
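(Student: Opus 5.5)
Write $D_1=\Delta_A$ and $D_2=\Delta_{B,C}\setminus\Delta_{B\cap C}$, and take $\mathbf v=\mathbf b+2\mathbf c$ with $\mathbf b,\mathbf c\in\mathbb Z_2^m$. The length follows from Lemma \ref{lemma_cardinality}: $|\Delta_{B,C}|=2^{|B|}+2^{|C|}-2^{|B\cap C|}$, and removing $\Delta_{B\cap C}$ gives $|D_2|=2^{|B|}+2^{|C|}-2^{|B\cap C|+1}$. We also need the map $(\mathbf d_1,\mathbf d_2)\mapsto \mathbf d_1+2\mathbf d_2$ to be injective. This holds because the residue mod $2$ recovers $\mathbf d_1$, and then $\mathbf d_2$ is determined. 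Hence $|D|=|D_1||D_2|$, which is the stated length.

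For the weights I would plug into \eqref{weight_expession}. Since $D_2=\Delta_B\sqcup(\Delta_C\setminus\Delta_{B\cap C})$, relation \eqref{boolean_relation} gives
\[
\sum_{\mathbf d_2\in D_2}(-1)^{\mathbf b\mathbf d_2}=2^{|B|}\psi(\mathbf b\mid B)+2^{|C|}\psi(\mathbf b\mid C)-2^{|B\cap C|+1}\psi(\mathbf b\mid B\cap C).
\]
Likewise the inner sums over $D_1$ are $2^{|A|}\psi(\mathbf c\mid A)$ and $2^{|A|}\psi(\mathbf b+\mathbf c\mid A)$. This yields
\[
\wt_L(c_D(\mathbf v))=|D|-2^{|A|-1}\bigl(\psi(\mathbf c\mid A)+\psi(\mathbf b+\mathbf c\mid A)\bigr)\Bigl(2^{|B|}\psi(\mathbf b\mid B)+2^{|C|}\psi(\mathbf b\mid C)-2^{|B\cap C|+1}\psi(\mathbf b\mid B\cap C)\Bigr).
\]
The first factor lies in $\{0,1,2\}$. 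Because $\psi(\mathbf b\mid B)=1$ or $\psi(\mathbf b\mid C)=1$ forces $\psi(\mathbf b\mid B\cap C)=1$, the second factor takes only the values $0$, $2^{|B|}-2^{|B\cap C|+1}$, $2^{|C|}-2^{|B\cap C|+1}$, $2^{|B|}+2^{|C|}-2^{|B\cap C|+1}$ and $-2^{|B\cap C|+1}$. Listing the products and discarding the zero codeword gives at most eight distinct nonzero weights.

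To count frequencies, I would split $\mathbf b$ into the five patterns determined by whether $\Supp(\mathbf b)$ meets $B$, $C$ and $B\cap C$. The counts are:
\begin{itemize}
\item $\mathbf b\in\Delta_{(B\cup C)^c}$: $2^{m-|B\cup C|}$ choices;
\item $\mathbf b$ misses $B$ but meets $C$: $2^{m-|B|}-2^{m-|B\cup C|}$ choices;
\item $\mathbf b$ misses $C$ but meets $B$: $2^{m-|C|}-2^{m-|B\cup C|}$ choices;
\item $\mathbf b$ misses $B\cap C$ but meets both $B$ and $C$: $2^{m-|B\cap C|}-2^{m-|B|}-2^{m-|C|}+2^{m-|B\cup C|}$ choices;
\item $\mathbf b$ meets $B\cap C$: the remaining choices.
\end{itemize}
For each fixed $\mathbf b$ I would count the $\mathbf c$ with prescribed values of $(\psi(\mathbf c\mid A),\psi(\mathbf b+\mathbf c\mid A))$:
\begin{itemize}
\item if $\Supp(\mathbf b)\cap A=\emptyset$, both values coincide, so the sum is $2$ for $2^{m-|A|}$ vectors $\mathbf c$ and $0$ otherwise;
\item if $\Supp(\mathbf b)\cap A\neq\emptyset$, both cannot equal $1$, so the sum is $1$ for $2^{m-|A|+1}$ vectors $\mathbf c$ and $0$ otherwise.
\end{itemize}
Therefore each $\mathbf b$-class must be refined further by whether $\mathbf b$ meets $A$, which intersects the five patterns with $A^c$. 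This bookkeeping, with inclusion–exclusion over $A$, $B$, $C$ and $B\cap C$ and possible coincidences among weights, is the main obstacle. I would organise it by counting $\mathbf b\in\Delta_{X^c}$ for the unions $X$ built from $A$, $B$, $C$ and $B\cap C$, each contributing $2^{m-|X|}$, and then differencing. Every $\mathbf v$ counted this way yields a codeword, so all frequencies must be divided by $|\ker|$, the number of $\mathbf v$ mapping to $\mathbf 0$.

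For the type, I would determine that kernel. We have $c_D(\mathbf v)=\mathbf 0$ iff the weight is $0$. From the displayed formula this means $\mathbf c\in\Delta_{A^c}$, $\mathbf b+\mathbf c\in\Delta_{A^c}$, $\mathbf b\in\Delta_{(B\cup C)^c}$. Alternatively, compute the kernel directly: $\mathbf v\cdot\mathbf d=0$ for all $\mathbf d$ forces $\mathbf b\in\Delta_{A^c}\cap\Delta_{(B\cup C)^c}$ and $\mathbf c\in\Delta_{A^c}$. Hence $|\ker|=2^{m-|A|}\cdot2^{m-|A\cup B\cup C|}$ and $|\mathcal C_D|=2^{2|A|+|(B\cup C)\setminus A|}$. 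To split this as $4^{|A|}2^{|(B\cup C)\setminus A|}$, note that the coordinates in $A$ carry genuine order-$4$ generators, because $\mathbf e_i$ with $i\in A$ pairs to $1$ with $\mathbf d=\mathbf e_i$. Meanwhile $\mathbf e_j$ with $j\in(B\cup C)\setminus A$ only produces entries in $\{0,2\}$. This fixes $k_1=|A|$ and $k_2=|(B\cup C)\setminus A|$.
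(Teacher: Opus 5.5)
Your proposal is correct and is essentially the paper's proof. It uses the same reduction of \eqref{weight_expession} via \eqref{boolean_relation} and the same five-way split on $\mathbf b$; your $\mathbf c$-count, depending on whether $\mathbf b$ meets $A$, is just the paper's four subcases regrouped. It then divides by the kernel size $2^{2m-|A|-|A\cup B\cup C|}$ and reads the type from the rows, which is cleanest stated as $k_1=\dim(\mathcal{C}_D \bmod 2)=|A|$, with $k_2$ then forced by $|\mathcal{C}_D|$.

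There are two harmless slips. First, $D_2=(\Delta_B\setminus\Delta_{B\cap C})\sqcup(\Delta_C\setminus\Delta_{B\cap C})$ rather than $\Delta_B\sqcup(\Delta_C\setminus\Delta_{B\cap C})$; your displayed character sum is nevertheless the correct one. Second, $\mathbf e_i\notin D$ because $\mathbf 0\notin D_2$, but any $\mathbf e_i+2\mathbf d_2\in D$ still gives an entry in $\{1,3\}$.
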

\begin{proof}
    Here $D_{1}=\Delta_{A}$ and $D_{2} = \Delta_{B,C}\setminus \Delta_{B \cap C}.$ From Eq. \eqref{boolean_relation}, we have
    \[
    \sum\limits_{\mathbf{d}_{1}\in D_{1}} (-1)^{\mathbf{c}\mathbf{d}_{1}} = 2^{|A|}\psi(\mathbf{c}\mid A),
    \]
    and 
    \[
    \sum\limits_{\mathbf{d}_{1}\in D_{1}} (-1)^{(\mathbf{b}+\mathbf{c})\mathbf{d}_{1}} = 2^{|A|}\psi(\mathbf{b}+\mathbf{c}\mid A).
    \]
Also, 
\[
    \sum\limits_{\mathbf{d}_{2}\in D_{2}} (-1)^{\mathbf{b}\mathbf{d}_{2}} = \sum\limits_{\mathbf{d}_{2}\in \Delta_{B}} (-1)^{\mathbf{b}\mathbf{d}_{2}}+\sum\limits_{\mathbf{d}_{2}\in \Delta_{C}} (-1)^{\mathbf{b}\mathbf{d}_{2}}-2\sum\limits_{\mathbf{d}_{2}\in \Delta_{B\cap C}} (-1)^{\mathbf{b}\mathbf{d}_{2}}.
\]
Then by Eq. \eqref{boolean_relation},
\[
\sum\limits_{\mathbf{d}_{2}\in D_{2}} (-1)^{\mathbf{b}\mathbf{d}_{2}} = 2^{|B|}\psi(\mathbf{b}\mid B)+2^{|C|}\psi(\mathbf{b}\mid C)-2^{|B \cap C|+1}\psi(\mathbf{b}\mid B\cap C).
\]
Now, from Eq. \eqref{weight_expession}, we have
\[
\begin{aligned}
    \wt_{L}(c_{D}(\mathbf{v})) &= |D|-2^{|A|-1}\left( 2^{|B|}\psi(\mathbf{b}\mid B)+2^{|C|}\psi(\mathbf{b}\mid C)-2^{|B \cap C|+1}\psi(\mathbf{b}\mid B\cap C) \right) \\
    &\quad \times \Big( \psi(\mathbf{c}\mid A)+ \psi(\mathbf{b}+\mathbf{c}\mid A) \Big).
\end{aligned}
\]
There are $5$ possible cases which are as follows:
\begin{enumerate}
    \item[(1)] $\mathbf{b}\in \Delta_{B}^{\perp}$ and $\mathbf{b}\in \Delta_{C}^{\perp}$. Then by Lemma \ref{dual_psi}, we get
    \[
\wt_{L}(c_{D}(\mathbf{v})) = |D|-2^{|A|-1}\left( 2^{|B|}+2^{|C|}-2^{|B \cap C|+1} \right) \left( \psi(\mathbf{c}\mid A)+ \psi(\mathbf{b}+\mathbf{c}\mid A) \right).
\]
\begin{itemize}
    \item If $\mathbf{c} \in \Delta_{A}^{\perp}$ and $\mathbf{b}+\mathbf{c} \in \Delta_{A}^{\perp}$, then by Lemma \ref{dual_psi}, we have 
    \[
    \wt_{L}(c_{D}(\mathbf{v})) = 0.
    \]
    In this case, 
    \begin{eqnarray*}
        \# \mathbf{b} &=& |\Delta_{A}^{\perp}\cap \Delta_{B}^{\perp}\cap \Delta_{C}^{\perp}|
        \\ 
        &=& 2^{m-|A\cup B \cup C|} \\
    \# \mathbf{c} &=& |\Delta_{A}^{\perp}| \\
    &=& 2^{m-|A|}.
    \end{eqnarray*}
    
    \item If $\mathbf{c} \in \Delta_{A}^{\perp}$ and $\mathbf{b}+\mathbf{c} \notin \Delta_{A}^{\perp}$, then by Lemma \ref{dual_psi}, we have 
    \[
    \wt_{L}(c_{D}(\mathbf{v})) = 2^{|A|-1}(2^{|B|}+2^{|C|}-2^{|B \cap C|+1}).
    \]
    In this case, 
    \begin{eqnarray*}
        \# \mathbf{b} &=& |\Delta_{B}^{\perp} \cap \Delta_{C}^{\perp}| -|\Delta_{A}^{\perp}\cap \Delta_{B}^{\perp}\cap \Delta_{C}^{\perp}|
        \\ 
        &=& 2^{m-|B \cup C|}-2^{m-|A\cup B \cup C|} \\
    \# \mathbf{c} &=& |\Delta_{A}^{\perp}| \\
    &=& 2^{m-|A|}.
    \end{eqnarray*}

    \item If $\mathbf{c} \notin \Delta_{A}^{\perp}$ and $\mathbf{b}+\mathbf{c} \in \Delta_{A}^{\perp}$, then by Lemma \ref{dual_psi}, we have 
    \[
    \wt_{L}(c_{D}(\mathbf{v})) = 2^{|A|-1}(2^{|B|}+2^{|C|}-2^{|B \cap C|+1}).
    \]
    In this case, 
    \begin{eqnarray*}
        \# \mathbf{b} 
        &=& 2^{m-|B \cup C|}-2^{m-|A\cup B \cup C|} \\
    \# \mathbf{c} &=&  2^{m-|A|}.
    \end{eqnarray*}

    \item If $\mathbf{c} \notin \Delta_{A}^{\perp}$ and $\mathbf{b}+\mathbf{c} \notin \Delta_{A}^{\perp}$, then by Lemma \ref{dual_psi}, we have 
    \[
    \wt_{L}(c_{D}(\mathbf{v})) = |D| = 2^{|A|}(2^{|B|}+2^{|C|}-2^{|B \cap C|+1}).
    \]
    \end{itemize}

\item[(2)] $\mathbf{b}\in \Delta_{B}^{\perp}$ and $\mathbf{b}\notin \Delta_{C}^{\perp}$. Then by Lemma \ref{dual_psi}, we get
    \[
\wt_{L}(c_{D}(\mathbf{v})) = |D|-2^{|A|-1}\left( 2^{|B|}-2^{|B \cap C|+1} \right) \left( \psi(\mathbf{c}\mid A)+ \psi(\mathbf{b}+\mathbf{c}\mid A) \right).
\]
\begin{itemize}
    \item If $\mathbf{c} \in \Delta_{A}^{\perp}$ and $\mathbf{b}+\mathbf{c} \in \Delta_{A}^{\perp}$, then by Lemma \ref{dual_psi}, we have 
    \[
    \wt_{L}(c_{D}(\mathbf{v})) = 2^{|A|+|C|}.
    \]
    In this case, 
    \begin{eqnarray*}
        \# \mathbf{b} &=& |\Delta_{A}^{\perp}\cap \Delta_{B}^{\perp}|-|\Delta_{A}^{\perp}\cap \Delta_{B}^{\perp}\cap \Delta_{C}^{\perp}|
        \\ 
        &=& 2^{m-|A\cup B|}-2^{m-|A\cup B \cup C|} \\
    \# \mathbf{c} &=& |\Delta_{A}^{\perp}| \\
    &=& 2^{m-|A|}.
    \end{eqnarray*}
    
    \item If $\mathbf{c} \in \Delta_{A}^{\perp}$ and $\mathbf{b}+\mathbf{c} \notin \Delta_{A}^{\perp}$, then by Lemma \ref{dual_psi}, we have 
    \[
    \wt_{L}(c_{D}(\mathbf{v})) = 2^{|A|-1}(2^{|B|}+2^{|C|+1}-2^{|B \cap C|+1}).
    \]
    In this case, 
    \begin{eqnarray*}
        \# \mathbf{b} &=& |\Delta_{B}^{\perp}|-|\Delta_{B}^{\perp} \cap \Delta_{A}^{\perp}|-|\Delta_{B}^{\perp} \cap \Delta_{C}^{\perp}| +|\Delta_{A}^{\perp}\cap \Delta_{B}^{\perp}\cap \Delta_{C}^{\perp}|
        \\ 
        &=& 2^{m-|B|}-2^{m-|A\cup B|}-2^{m-|B \cup C|}+2^{m-|A\cup B \cup C|} \\
    \# \mathbf{c} &=& |\Delta_{A}^{\perp}| \\
    &=& 2^{m-|A|}.
    \end{eqnarray*}

    \item If $\mathbf{c} \notin \Delta_{A}^{\perp}$ and $\mathbf{b}+\mathbf{c} \in \Delta_{A}^{\perp}$, then by Lemma \ref{dual_psi}, we have 
    \[
    \wt_{L}(c_{D}(\mathbf{v})) = 2^{|A|-1}(2^{|B|}+2^{|C|+1}-2^{|B \cap C|+1}).
    \]
    In this case, 
    \begin{eqnarray*}
        \# \mathbf{b}  
        &=& 2^{m-|B|}-2^{m-|A\cup B|}-2^{m-|B \cup C|}+2^{m-|A\cup B \cup C|} \\
    \# \mathbf{c}
    &=& 2^{m-|A|}.
    \end{eqnarray*}

    \item If $\mathbf{c} \notin \Delta_{A}^{\perp}$ and $\mathbf{b}+\mathbf{c} \notin \Delta_{A}^{\perp}$, then by Lemma \ref{dual_psi}, we have 
    \[
    \wt_{L}(c_{D}(\mathbf{v})) = |D| = 2^{|A|}(2^{|B|}+2^{|C|}-2^{|B \cap C|+1}).
    \]    
\end{itemize}

\item[(3)] $\mathbf{b}\notin \Delta_{B}^{\perp}$ and $\mathbf{b}\in \Delta_{C}^{\perp}$. Then by Lemma \ref{dual_psi}, we get
    \[
\wt_{L}(c_{D}(\mathbf{v})) = |D|-2^{|A|-1}\left( 2^{|C|}-2^{|B \cap C|+1} \right) \left( \psi(\mathbf{c}\mid A)+ \psi(\mathbf{b}+\mathbf{c}\mid A) \right).
\]
\begin{itemize}
    \item If $\mathbf{c} \in \Delta_{A}^{\perp}$ and $\mathbf{b}+\mathbf{c} \in \Delta_{A}^{\perp}$, then by Lemma \ref{dual_psi}, we have 
    \[
    \wt_{L}(c_{D}(\mathbf{v})) = 2^{|A|+|B|}.
    \]
    In this case, 
    \begin{eqnarray*}
        \# \mathbf{b} &=& |\Delta_{A}^{\perp}\cap \Delta_{C}^{\perp}|-|\Delta_{A}^{\perp}\cap \Delta_{B}^{\perp}\cap \Delta_{C}^{\perp}|
        \\ 
        &=& 2^{m-|A\cup C|}-2^{m-|A\cup B \cup C|} \\
    \# \mathbf{c} &=& |\Delta_{A}^{\perp}| \\
    &=& 2^{m-|A|}.
    \end{eqnarray*}
    
    \item If $\mathbf{c} \in \Delta_{A}^{\perp}$ and $\mathbf{b}+\mathbf{c} \notin \Delta_{A}^{\perp}$, then by Lemma \ref{dual_psi}, we have 
    \[
    \wt_{L}(c_{D}(\mathbf{v})) = 2^{|A|-1}(2^{|B|+1}+2^{|C|}-2^{|B \cap C|+1}).
    \]
    In this case, 
    \begin{eqnarray*}
        \# \mathbf{b} &=& |\Delta_{C}^{\perp}|-|\Delta_{C}^{\perp} \cap \Delta_{A}^{\perp}|-|\Delta_{C}^{\perp} \cap \Delta_{B}^{\perp}| +|\Delta_{A}^{\perp}\cap \Delta_{B}^{\perp}\cap \Delta_{C}^{\perp}|
        \\ 
        &=& 2^{m-|C|}-2^{m-|A\cup C|}-2^{m-|B \cup C|}+2^{m-|A\cup B \cup C|} \\
    \# \mathbf{c} &=& |\Delta_{A}^{\perp}| \\
    &=& 2^{m-|A|}.
    \end{eqnarray*}

    \item If $\mathbf{c} \notin \Delta_{A}^{\perp}$ and $\mathbf{b}+\mathbf{c} \in \Delta_{A}^{\perp}$, then by Lemma \ref{dual_psi}, we have 
    \[
    \wt_{L}(c_{D}(\mathbf{v})) = 2^{|A|-1}(2^{|B|+1}+2^{|C|}-2^{|B \cap C|+1}).
    \]
    In this case, 
    \begin{eqnarray*}
        \# \mathbf{b}  
        &=& 2^{m-|C|}-2^{m-|A\cup C|}-2^{m-|B \cup C|}+2^{m-|A\cup B \cup C|} \\
    \# \mathbf{c}
    &=& 2^{m-|A|}.
    \end{eqnarray*}

    \item If $\mathbf{c} \notin \Delta_{A}^{\perp}$ and $\mathbf{b}+\mathbf{c} \notin \Delta_{A}^{\perp}$, then by Lemma \ref{dual_psi}, we have 
    \[
    \wt_{L}(c_{D}(\mathbf{v})) = |D| = 2^{|A|}(2^{|B|}+2^{|C|}-2^{|B \cap C|+1}).
    \]    
\end{itemize}

\item[(4)] $\mathbf{b}\notin \Delta_{B}^{\perp},\mathbf{b}\notin \Delta_{C}^{\perp}$ and $\mathbf{b}\in\Delta_{B\cap C}^{\perp}$. Then by Lemma \ref{dual_psi}, we get
    \[
\wt_{L}(c_{D}(\mathbf{v})) = |D|-2^{|A|-1}\left( -2^{|B \cap C|+1} \right) \left( \psi(\mathbf{c}\mid A)+ \psi(\mathbf{b}+\mathbf{c}\mid A) \right).
\]
\begin{itemize}
    \item If $\mathbf{c} \in \Delta_{A}^{\perp}$ and $\mathbf{b}+\mathbf{c} \in \Delta_{A}^{\perp}$, then by Lemma \ref{dual_psi}, we have 
    \[
    \wt_{L}(c_{D}(\mathbf{v})) = 2^{|A|}(2^{|B|}+2^{|C|}).
    \]
    In this case, 
    \begin{eqnarray*}
        \# \mathbf{b} &=& |\Delta_{A}^{\perp}\cap \Delta_{B\cap C}^{\perp}|-|\Delta_{A}^{\perp}\cap \Delta_{B}^{\perp}|-|\Delta_{A}^{\perp}\cap \Delta_{C}^{\perp}|+|\Delta_{A}^{\perp}\cap \Delta_{B}^{\perp}\cap\Delta_{C}^{\perp}|
        \\ 
        &=& 2^{m-|A\cup (B\cap C)|}-2^{m-|A\cup B|}-2^{m-|A \cup C|}+2^{m-|A\cup B \cup C|} \\
    \# \mathbf{c} &=& |\Delta_{A}^{\perp}| \\
    &=& 2^{m-|A|}.
    \end{eqnarray*}
    
    \item If $\mathbf{c} \in \Delta_{A}^{\perp}$ and $\mathbf{b}+\mathbf{c} \notin \Delta_{A}^{\perp}$, then by Lemma \ref{dual_psi}, we have 
    \[
    \wt_{L}(c_{D}(\mathbf{v})) = 2^{|A|}(2^{|B|}+2^{|C|}-2^{|B \cap C|}).
    \]
    In this case, 
    \[
    \begin{aligned}
        \# \mathbf{b} 
        &= |\Delta_{B\cap C}^{\perp}|-|\Delta_{B\cap C}^{\perp} \cap \Delta_{A}^{\perp}|-| \Delta_{B}^{\perp}|-|\Delta_{C}^{\perp}|+|\Delta_{A}^{\perp}\cap \Delta_{B}^{\perp}| \\
        &\quad +|\Delta_{A}^{\perp}\cap \Delta_{C}^{\perp}|+|\Delta_{B}^{\perp}\cap \Delta_{C}^{\perp}| -|\Delta_{A}^{\perp}\cap \Delta_{B}^{\perp}\cap \Delta_{C}^{\perp}| \\
        &= 2^{m-|B\cap C|}-2^{m-|A\cup (B\cap C)|}-2^{m-|B|}-2^{m-|C|}+2^{m-|A\cup B|} \\
        &\quad +2^{m-|A \cup C|}+2^{m-|B\cup C|}-2^{m-|A\cup B \cup C|} \\
        \# \mathbf{c} &= |\Delta_{A}^{\perp}| \\
        &= 2^{m-|A|}. 
    \end{aligned}
    \]

    \item If $\mathbf{c} \notin \Delta_{A}^{\perp}$ and $\mathbf{b}+\mathbf{c} \in \Delta_{A}^{\perp}$, then by Lemma \ref{dual_psi}, we have 
    \[
    \wt_{L}(c_{D}(\mathbf{v})) = 2^{|A|}(2^{|B|}+2^{|C|}-2^{|B \cap C|}).
    \]
    In this case,
    \[
    \begin{aligned}
        \#\mathbf{b}
   &= 2^{m-|B\cap C|}-2^{m-|A\cup (B\cap C)|}-2^{m-|B|}-2^{m-|C|}+2^{m-|A\cup B|} \\
   &\quad +2^{m-|A \cup C|}+2^{m-|B\cup C|}-2^{m-|A\cup B \cup C|}\\
   \#\mathbf{c}
&= 2^{m-|A|}.
    \end{aligned}
    \]
    \item If $\mathbf{c} \notin \Delta_{A}^{\perp}$ and $\mathbf{b}+\mathbf{c} \notin \Delta_{A}^{\perp}$, then by Lemma \ref{dual_psi}, we have 
    \[
    \wt_{L}(c_{D}(\mathbf{v})) = |D| = 2^{|A|}(2^{|B|}+2^{|C|}-2^{|B \cap C|+1}).
    \]    
\end{itemize}

\item[(5)] $\mathbf{b}\notin \Delta_{B\cap C}^{\perp}$. Then by Lemma \ref{dual_psi}, we get 
\[
\wt_{L}(c_{D}(\mathbf{v})) = |D| = 2^{|A|}(2^{|B|}+2^{|C|}-2^{|B\cap C|+1}).
\]
\end{enumerate}
In order to find the frequency of the weight 
\[
|D| = 2^{|A|}(2^{|B|}+2^{|C|}-2^{|B\cap C|+1}),
\]
we sum the frequencies of all remaining weights and subtract this sum from $2^{2m}$. Hence, the resulting frequency is 
\[
2^{2m}+2^{2m-|A|-|A\cup(B\cap C)|}-2^{2m-|A|-|B\cap C|+1}.
\]
Since the frequency corresponding to Lee weight $0$ is $2^{2m-|A|-|A\cup B \cup C|}$, all the above frequencies must be divided by $2^{2m-|A|-|A\cup B \cup C|}$ to obtain the Lee weight distribution.  The total number of codewords is $2^{|A|+|A\cup B \cup C|}$. 

To determine the type of $\mathcal{C}_D$, let us consider the generating matrix $\mathcal{G}_D$ of $\mathcal{C}_D$. It is easy to observe that among the rows of $\mathcal{G}_D$, exactly $|A|$ rows contain an entry equal to $1$ or $3$, and exactly $|(B\cup C)\setminus A|$ nonzero rows have all entries in $\{0,2\}$. In order to obtain a generator matrix of $\mathcal{C}_D$ from $\mathcal{G}_D$, we can apply elementary row operations on $\mathcal{G}_D$; this yields 3 possibilities: (i) $k_{1}=|A|+s$ and $k_{2}=|(B \cup C)\setminus A|-s$, for some integer $s,$
(ii) $k_{1}=|A|$,
(iii) $k_{1}=|A|-s_1$, $k_{2}=|(B \cup C)\setminus A|-s_2$, for some integers $s_1,s_2\ge 0$.
Now, comparing this with the size of $\mathcal{C}_D$, we have $k_1 = |A|$ and $k_2 = |(B\cup C)\setminus A|$; hence we conclude that $\mathcal{C}_D$ is of type $4^{|A|}2^{|(B \cup C)\setminus A|}$. \qed
\end{proof}

\begin{table}[H]
\centering
\resizebox{\textwidth}{!}{
\begin{tabular}{l|l}
\hline
Weight & Frequency \\
\hline
$0$ & $1$ \\
\hline
$ w_{1} :=  2^{|A|-1}(2^{|B|}+2^{|C|}-2^{|B\cap C|+1}) $ & $2(2^{|A\cup B \cup C|-|B\cup C|}-1)$ \\
\hline
$w_{2} := 2^{|A|}(2^{|B|}+2^{|C|}-2^{|B\cap C|+1})$ & $2^{|A|+|A\cup B \cup C|}+2^{|A\cup B \cup C|-|A\cup (B \cap C)|}-2^{|A\cup B \cup C|-|B \cap C|+1}$\\
\hline
$w_{3} := 2^{|A|+|C|}$ & $2^{|A\cup B \cup C|-|A \cup B|}-1$\\
\hline
$w_{4} := 2^{|A|-1}(2^{|B|}+2^{|C|+1}-2^{|B\cap C|+1})$ & $2(2^{|A\cup B \cup C|-|B|}-2^{|A\cup B \cup C|-|A \cup B|}-2^{|A\cup B \cup C|-|B\cup C|}+1)$ \\
\hline
$w_{5} := 2^{|A|+|B|}$ & $2^{|A\cup B \cup C|-|A\cup C|}-1$ \\
\hline
$w_{6} := 2^{|A|-1}(2^{|B|+1}+2^{|C|}-2^{|B\cap C|+1})$ & $2(2^{|A\cup B \cup C|-|C|}-2^{|A\cup B \cup C|-|A \cup C|}-2^{|A\cup B \cup C|-|B\cup C|}+1)$ \\ 
\hline
$w_{7} := 2^{|A|}(2^{|B|}+2^{|C|})$ & $2^{|A\cup B\cup C|-|A \cup (B \cap C)|}-2^{|A\cup B \cup C|-|A\cup B|}-2^{|A\cup B \cup C|-|A\cup C|}+1$ \\
\hline
$w_{8} := 2^{|A|}(2^{|B|}+2^{|C|}-2^{|B\cap C|})$ &
$\begin{aligned}
    2(& 2^{|A\cup B \cup C|-|B \cap C|}-2^{|A\cup B \cup C|-|A\cup (B\cap C)|}-2^{|A\cup B\cup C|-|B|} \\[-1ex]
    & -2^{|A\cup B \cup C|-|C|}+2^{|A\cup B\cup C|-|A\cup B|}+2^{|A\cup B \cup C|-|A\cup C|} \\[-1ex]
    &+ 2^{|A\cup B\cup C|-|B\cup C|}-1 )
\end{aligned}$\\
\hline
\end{tabular}
}
\caption{Lee weight distribution of Theorem \ref{maintheorem1}}
\label{table1}
\end{table}

\begin{corollary}\label{corollary1_maintheorem1}
    If $C=A$ in Theorem \ref{maintheorem1}, then the code $\mathcal{C}_{D}$ defined in \eqref{C_D code} is of length $2^{|A|}(2^{|A|}+2^{|B|}-2^{|A\cap B|+1})$, type $4^{|A|}2^{|(A \cup B) \setminus A|}$ and minimum Lee distance 
    \[
d_{L}(\mathcal{C}_{D}) = 
\begin{cases}
2^{|A|-1}(2^{|A|+1}+2^{|B|}-2^{|A\cap B|+1}), & \text{if } 2^{|A|+1}-2^{|A\cap B|+1}\leq 2^{|B|}, \\
2^{|A|+|B|}, & \text{otherwise}.
\end{cases}
\] Its Lee weight distribution is displayed in Table \ref{table1.1}. 
\end{corollary}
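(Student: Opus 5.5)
The plan is to specialize Theorem~\ref{maintheorem1} and its Table~\ref{table1} to $C=A$; nothing new needs to be computed. With $C=A$ (and $B\neq A$, as Theorem~\ref{maintheorem1} requires) the set-theoretic quantities collapse as follows:
\[
A\cup B\cup C=A\cup B,\quad B\cap C=A\cap B,\quad A\cup(B\cap C)=A,\quad A\cup C=A,\quad B\cup C=A\cup B .
\]
Substituting these gives the length $2^{|A|}(2^{|A|}+2^{|B|}-2^{|A\cap B|+1})$ and the type $4^{|A|}2^{|(A\cup B)\setminus A|}$ directly. The Lee weight distribution for Table~\ref{table1.1} then comes from rewriting the eight weights $w_1,\dots,w_8$ and their frequencies under these identifications.

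Next I will simplify the frequencies to see which weights survive. The frequencies of $w_1$ and $w_3$ are $2(2^0-1)=0$ and $2^0-1=0$. Those of $w_6$ and $w_7$ cancel to $0$ after the substitution $|A\cup C|=|A|$. The remaining frequencies simplify as follows:
\begin{itemize}
\item $w_4=2^{|A|-1}(2^{|B|}+2^{|A|+1}-2^{|A\cap B|+1})$ has frequency $2(2^{|A\cup B|-|B|}-1)$, which is nonzero iff $A\not\subseteq B$;
\item $w_5=2^{|A|+|B|}$ has frequency $2^{|A\cup B|-|A|}-1$, which is nonzero iff $B\not\subseteq A$;
\item $w_8$ has frequency $2(2^{|A\cup B|-|A|}-1)(2^{|A\cup B|-|B|}-1)$, using $|A\cup B|-|A\cap B|=(|A\cup B|-|A|)+(|A\cup B|-|B|)$;
\item $w_2=|D|$ takes the rest.
\end{itemize}
So the code has at most four nonzero weights, and these frequency expressions fill Table~\ref{table1.1}.

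For the minimum distance I will compare the surviving weights pairwise:
\begin{itemize}
\item $w_8-w_4=2^{|A|-1}2^{|B|}>0$, so $w_8$ is never the minimum when $w_4$ occurs;
\item $w_2-w_4=2^{|A|-1}(2^{|B|}-2^{|A\cap B|+1})$;
\item $w_4\le w_5$ iff $2^{|A|+1}-2^{|A\cap B|+1}\le 2^{|B|}$, which is exactly the dichotomy in the statement;
\item $w_2-w_5=2^{|A|}(2^{|A|}-2^{|A\cap B|+1})$.
\end{itemize}

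The main obstacle is the degenerate containments, where one of $w_4,w_5$ has zero frequency or $w_2$ might undercut it. If $B\subseteq A$ (so $B\subsetneq A$), then $w_5$ is absent, $w_8$ is absent, and $w_2<w_4$. I must check that the stated formula still gives the true minimum $w_2$. I expect the inequality $2^{|A|+1}-2^{|B|+1}\le 2^{|B|}$ to fail unless $|A|=|B|+1$. The boundary case $|A|=|B|+1$ needs checking against the formula, possibly by noting $w_4=w_2$ there, or else it requires an implicit hypothesis. Symmetrically, if $A\subsetneq B$, then $w_4$ is absent and $w_5<w_2$; here the condition $2^{|A|+1}-2^{|A|+1}=0\le 2^{|B|}$ selects the first branch, and I need to reconcile this with the minimum $w_5$ or $w_2$. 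In the generic case where neither set contains the other, $w_2>w_4$ because $|B|\ge |A\cap B|+1$, and $w_2>w_5$ because $|A|\ge|A\cap B|+1$. The minimum is then $\min(w_4,w_5)$, which gives the stated case split. I would present the generic case as the main argument and handle the containment cases separately, verifying or qualifying the formula there.
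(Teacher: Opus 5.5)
Your route is to set $C=A$ in Table~\ref{table1}, observe that $w_1,w_3,w_6,w_7$ get frequency $0$, and compare the surviving weights $w_2,w_4,w_5,w_8$. This is exactly how the paper obtains the corollary; it gives no argument beyond this specialization. Your frequency simplifications are correct, including the factorization $2(2^{|A\cup B|-|A|}-1)(2^{|A\cup B|-|B|}-1)$ for $w_8$. Your generic case is a complete proof. If $A\nsubseteq B$ and $B\nsubseteq A$, then both $w_4$ and $w_5$ occur, $w_8-w_4=2^{|A|+|B|-1}>0$, and $w_4\le w_5$ is precisely the stated inequality. You also need $w_2\ge w_4$ and $w_2\ge w_5$; write $\ge$ rather than $>$, since equality holds when $|B|=|A\cap B|+1$, resp.\ $|A|=|A\cap B|+1$.

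The gap is the part you leave as ``verifying or qualifying'': the containment cases cannot be settled in favour of the statement, because the formula is false there.

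\emph{Case $B\subsetneq A$.} The only nonzero weights are $w_2=|D|=2^{|A|}(2^{|A|}-2^{|B|})$ and $w_4=w_2+2^{|A|+|B|-1}$, so $d_L=|D|$. The inequality $2^{|A|+1}-2^{|B|+1}\le 2^{|B|}$ fails for every $|A|\ge |B|+1$, including $|A|=|B|+1$ (contrary to your expectation). So the statement returns $2^{|A|+|B|}=w_5$, a weight of frequency $0$. It agrees with $d_L$ only when $|A|=|B|+1$, and then because $w_5=w_2$, not because $w_4=w_2$ as you suggested; indeed $w_4>w_2$ throughout this case.

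\emph{Case $A\subsetneq B$.} The first branch is always selected and returns $w_4=2^{|A|+|B|-1}$, again of frequency $0$. The true value is $d_L=w_2=2^{|A|}(2^{|B|}-2^{|A|})$ for $A\neq\emptyset$. For $A=\emptyset$ it is $d_L=w_5=2^{|B|}$, since then $w_2$ has frequency $2^{|B|}+2^{|B|}-2^{|B|+1}=0$. Agreement with the stated formula occurs only for $A\neq\emptyset$ and $|B|=|A|+1$.

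Two concrete counterexamples:
\begin{itemize}
\item $A=C=\{1,2\}$, $B=\emptyset$ gives a length-$12$ code with Lee weight enumerator $y^{24}+9x^{12}y^{12}+6x^{14}y^{10}$. So $d_L=12$, while the corollary claims $4$.
\item $A=C=\{1\}$, $B=\{1,2,3\}$ gives weights $12$ (twelve codewords) and $16$ (three codewords). So $d_L=12$, not the claimed $8$.
\end{itemize}
Both are instances of the Plotkin-optimal family of Corollary~\ref{corollary3_maintheorem1} (with $B$ and $C$ swapped in the second), which directly contradicts the displayed formula.

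Your argument should therefore end with a correction, not an attempted verification. The displayed formula for $d_L$ holds under the extra hypothesis $A\nsubseteq B$ and $B\nsubseteq A$, which is your generic case. In the containment cases, $d_L=|D|$ if $A\neq\emptyset$ and $d_L=2^{|B|}$ if $A=\emptyset$. The length, the type and Table~\ref{table1.1} (read with the zero frequencies) remain valid in all cases.
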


\begin{table}[H]
\centering
\begin{tabular}{l|l}
\hline
Weight & Frequency \\
\hline
$0$ & $1$ \\
\hline
$2^{|A|}(2^{|A|}+2^{|B|}-2^{|A\cap B|+1})$ & $2^{|A\cup B|+|A|}+2^{|A\cup B|-|A|}-2^{|A\cup B|-|A\cap B|+1}$ \\
\hline
$2^{|A|-1}(2^{|A|+1}+2^{|B|}-2^{|A\cap B|+1})$ & $2(2^{|A\cup B|-|B|}-1)$ \\
\hline
$2^{|A|+|B|}$ & $2^{|A\cup B|-|A|}-1$ \\
\hline
$2^{|A|}(2^{|A|}+2^{|B|}-2^{|A\cap B|})$ & $2(2^{|A\cup B|-|A\cap B|}-2^{|A\cup B|-|A|}-2^{|A\cup B|-|B|}+1)$\\
\hline
\end{tabular}
\caption{Lee weight distribution of Corollary \ref{corollary1_maintheorem1}}
\label{table1.1}
\end{table}

\begin{example}
     This example illustrates Corollary \ref{corollary1_maintheorem1}. Let $m=4, A=\{2,3\}, B=\{3,4\}$. Then the two-weight code $\mathcal{C}_{D}$ has length $16$, type $4^2 2^1$ and minimum Lee distance $16$ with Lee weight enumerator $y^{32}+29 x^{16} y^{16} + 2 x^{24}y^{8}$, as verified using Magma \cite{bosma1997magma}. Moreover, it is Plotkin-optimal. 
\end{example}


\begin{corollary} \label{corollary3_maintheorem1}
    If $B\subsetneq C$ and $A\subseteq C$ and $A$ and $B$ are not simultaneously empty in Theorem \ref{maintheorem1}, then the code $\mathcal{C}_{D}$ in \eqref{C_D code} is of length $2^{|A|}(2^{|C|}-2^{|B|})$, type $4^{|A|} 2^{|C\setminus A|}$, minimum Lee distance $2^{|A|}(2^{|C|}-2^{|B|})$. Its Lee weight distribution is displayed in Table \ref{table1.3}. Moreover, $\mathcal{C}_{D}$ is Plotkin-optimal.
\end{corollary}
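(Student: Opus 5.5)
Under the hypotheses $B\subsetneq C$ and $A\subseteq C$ we have $B\cap C=B$, $B\cup C=C$, $A\cup B\cup C=C$, $A\cup C=C$ and $A\cup(B\cap C)=A\cup B$. The plan is to substitute these simplifications into Theorem \ref{maintheorem1} and Table \ref{table1}, discard the weights whose frequencies vanish, and then compare the remaining minimum weight with the Plotkin bound of Lemma \ref{lem 2.1}.

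\textbf{Length and type.} The length $2^{|A|}(2^{|B|}+2^{|C|}-2^{|B\cap C|+1})$ becomes $2^{|A|}(2^{|C|}-2^{|B|})$. The type $4^{|A|}2^{|(B\cup C)\setminus A|}$ becomes $4^{|A|}2^{|C\setminus A|}$.

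\textbf{Weight distribution.} I would go through Table \ref{table1} row by row.
\begin{itemize}
\item $w_1$ has frequency $2(2^{|C|-|C|}-1)=0$, so it disappears.
\item $w_3$ has frequency $2^{|C|-|A\cup B|}-1$.
\item $w_5$ has frequency $2^{|C|-|C|}-1=0$, so it disappears.
\item $w_4$ has frequency $2(2^{|C|-|B|}-2^{|C|-|A\cup B|}-1+1)$.
\item $w_6$ has frequency $2(1-1-1+1)=0$, so it disappears.
\item $w_7$ has frequency $2^{|C|-|A\cup B|}-2^{|C|-|A\cup B|}-1+1=0$, so it disappears.
\item $w_8$ has frequency $2(2^{|C|-|B|}-2^{|C|-|A\cup B|}-2^{|C|-|B|}-1+2^{|C|-|A\cup B|}+1+1-1)=0$, so it disappears.
\end{itemize}
The surviving weights are
\[
w_3=2^{|A|+|C|},\qquad w_4=2^{|A|-1}(2^{|C|+1}-2^{|B|}),\qquad w_2=2^{|A|}(2^{|C|}-2^{|B|}).
\]
The minimum nonzero weight is $w_2$, because $w_4-w_2=2^{|A|-1}\cdot 2^{|B|}>0$ and $w_3>w_2$. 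To be rigorous, I must check that $w_2$ actually occurs. Its frequency in Table \ref{table1} specializes to $2^{|A|+|C|}+2^{|C|-|A\cup B|}-2^{|C|-|B|+1}$. This is positive since $|A|+|C|\ge |C|-|B|+1$ unless $A=B=\emptyset$, which the hypothesis excludes. (If $A=B=\emptyset$, every nonzero codeword would have weight $w_3$ or $w_4$.) The Lee weight distribution table then records weight $0$ and these three weights with their frequencies.

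\textbf{Plotkin-optimality.} This is the main point to verify. Let $n=2^{|A|}(2^{|C|}-2^{|B|})$ and let $k=2^{|A|+|C|}$ be the code size, namely $4^{|A|}2^{|C\setminus A|}$, using $A\subseteq C$. Lemma \ref{lem 2.1} gives
\[
\frac{kn}{k-1}=n+\frac{n}{k-1}.
\]
Since $0<n<k-1$, the floor of this quantity equals $n$. Indeed $n\le k-2^{|A|+|B|}\le k-1$, and the strict inequality $n<k-1$ holds whenever $2^{|A|+|B|}>1$, which is exactly the condition that $A,B$ are not both empty. Hence $\lfloor kn/(k-1)\rfloor=n=d_L$, so $\mathcal{C}_D$ is Plotkin-optimal.

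The only delicate step is the edge-case bookkeeping: confirming that the $w_2$ frequency is positive and that the floor equals $n$. Both reduce to the same non-emptiness hypothesis on $A,B$. The rest is routine substitution into Table \ref{table1}.
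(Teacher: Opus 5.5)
Your proposal is correct and follows the route the paper takes implicitly: specialize Theorem \ref{maintheorem1} and Table \ref{table1} using $B\cap C=B$ and $B\cup C=A\cup C=A\cup B\cup C=C$, note that the frequencies of $w_1,w_5,w_6,w_7,w_8$ vanish, and compare $d_L=w_2=n$ with $\lfloor kn/(k-1)\rfloor$ via $n=k-2^{|A|+|B|}$. The paper gives no separate proof, so your explicit checks that the frequency of $w_2$ is positive and that $n<k-1$ supply details it leaves unstated. Both checks are exactly where the hypothesis that $A$ and $B$ are not both empty is needed.
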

\begin{table}[H]
\centering
\begin{tabular}{l|l}
\hline
Weight & Frequency \\
\hline
$0$ & $1$ \\
\hline
$2^{|A|+|C|}$ & $2^{|C|-|A\cup B|}-1$ \\
\hline
$2^{|A|-1}(2^{|C|+1}-2^{|B|})$ & $2(2^{|C|-|B|}-2^{|C|-|A\cup B|})$ \\
\hline
$2^{|A|}(2^{|C|}-2^{|B|})$ & $2^{|A|+|C|}+2^{|C|-|A\cup B|}-2^{|C|-|B|+1}$\\
\hline
\end{tabular}
\caption{Lee weight distribution of Corollary \ref{corollary3_maintheorem1}}
\label{table1.3}
\end{table}

\begin{example}
This example illustrates Corollary \ref{corollary3_maintheorem1}. Let $m=3, A=\{1,3\}, B=\emptyset$ and $C=\{1,2,3\}$. Then the three-weight code $\mathcal{C}_{D}$ has length $28$, type $4^2 2^1$ and minimum Lee distance $28$ with Lee weight enumerator $y^{56}+18 x^{28}y^{28} + 12 x^{30} y^{26}+ x^{32}y^{24}$, as verified using Magma. Also, it is Plotkin-optimal. 
\end{example}

In the following theorem, we modified the defining set in Theorem \ref{maintheorem1} so that the code $\mathcal{C}_D$ is projective.

\begin{theorem}\label{projective}
     Let $m\geq 2$ be a positive integer and $A,B,C \subseteq [m]$. Consider $D = \Delta_{A}^{*}+ 2 (\Delta_{B,C}\setminus \Delta_{B \cap C}),$ with $|D|>1$ and $A\cap(B\cup C) = \emptyset$. Then, $\mathcal{C}_{D}$ is a projective code of length $(2^{|A|}-1)(2^{|B|}+2^{|C|}-2^{|B\cap C|+1})$ and the corresponding Lee weight distribution is displayed in Table \ref{table3}.
\end{theorem}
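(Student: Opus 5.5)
The argument has two independent parts: the Lee weight distribution, which follows the template of Theorem \ref{maintheorem1}, and projectivity, which comes from Lemma \ref{lem 2.2} applied to the generating matrix $\mathcal{G}_D$.

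\textbf{Weight distribution.} Take $D_1=\Delta_A^{*}=\Delta_A\setminus\{\mathbf{0}\}$ and $D_2=\Delta_{B,C}\setminus\Delta_{B\cap C}$. The length is $|D_1||D_2|=(2^{|A|}-1)(2^{|B|}+2^{|C|}-2^{|B\cap C|+1})$. By Eq. \eqref{boolean_relation},
\[
\sum_{\mathbf{d}_1\in D_1}(-1)^{\mathbf{c}\mathbf{d}_1}=2^{|A|}\psi(\mathbf{c}\mid A)-1 ,
\]
and similarly for $\mathbf{b}+\mathbf{c}$. The sum over $D_2$ is exactly the one computed in the proof of Theorem \ref{maintheorem1}. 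Substituting both into Eq. \eqref{weight_expession} gives
\[
\wt_L(c_D(\mathbf{v}))=|D|-\tfrac12\,S_2(\mathbf{b})\bigl(2^{|A|}\psi(\mathbf{c}\mid A)+2^{|A|}\psi(\mathbf{b}+\mathbf{c}\mid A)-2\bigr),
\]
where $S_2(\mathbf{b})=2^{|B|}\psi(\mathbf{b}\mid B)+2^{|C|}\psi(\mathbf{b}\mid C)-2^{|B\cap C|+1}\psi(\mathbf{b}\mid B\cap C)$.

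I will run through the same five cases for $\mathbf{b}$ (membership in $\Delta_B^\perp$, $\Delta_C^\perp$, $\Delta_{B\cap C}^\perp$), each crossed with the four subcases for $(\mathbf{c},\mathbf{b}+\mathbf{c})$ relative to $\Delta_A^\perp$. The counts of $\mathbf{b}$ and $\mathbf{c}$ are obtained by the same inclusion--exclusion as before, and they simplify because $A\cap(B\cup C)=\emptyset$ gives $|A\cup X|=|A|+|X|$. One change from Theorem \ref{maintheorem1}: in case (5), where $S_2(\mathbf{b})=0$, the weight is still $|D|$. The remaining weights now depend on whether $\psi$ takes the value $0$ or $1$, producing shifted values of the form $|D|\pm S_2/2$ and related expressions.

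After dividing by the kernel size, the frequencies should be assembled into Table \ref{table3}. I would check that they sum to $|\mathcal{C}_D|$; the type should again be $4^{|A|}2^{|B\cup C|}$ by the same row argument as in Theorem \ref{maintheorem1}. The main bookkeeping risk is that distinct cases may now produce coinciding weights, which have to be merged.

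\textbf{Projectivity.} Each column of $\mathcal{G}_D$ is $\mathbf{d}_1+2\mathbf{d}_2$ with $\mathbf{d}_1\neq\mathbf{0}$ supported in $A$ and $\mathbf{d}_2$ supported in $B\cup C$. Since $A\cap(B\cup C)=\emptyset$, the coordinates in $A$ contain only entries $0,1$, with at least one $1$. This gives condition 1 of Lemma \ref{lem 2.2}.

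For condition 2, suppose $\mathbf{d}_1+2\mathbf{d}_2=\pm(\mathbf{d}_1'+2\mathbf{d}_2')$. The minus sign is impossible: on $A$ the left side has a $1$ where the right side would have a $3$. With the plus sign, comparing coordinates in $A$ gives $\mathbf{d}_1=\mathbf{d}_1'$, and comparing coordinates in $B\cup C$ gives $\mathbf{d}_2=\mathbf{d}_2'$. So distinct columns are never $\pm1$ multiples of each other.

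The one subtlety is that Lemma \ref{lem 2.2} is stated for a generator matrix, whereas $\mathcal{G}_D$ is only a spanning matrix. I expect this to be the main obstacle, and would handle it in one of two ways:
\begin{itemize}
\item Argue directly with the dual: a dual word of Lee weight at most $2$ is a relation $\sum x_j g_j=0$ among at most two columns $g_j$ with small coefficients. The column analysis above rules these out, provided every column has a unit entry.
\item Alternatively, note that row operations by units preserve both conditions of Lemma \ref{lem 2.2}, since each is invariant under left multiplication by an invertible matrix. Deleting zero rows does not affect them either.
\end{itemize}
The hypothesis $|D|>1$ is needed so that the code is nontrivial and the notion of two columns applies.
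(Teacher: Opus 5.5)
Your plan follows the paper's proof. It uses the same character sums substituted into Eq.~\eqref{weight_expession}, the same five cases for $\mathbf{b}$ crossed with the four $(\mathbf{c},\mathbf{b}+\mathbf{c})$ subcases, and the same column check (a $1$ in the $A$-coordinates, no $3$ anywhere) for Lemma~\ref{lem 2.2}. Your direct dual argument is more careful than the paper, which applies Lemma~\ref{lem 2.2} to the spanning matrix $\mathcal{G}_D$ without comment. The argument is that a nonzero dual word of Lee weight at most $2$ is a relation among at most two columns of any spanning matrix.

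Two things in the weight part need fixing.

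First, the one genuinely new computation is the subcase $\mathbf{c}\notin\Delta_A^\perp$, $\mathbf{b}+\mathbf{c}\notin\Delta_A^\perp$, and your sketch skips it.
\begin{itemize}
\item In Theorem~\ref{maintheorem1} this subcase had weight $|D|$ and was never counted directly; its frequency came from subtracting everything else from $2^{2m}$.
\item Here it has weight $|D|+S_2(\mathbf{b})$. So in each of cases (1)--(4) it must be counted explicitly.
\item The number of $\mathbf{c}$ depends on whether $\mathbf{b}\in\Delta_A^\perp$ (giving $2^m-2^{m-|A|}$) or not (giving $2^m-2^{m-|A|+1}$). This extra split is what produces $w_2,w_5,w_8,w_{11}$, so ``the same inclusion--exclusion as before'' does not cover it.
\item The three possible weight values are $|D|+S_2$, $|D|-(2^{|A|-1}-1)S_2$ and $|D|-(2^{|A|}-1)S_2$, not $|D|\pm S_2/2$.
\end{itemize}

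Second, your aside that the type is again $4^{|A|}2^{|B\cup C|}$ is false in general, and so is the sum check against that value of $|\mathcal{C}_D|$. The value $|D|+S_2(\mathbf{b})$ can vanish, which makes the kernel larger than $2^{2m-|A|-|A\cup B\cup C|}$. Take $m=3$, $A=\{1\}$, $B=\{2\}$, $C=\{2,3\}$.
\begin{itemize}
\item Then $D=\{\mathbf{e}_1+2\mathbf{e}_3,\ \mathbf{e}_1+2\mathbf{e}_2+2\mathbf{e}_3\}$ and $\mathcal{C}_D=\{(x,x+2y): x\in\mathbb{Z}_4,\ y\in\{0,1\}\}$.
\item This code has $8$ codewords and type $4^12^1$.
\item Table~\ref{table3} instead lists $w_5=0$ with frequency $1$, and its frequencies total $16$.
\end{itemize}
This is why the theorem asserts no type, and why the paper adds a remark afterwards giving type $4^{|A|}2^{|B\cup C|-j}$ (compare the hypothesis $|B|<|C|-1$ in Corollary~\ref{cor:projective2}). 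Either drop the type claim, or state it only when no $w_i$ of positive frequency vanishes. None of this affects projectivity or the length.
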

\begin{proof}
    Let $\mathbf{d} = \mathbf{x}+2\mathbf{y}\in D$. Since $\mathbf{x}\neq \mathbf{0}$ and $\Supp(\mathbf{x})\cap \Supp(\mathbf{y}) = \emptyset$, it follows that each vector in $D$ contains a coordinate equal to $1$. Moreover, no vector in $D$ has a $3$. Therefore, by Lemma \ref{lem 2.2}, the code $\mathcal{C}_{D}$ is projective. 
    
     From Eq. \eqref{boolean_relation}, we have
    \[
    \sum\limits_{\mathbf{d}_{1}\in \Delta_{A}^{*}} (-1)^{\mathbf{c}\mathbf{d}_{1}} = 2^{|A|}\psi(\mathbf{c}\mid A)-1,
    \]
    and 
    \[
    \sum\limits_{\mathbf{d}_{1}\in\Delta_{A}^{*}} (-1)^{(\mathbf{b}+\mathbf{c})\mathbf{d}_{1}} = 2^{|A|}\psi(\mathbf{b}+\mathbf{c}\mid A)-1.
    \]
Also, 
\[
    \sum\limits_{\mathbf{d}_{2}\in D_{2}} (-1)^{\mathbf{b}\mathbf{d}_{2}} = \sum\limits_{\mathbf{d}_{2}\in \Delta_{B}} (-1)^{\mathbf{b}\mathbf{d}_{2}}+\sum\limits_{\mathbf{d}_{2}\in \Delta_{C}} (-1)^{\mathbf{b}\mathbf{d}_{2}}-2\sum\limits_{\mathbf{d}_{2}\in \Delta_{B\cap C}} (-1)^{\mathbf{b}\mathbf{d}_{2}}.
\]
Then by Eq. \eqref{boolean_relation},
\[
\sum\limits_{\mathbf{d}_{2}\in \Delta_{B,C}\setminus \Delta_{B \cap C}} (-1)^{\mathbf{b}\mathbf{d}_{2}} = 2^{|B|}\psi(\mathbf{b}\mid B)+2^{|C|}\psi(\mathbf{b}\mid C)-2^{|B \cap C|+1}\psi(\mathbf{b}\mid B\cap C).
\]
Now, from Eq. \eqref{weight_expession}, we have
\[
\begin{aligned}
    \wt_{L}(c_{D}(\mathbf{v})) &= |D|- \frac{1}{2} \left( 2^{|B|}\psi(\mathbf{b}\mid B)+2^{|C|}\psi(\mathbf{b}\mid C)-2^{|B \cap C|+1}\psi(\mathbf{b}\mid B\cap C) \right) \\
    &\quad \times \Big( 2^{|A|} \psi(\mathbf{c}\mid A)+ 2^{|A|} \psi(\mathbf{b}+\mathbf{c}\mid A) -2 \Big).
\end{aligned}
\]
There are $5$ possible cases which are as follows:
\begin{enumerate}
    \item[(1)] $\mathbf{b}\in \Delta_{B}^{\perp}$ and $\mathbf{b}\in \Delta_{C}^{\perp}$. 
\begin{itemize}
    \item If $\mathbf{c} \in \Delta_{A}^{\perp}$ and $\mathbf{b}+\mathbf{c} \in \Delta_{A}^{\perp}$, then by Lemma \ref{dual_psi}, we have 
    \[
    \wt_{L}(c_{D}(\mathbf{v})) = 0.
    \]
    In this case, 
    \begin{eqnarray*}
        \# \mathbf{b} &=&  2^{m-|A\cup B \cup C|} \\
    \# \mathbf{c}
    &=& 2^{m-|A|}.
    \end{eqnarray*}
    
    \item If $\mathbf{c} \in \Delta_{A}^{\perp}$ and $\mathbf{b}+\mathbf{c} \notin \Delta_{A}^{\perp}$, then by Lemma \ref{dual_psi}, we have 
    \[
    \wt_{L}(c_{D}(\mathbf{v})) = 2^{|A|-1}(2^{|B|}+2^{|C|}-2^{|B \cap C|+1}).
    \]
    In this case, 
    \begin{eqnarray*}
        \# \mathbf{b} 
        &=& 2^{m-|B \cup C|}-2^{m-|A\cup B \cup C|} \\
    \# \mathbf{c} 
    &=& 2^{m-|A|}.
    \end{eqnarray*}

    \item If $\mathbf{c} \notin \Delta_{A}^{\perp}$ and $\mathbf{b}+\mathbf{c} \in \Delta_{A}^{\perp}$, then by Lemma \ref{dual_psi}, we have 
    \[
    \wt_{L}(c_{D}(\mathbf{v})) = 2^{|A|-1}(2^{|B|}+2^{|C|}-2^{|B \cap C|+1}).
    \]
    In this case, 
    \begin{eqnarray*}
        \# \mathbf{b} 
        &=& 2^{m-|B \cup C|}-2^{m-|A\cup B \cup C|} \\
    \# \mathbf{c} &=&  2^{m-|A|}.
    \end{eqnarray*}

    \item If $\mathbf{c} \notin \Delta_{A}^{\perp}$ and $\mathbf{b}+\mathbf{c} \notin \Delta_{A}^{\perp}$, then by Lemma \ref{dual_psi}, we have 
    \[
    \wt_{L}(c_{D}(\mathbf{v})) = 2^{|A|}(2^{|B|}+2^{|C|}-2^{|B \cap C|+1}).
    \]
    There are two subcases.
    \begin{itemize}
        \item $\mathbf{b}\in \Delta_{A}^{\perp}$. In this case, 
    \begin{eqnarray*}
        \# \mathbf{b} &=& 2^{m-|A\cup B \cup C|} \\
    \# \mathbf{c}
    &=& 2^{m}-2^{m-|A|}.
    \end{eqnarray*}
    \item $\mathbf{b}\notin \Delta_{A}^{\perp}$. In this case, 
    \begin{eqnarray*}
        \# \mathbf{b} &=& 2^{m-|B\cup C|}-2^{m-|A\cup B \cup C|} \\
    \# \mathbf{c}
    &=& 2^{m}-2^{m-|A|+1}.
    \end{eqnarray*}
    \end{itemize}
    \end{itemize}

\item[(2)] $\mathbf{b}\in \Delta_{B}^{\perp}$ and $\mathbf{b}\notin \Delta_{C}^{\perp}$. Then by Lemma \ref{dual_psi}, we get
    \[
\wt_{L}(c_{D}(\mathbf{v})) = |D|- \frac{1}{2}\left( 2^{|B|}-2^{|B \cap C|+1} \right) \Big( 2^{|A|} \psi(\mathbf{c}\mid A)+ 2^{|A|} \psi(\mathbf{b}+\mathbf{c}\mid A) -2 \Big).
\]
\begin{itemize}
    \item If $\mathbf{c} \in \Delta_{A}^{\perp}$ and $\mathbf{b}+\mathbf{c} \in \Delta_{A}^{\perp}$, then by Lemma \ref{dual_psi}, we have 
    \[
    \wt_{L}(c_{D}(\mathbf{v})) = 2^{|A|+|C|}-2^{|C|}.
    \]
    In this case, 
    \begin{eqnarray*}
        \# \mathbf{b}
        &=& 2^{m-|A\cup B|}-2^{m-|A\cup B \cup C|} \\
    \# \mathbf{c}
    &=& 2^{m-|A|}.
    \end{eqnarray*}
    
    \item If $\mathbf{c} \in \Delta_{A}^{\perp}$ and $\mathbf{b}+\mathbf{c} \notin \Delta_{A}^{\perp}$, then by Lemma \ref{dual_psi}, we have 
    \[
    \wt_{L}(c_{D}(\mathbf{v})) = 2^{|A|-1}(2^{|B|}+2^{|C|+1}-2^{|B \cap C|+1})-2^{|C|}.
    \]
    In this case, 
    \begin{eqnarray*}
        \# \mathbf{b}
        &=& 2^{m-|B|}-2^{m-|A\cup B|}-2^{m-|B \cup C|}+2^{m-|A\cup B \cup C|} \\
    \# \mathbf{c}
    &=& 2^{m-|A|}.
    \end{eqnarray*}

    \item If $\mathbf{c} \notin \Delta_{A}^{\perp}$ and $\mathbf{b}+\mathbf{c} \in \Delta_{A}^{\perp}$, then by Lemma \ref{dual_psi}, we have 
    \[
    \wt_{L}(c_{D}(\mathbf{v})) = 2^{|A|-1}(2^{|B|}+2^{|C|+1}-2^{|B \cap C|+1})-2^{|C|}.
    \]
    In this case, 
    \begin{eqnarray*}
        \# \mathbf{b}  
        &=& 2^{m-|B|}-2^{m-|A\cup B|}-2^{m-|B \cup C|}+2^{m-|A\cup B \cup C|} \\
    \# \mathbf{c}
    &=& 2^{m-|A|}.
    \end{eqnarray*}

    \item If $\mathbf{c} \notin \Delta_{A}^{\perp}$ and $\mathbf{b}+\mathbf{c} \notin \Delta_{A}^{\perp}$, then by Lemma \ref{dual_psi}, we have 
    \[
    \wt_{L}(c_{D}(\mathbf{v})) =  2^{|A|}(2^{|B|}+2^{|C|}-2^{|B \cap C|+1})-2^{|C|}.
    \]   
    There are two subcases.
    \begin{itemize}
        \item $\mathbf{b}\in \Delta_{A}^{\perp}$. In this case, 
    \begin{eqnarray*}
        \# \mathbf{b} &=& 2^{m-|A\cup B|}-2^{m-|A\cup B \cup C|} \\
    \# \mathbf{c}
    &=& 2^{m}-2^{m-|A|}.
    \end{eqnarray*}
    \item $\mathbf{b}\notin \Delta_{A}^{\perp}$. In this case, 
    \begin{eqnarray*}
        \# \mathbf{b} &=& 2^{m-|B|}-2^{m-|A\cup B|}-2^{m-|B\cup C|}+2^{m-|A\cup B \cup C|} \\
    \# \mathbf{c}
    &=& 2^{m}-2^{m-|A|+1}.
    \end{eqnarray*}
    \end{itemize}
\end{itemize}

\item[(3)] $\mathbf{b}\notin \Delta_{B}^{\perp}$ and $\mathbf{b}\in \Delta_{C}^{\perp}$. Then by Lemma \ref{dual_psi}, we get
    \[
\wt_{L}(c_{D}(\mathbf{v})) = |D|- \frac{1}{2}\left( 2^{|C|}-2^{|B \cap C|+1} \right) \Big( 2^{|A|} \psi(\mathbf{c}\mid A)+ 2^{|A|} \psi(\mathbf{b}+\mathbf{c}\mid A) -2 \Big).
\]
\begin{itemize}
    \item If $\mathbf{c} \in \Delta_{A}^{\perp}$ and $\mathbf{b}+\mathbf{c} \in \Delta_{A}^{\perp}$, then by Lemma \ref{dual_psi}, we have 
    \[
    \wt_{L}(c_{D}(\mathbf{v})) = 2^{|A|+|B|}-2^{|B|}.
    \]
    In this case, 
    \begin{eqnarray*}
        \# \mathbf{b}
        &=& 2^{m-|A\cup C|}-2^{m-|A\cup B \cup C|} \\
    \# \mathbf{c} 
    &=& 2^{m-|A|}.
    \end{eqnarray*}
    
    \item If $\mathbf{c} \in \Delta_{A}^{\perp}$ and $\mathbf{b}+\mathbf{c} \notin \Delta_{A}^{\perp}$, then by Lemma \ref{dual_psi}, we have 
    \[
    \wt_{L}(c_{D}(\mathbf{v})) = 2^{|A|-1}(2^{|B|+1}+2^{|C|}-2^{|B \cap C|+1})-2^{|B|}.
    \]
    In this case, 
    \begin{eqnarray*}
        \# \mathbf{b}
        &=& 2^{m-|C|}-2^{m-|A\cup C|}-2^{m-|B \cup C|}+2^{m-|A\cup B \cup C|} \\
    \# \mathbf{c} 
    &=& 2^{m-|A|}.
    \end{eqnarray*}

    \item If $\mathbf{c} \notin \Delta_{A}^{\perp}$ and $\mathbf{b}+\mathbf{c} \in \Delta_{A}^{\perp}$, then by Lemma \ref{dual_psi}, we have 
    \[
    \wt_{L}(c_{D}(\mathbf{v})) = 2^{|A|-1}(2^{|B|+1}+2^{|C|}-2^{|B \cap C|+1})-2^{|B|}.
    \]
    In this case, 
    \begin{eqnarray*}
        \# \mathbf{b}  
        &=& 2^{m-|C|}-2^{m-|A\cup C|}-2^{m-|B \cup C|}+2^{m-|A\cup B \cup C|} \\
    \# \mathbf{c}
    &=& 2^{m-|A|}.
    \end{eqnarray*}

    \item If $\mathbf{c} \notin \Delta_{A}^{\perp}$ and $\mathbf{b}+\mathbf{c} \notin \Delta_{A}^{\perp}$, then by Lemma \ref{dual_psi}, we have 
    \[
    \wt_{L}(c_{D}(\mathbf{v})) = 2^{|A|}(2^{|B|}+2^{|C|}-2^{|B \cap C|+1})-2^{|B|}.
    \]    
    There are two subcases.
    \begin{itemize}
        \item $\mathbf{b}\in \Delta_{A}^{\perp}$. In this case, 
    \begin{eqnarray*}
        \# \mathbf{b} &=& 2^{m-|A\cup C|}-2^{m-|A\cup B \cup C|} \\
    \# \mathbf{c}
    &=& 2^{m}-2^{m-|A|}.
    \end{eqnarray*}
    \item $\mathbf{b}\notin \Delta_{A}^{\perp}$. In this case, 
    \begin{eqnarray*}
        \# \mathbf{b} &=& 2^{m-|C|}-2^{m-|A\cup C|}-2^{m-|B\cup C|}+2^{m-|A\cup B \cup C|} \\
    \# \mathbf{c}
    &=& 2^{m}-2^{m-|A|+1}.
    \end{eqnarray*}
    \end{itemize}
\end{itemize}

\item[(4)] $\mathbf{b}\notin \Delta_{B}^{\perp},\mathbf{b}\notin \Delta_{C}^{\perp}$ and $\mathbf{b}\in\Delta_{B\cap C}^{\perp}$. Then by Lemma \ref{dual_psi}, we get
    \[
\wt_{L}(c_{D}(\mathbf{v})) = |D|-\frac{1}{2}\left( -2^{|B \cap C|+1} \right) \Big(2^{|A|}\psi(\mathbf{c}\mid A)+ 2^{|A|} \psi(\mathbf{b}+\mathbf{c}\mid A) -2 \Big).
\]
\begin{itemize}
    \item If $\mathbf{c} \in \Delta_{A}^{\perp}$ and $\mathbf{b}+\mathbf{c} \in \Delta_{A}^{\perp}$, then by Lemma \ref{dual_psi}, we have 
    \[
    \wt_{L}(c_{D}(\mathbf{v})) = 2^{|A|}(2^{|B|}+2^{|C|})-2^{|B|}-2^{|C|}.
    \]
    In this case, 
    \begin{eqnarray*}
        \# \mathbf{b}
        &=& 2^{m-|A\cup (B\cap C)|}-2^{m-|A\cup B|}-2^{m-|A \cup C|}+2^{m-|A\cup B \cup C|} \\
    \# \mathbf{c} 
    &=& 2^{m-|A|}.
    \end{eqnarray*}
    
    \item If $\mathbf{c} \in \Delta_{A}^{\perp}$ and $\mathbf{b}+\mathbf{c} \notin \Delta_{A}^{\perp}$, then by Lemma \ref{dual_psi}, we have 
    \[
    \wt_{L}(c_{D}(\mathbf{v})) = 2^{|A|}(2^{|B|}+2^{|C|}-2^{|B \cap C|})-2^{|B|}-2^{|C|}.
    \]
    In this case, 
    \[
    \begin{aligned}
        \# \mathbf{b} 
        &= 2^{m-|B\cap C|}-2^{m-|A\cup (B\cap C)|}-2^{m-|B|}-2^{m-|C|}+2^{m-|A\cup B|} \\
        &\quad +2^{m-|A \cup C|}+2^{m-|B\cup C|}-2^{m-|A\cup B \cup C|} \\
        \# \mathbf{c}
        &= 2^{m-|A|} 
    \end{aligned}
    \]

    \item If $\mathbf{c} \notin \Delta_{A}^{\perp}$ and $\mathbf{b}+\mathbf{c} \in \Delta_{A}^{\perp}$, then by Lemma \ref{dual_psi}, we have 
    \[
    \wt_{L}(c_{D}(\mathbf{v})) = 2^{|A|}(2^{|B|}+2^{|C|}-2^{|B \cap C|})-2^{|B|}-2^{|C|}.
    \]
    In this case,
    \[
    \begin{aligned}
        \#\mathbf{b}
   &= 2^{m-|B\cap C|}-2^{m-|A\cup (B\cap C)|}-2^{m-|B|}-2^{m-|C|}+2^{m-|A\cup B|} \\
   &\quad +2^{m-|A \cup C|}+2^{m-|B\cup C|}-2^{m-|A\cup B \cup C|}\\
   \#\mathbf{c}
&= 2^{m-|A|}.
    \end{aligned}
    \]
    \item If $\mathbf{c} \notin \Delta_{A}^{\perp}$ and $\mathbf{b}+\mathbf{c} \notin \Delta_{A}^{\perp}$, then by Lemma \ref{dual_psi}, we have 
    \[
    \wt_{L}(c_{D}(\mathbf{v})) = 2^{|A|}(2^{|B|}+2^{|C|}-2^{|B \cap C|+1})-2^{|B|}-2^{|C|}.
    \]    
     There are two subcases.
    \begin{itemize}
        \item $\mathbf{b}\in \Delta_{A}^{\perp}$. In this case, 
    \begin{eqnarray*}
        \# \mathbf{b} &=& 2^{m-|A\cup (B\cap C)|}-2^{m-|A\cup B|}-2^{m-|A \cup C|}+2^{m-|A\cup B \cup C|} \\
    \# \mathbf{c}
    &=& 2^{m}-2^{m-|A|}.
    \end{eqnarray*}
    \item $\mathbf{b}\notin \Delta_{A}^{\perp}$. In this case, 
        \[
    \begin{aligned}
        \#\mathbf{b}
   &= 2^{m-|B\cap C|}-2^{m-|A\cup (B\cap C)|}-2^{m-|B|}-2^{m-|C|}+2^{m-|A\cup B|} \\
   &\quad +2^{m-|A \cup C|}+2^{m-|B\cup C|}-2^{m-|A\cup B \cup C|}\\
   \#\mathbf{c}
&= 2^{m}-2^{m-|A|+1}.
    \end{aligned}
    \]
    \end{itemize}
\end{itemize}

\item[(5)] $\mathbf{b}\notin \Delta_{B\cap C}^{\perp}$. Then by Lemma \ref{dual_psi}, we get 
\[
\wt_{L}(c_{D}(\mathbf{v})) = |D| = (2^{|A|}-1)(2^{|B|}+2^{|C|}-2^{|B\cap C|+1}).
\]
In this case, 
    \begin{eqnarray*}
        \# \mathbf{b} &=& 2^{m}-2^{m-|B\cap C|} \\
    \# \mathbf{c}
    &=& 2^{m}.
    \end{eqnarray*}
\end{enumerate}
Since the frequency corresponding to Lee weight $0$ is $2^{2m-|A|-|A\cup B \cup C|}$, all the above frequencies must be divided by $2^{2m-|A|-|A\cup B \cup C|}$ to obtain the Lee weight distribution. \qed
\end{proof}
\begin{table}[H]
\centering
\resizebox{\textwidth}{!}{
\begin{tabular}{l|l}
\hline
Weight & Frequency \\
\hline
$0$ & $1$ \\
\hline
$ w_{1} :=  2^{|A|-1}(2^{|B|}+2^{|C|}-2^{|B\cap C|+1}) $ & $2(2^{|A\cup B \cup C|-|B\cup C|}-1)$ \\
\hline
$w_{2} := 2^{|A|}(2^{|B|}+2^{|C|}-2^{|B\cap C|+1})$ & $2^{|A|+|A\cup B \cup C|-|B\cup C|}-2^{|A\cup B \cup C|-|B \cup C|+1}+1$\\
\hline
$w_{3} := 2^{|A|+|C|}-2^{|C|}$ & $2^{|A\cup B \cup C|-|A \cup B|}-1$\\
\hline
$w_{4} := 2^{|A|-1}(2^{|B|}+2^{|C|+1}-2^{|B\cap C|+1})-2^{|C|}$ & $2(2^{|A\cup B \cup C|-|B|}-2^{|A\cup B \cup C|-|A \cup B|}-2^{|A\cup B \cup C|-|B\cup C|}+1)$ \\
\hline
$w_{5} := 2^{|A|}(2^{|B|}+2^{|C|}-2^{|B\cap C|+1})-2^{|C|}$ &
$\begin{aligned}
    & 2^{|A\cup B \cup C|+|A|-|B|}-2^{|A\cup B \cup C|+|A|-|B\cup C|}-2^{|A\cup B \cup C|-|B|+1}\\[-1ex]
    & +2^{|A\cup B \cup C|-|A\cup B|}+2^{|A\cup B \cup C|-|B\cup C|+1}-1 
\end{aligned}$\\
\hline
$w_{6} := 2^{|A|+|B|}-2^{|B|}$ & $2^{|A\cup B \cup C|-|A\cup C|}-1$ \\
\hline
$w_{7} := 2^{|A|-1}(2^{|B|+1}+2^{|C|}-2^{|B\cap C|+1})-2^{|B|}$ & $2(2^{|A\cup B \cup C|-|C|}-2^{|A\cup B \cup C|-|A \cup C|}-2^{|A\cup B \cup C|-|B\cup C|}+1)$ \\
\hline
$w_{8} := 2^{|A|}(2^{|B|}+2^{|C|}-2^{|B\cap C|+1})-2^{|B|}$ &
$\begin{aligned}
    &2^{|A\cup B \cup C|+|A|-|C|}-2^{|A\cup B \cup C|+|A|-|B\cup C|}-2^{|A\cup B \cup C|-|C|+1}\\[-1ex]
    & +2^{|A\cup B \cup C|-|A\cup C|}+2^{|A\cup B \cup C|-|B\cup C|+1}-1
\end{aligned}$\\ 
\hline
$w_{9} := 2^{|A|}(2^{|B|}+2^{|C|})-2^{|B|}-2^{|C|}$ & $2^{|A\cup B\cup C|-|A \cup (B \cap C)|}-2^{|A\cup B \cup C|-|A\cup B|}-2^{|A\cup B \cup C|-|A\cup C|}+1$ \\
\hline
$w_{10} := 2^{|A|}(2^{|B|}+2^{|C|}-2^{|B\cap C|})-2^{|B|}-2^{|C|}$ &
$\begin{aligned}
    2(& 2^{|A\cup B \cup C|-|B \cap C|}-2^{|A\cup B \cup C|-|A\cup (B\cap C)|}-2^{|A\cup B\cup C|-|B|} \\[-1ex]
    & -2^{|A\cup B \cup C|-|C|}+2^{|A\cup B\cup C|-|A\cup B|}+2^{|A\cup B \cup C|-|A\cup C|} \\[-1ex]
    &+ 2^{|A\cup B\cup C|-|B\cup C|}-1 )
\end{aligned}$\\
\hline
$w_{11} := 2^{|A|}(2^{|B|}+2^{|C|}-2^{|B\cap C|+1})-2^{|B|}-2^{|C|}$ &
$\begin{aligned}
    & 2^{|A\cup B\cup C|+|A|-|B\cap C|}-2^{|A\cup B\cup C|+|A|-|B|}-2^{|A\cup B\cup C|+|A|-|C|} \\[-1ex]
    & +2^{|A\cup B\cup C|+|A|-|B\cup C|}-2^{|A\cup B\cup C|-|B\cap C|+1}+2^{|A\cup B \cup C|-|B|+1} \\[-1ex]
    & +2^{|A\cup B\cup C|-|A\cup(B\cap C)|}+2^{|A\cup B \cup C|-|C|+1}-2^{|A\cup B \cup C|-|A\cup B|} \\[-1ex]
    & - 2^{|A\cup B \cup C|-|A\cup C|}-2^{|A\cup B \cup C|-|B\cup C|+1}+1
\end{aligned}$\\
\hline
$w_{12} := (2^{|A|}-1)(2^{|B|}+2^{|C|}-2^{|B\cap C|+1})$ &
$2^{|A\cup B\cup C|+|A|}-2^{|A\cup B\cup C|+|A|-|B\cap C|}$\\
\hline
\end{tabular}
}
\caption{Lee weight distribution of Theorem \ref{projective}}
\label{table3}
\end{table}
\begin{example}
This example illustrates Theorem \ref{projective}. Let $m=6, A=\{5\}, B=\{1,2,3,4\}$ and $C=\{2,3,4,6\}$. Then the projective two-weight code $\mathcal{C}_{D}$ has length $16$, type $4^1 2^4$ and minimum Lee distance $16$ with Lee weight enumerator $y^{32}+62 x^{16}y^{16} + x^{32}$, as verified using Magma. Moreover, it is Plotkin-optimal. 
\end{example}
\begin{corollary} \label{corollary_projective}
    If $B\subsetneq C$ in Theorem \ref{projective} and $|D|>1$, then the code $\mathcal{C}_{D}$ in \eqref{C_D code} is of length $(2^{|A|}-1)(2^{|C|}-2^{|B|})$ and minimum Lee distance
    \[
d_{L}(\mathcal{C}_{D}) = 
\begin{cases}
2^{|A|}(2^{|C|} - 2^{|B|})-2^{|C|}, & \text{if \;} 2^{|A|+|C|-1}-2^{|A|+|B|-1}\leq 2^{|C|}, \\
2^{|A|-1}(2^{|C|}-2^{|B|}), & \text{otherwise}.
\end{cases}
\]
Its Lee weight distribution is displayed in Table \ref{table3.1}.
\end{corollary}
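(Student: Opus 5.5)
The plan is to specialize Theorem \ref{projective} to $B\subsetneq C$, simplify Table \ref{table3}, and then read off the minimum Lee distance by comparing the weights that survive. Since $A\cap(B\cup C)=\emptyset$ and $B\subsetneq C$, every set-size appearing in Table \ref{table3} collapses:
\[
|B\cap C|=|B|,\quad |B\cup C|=|C|,\quad |A\cup B\cup C|=|A\cup C|=|A|+|C|,\quad |A\cup B|=|A\cup(B\cap C)|=|A|+|B|.
\]
The length $(2^{|A|}-1)(2^{|C|}-2^{|B|})$ follows at once from the length in Theorem \ref{projective}, using $2^{|B|}+2^{|C|}-2^{|B|+1}=2^{|C|}-2^{|B|}$. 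I also note that $|D|>1$ forces $\Delta_A^{*}\neq\emptyset$, that is, $|A|\geq 1$; this is used in the comparisons below.

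Next I substitute these sizes into the frequency column of Table \ref{table3}. Write $u=2^{|A|}$ and $t=2^{|C|-|B|}>1$. A direct check shows that the frequencies of $w_6,w_7,w_8,w_9,w_{10},w_{11}$ all vanish. For example, $w_6$ has frequency $2^{0}-1=0$, and in $w_{11}$ the twelve terms cancel in pairs. The surviving weights, with $X:=2^{|C|}-2^{|B|}$, are:
\begin{itemize}
\item $w_1=2^{|A|-1}X$, with frequency $2(u-1)$;
\item $w_2=2^{|A|}X$, with frequency $(u-1)^2$;
\item $w_3=2^{|C|}(2^{|A|}-1)$, with frequency $t-1$;
\item $w_4=2^{|A|-1}X+2^{|A|+|C|-1}-2^{|C|}$, with frequency $2(ut-t-u+1)=2(u-1)(t-1)$;
\item $w_5=2^{|A|}X-2^{|C|}$, whose frequency simplifies to $u^2t-u^2-2ut+t+2u-1=(t-1)(u-1)^2$;
\item $w_{12}=(2^{|A|}-1)X$, with frequency $u^2 2^{|C|}(1-2^{-|B|})$, suitably normalized.
\end{itemize}
These entries give Table \ref{table3.1}. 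The crucial observation is that $w_1$ and $w_5$ both have strictly positive frequency, because $u\geq 2$ and $t\geq 2$.

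Finally I determine the minimum nonzero weight. Since $0<X<2^{|C|}$, we have $w_{12}=2^{|A|}X-X>2^{|A|}X-2^{|C|}=w_5$, and clearly $w_2>w_5$. Using $|A|\geq1$ gives $w_4-w_1=2^{|C|}(2^{|A|-1}-1)\geq 0$. Moreover $w_3=2^{|C|}(2^{|A|}-1)\geq 2^{|A|-1}2^{|C|}>w_1$. Hence $d_L(\mathcal{C}_D)=\min\{w_1,w_5\}$, and $w_5\leq w_1$ holds exactly when $2^{|A|-1}X\leq 2^{|C|}$, i.e. $2^{|A|+|C|-1}-2^{|A|+|B|-1}\leq 2^{|C|}$. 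This is precisely the stated case split.

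I expect the main obstacle to be purely computational: verifying that the six frequencies vanish and that the surviving ones factor as claimed, in particular the long $w_{11}$ expression and the factorization $(t-1)(u-1)^2$ for $w_5$. I would handle this systematically by rewriting every exponent in terms of $u$ and $t$ and then factoring.
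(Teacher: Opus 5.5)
\textbf{The gap: you never check that $w_5$ is nonzero.} Your route is how the corollary is read off from Table~\ref{table3}: collapse the set sizes using $A\cap(B\cup C)=\emptyset$ and $B\subsetneq C$, kill the frequencies of $w_6,\dots,w_{11}$, and compare the surviving weights. The paper gives no separate proof, so this is its implicit argument. Your algebra is correct, including the factorizations $2(u-1)(t-1)$ and $(t-1)(u-1)^2$ and the comparisons $w_{12},w_2>w_5$, $w_4\ge w_1$, $w_3>w_1$.

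The failure is in the final step $d_L(\mathcal{C}_D)=\min\{w_1,w_5\}$. Positive frequency is not enough; $w_5$ must also be a nonzero weight. Write $w_5=2^{|C|}(2^{|A|}-1)-2^{|A|+|B|}$. Using $|C|\ge|B|+1$ and $|A|\ge1$ gives $w_5\ge0$, with equality exactly when $|A|=1$ and $|C|=|B|+1$. The hypothesis $|D|=2^{|B|}>1$ only forces $|B|\ge1$, so this case is not excluded.

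In that case the vectors $\mathbf{v}$ counted under $w_5$ give the zero codeword. The kernel doubles, which is the situation of the Remark after Table~\ref{table3.1}. The true minimum distance is then $2^{|B|}=w_1=w_4=w_{12}$, the ``otherwise'' value. Yet the inequality $2^{|A|+|C|-1}-2^{|A|+|B|-1}\le 2^{|C|}$ holds, so the first branch of the formula returns $0$.

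\textbf{A counterexample.} Take $m=3$, $A=\{1\}$, $B=\{2\}$, $C=\{2,3\}$. Then $D=\{\mathbf{e}_1+2\mathbf{e}_3,\ \mathbf{e}_1+2\mathbf{e}_2+2\mathbf{e}_3\}$ with $|D|=2$, and $\mathcal{C}_D=\{(x,x+2k): x\in\mathbb{Z}_4,\ k\in\mathbb{Z}_2\}$. This code has minimum Lee distance $2$, not $0$.

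So the statement as written is false in this subcase, and no argument can prove it without an extra hypothesis. You need to assume and use $(|A|,|C|)\neq(1,|B|+1)$. This is exactly why the paper later imposes $|B|<|C|-1$ in Corollary~\ref{cor:projective2}. Alternatively, treat this subcase separately with $d_L=2^{|B|}$. Under that hypothesis $w_5>0$, and your proof is complete.
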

\begin{table}[H]
\centering
\begin{tabular}{l|l}
\hline
Weight & Frequency \\
\hline
$0$ & $1$ \\
\hline
$2^{|A|-1}(2^{|C|}-2^{|B|})$ & $2(2^{|A\cup C|-|C|}-1)$ \\
\hline
$2^{|A|}(2^{|C|}-2^{|B|})$ & $2^{|A|+|A\cup C|-|C|}-2^{|A\cup C|-|C|+1}+1$\\
\hline
$2^{|A|+|C|}-2^{|C|}$ & $2^{|A\cup C|-|A\cup B|}-1$ \\
\hline
$2^{|A|-1}(2^{|C|+1}-2^{|B|})-2^{|C|}$ & $2(2^{|A\cup C|-|B|}-2^{|A\cup C|-|A\cup B|}-2^{|A\cup C|-|C|}+1)$ \\
\hline
$2^{|A|}(2^{|C|}-2^{|B|})-2^{|C|}$ & $
\begin{aligned}
    & 2^{|A\cup C|+|A|-|B|}-2^{|A\cup C|+|A|-|C|}-2^{|A\cup C|-|B|+1} \\[-1ex]
    & +2^{|A\cup C|-|A\cup B|}+2^{|A\cup C|-|C|+1}-1 
\end{aligned}$\\
\hline
$(2^{|A|}-1)(2^{|C|}-2^{|B|})$ & $2^{|A\cup C|+|A|}-2^{|A\cup C|+|A|-|B|}$ \\
\hline

\end{tabular}
\caption{Lee weight distribution of Corollary \ref{corollary_projective}}
\label{table3.1}
\end{table}
\begin{remark}
    In Table \ref{table3} and Table \ref{table3.1}, if $w_{i}=0$ for some $i$ and the frequency of $0$ is $2^{j}$, then the corresponding types are $4^{|A|}2^{|B \cup C|-j}$ and $4^{|A|}2^{|C|-j}$, respectively.
\end{remark}
\begin{example}
    This example illustrates Corollary \ref{corollary_projective}. Let $m=7, A=\{3\}, B=\{6,7\}$ and $C=\{1,2,4,5,6,7\}$. Then the projective four-weight code $\mathcal{C}_{D}$ has length $60$, type $4^1 2^6$ and minimum Lee distance $56$ with Lee weight enumerator $y^{120}+15 x^{56}y^{64} + 224 x^{60}y^{60}+15 x^{64}y^{56}+x^{120}$, as verified using Magma. Moreover, $\mathcal{C}_{D}$ has Plotkin-defect 4, and it improves the best-known quaternary codes with parameters $(60,4^1 2^6,52)$ reported in the database \cite{aydin2022updated}. Furthermore, the dual code $(\mathcal{C}_{D})^{\perp}$ has parameters $(60,4^{53}2^6,4)$, which are new according to the database \cite{aydin2022updated}. We also note that no Plotkin-optimal codes exist for the above lengths and types, as verified from \cite{tang2025plotkin}.
\end{example}

\begin{corollary}\label{cor:projective2}
    If $|A|=1$ and $B\subsetneq C$ with $|B|<|C|-1$ in Corollary~\ref{corollary_projective}, then the code $\mathcal{C}_{D}$ in \eqref{C_D code} is a four-weight quaternary linear code of length $2^{|C|}-2^{|B|}$, type $4^{1}2^{|C|}$ and minimum Lee distance $2^{|C|}-2^{|B|+1}.$ Its Lee weight distribution is given in Table~\ref{tab:projective_3}. 
    \end{corollary}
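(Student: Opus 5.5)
We specialize Corollary \ref{corollary_projective} (and hence Table \ref{table3.1}) to $|A|=1$. The standing hypotheses of Theorem \ref{projective} still hold: $A\cap(B\cup C)=\emptyset$ and $B\subsetneq C$. So $|A\cup C|=|C|+1$ and $|A\cup B|=|B|+1$. The length becomes $(2^{1}-1)(2^{|C|}-2^{|B|})=2^{|C|}-2^{|B|}$.

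The plan is to substitute these values into each weight and frequency of Table \ref{table3.1}.
\begin{itemize}
\item The weight $2^{|A|-1}(2^{|C|}-2^{|B|})$ becomes $2^{|C|}-2^{|B|}$. Its frequency is $2(2^{1}-1)=2$.
\item The weight $2^{|A|}(2^{|C|}-2^{|B|})$ becomes $2^{|C|+1}-2^{|B|+1}$. Its frequency is $2^{2}-2^{2}+1=1$.
\item The weight $2^{|A|+|C|}-2^{|C|}$ becomes $2^{|C|}$. Its frequency is $2^{|C|-|B|}-1$.
\item The weight $2^{|A|-1}(2^{|C|+1}-2^{|B|})-2^{|C|}$ becomes $2^{|C|}-2^{|B|}$. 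This coincides with the first weight. Its frequency is $2(2^{|C|-|B|}-2^{|C|-|B|}-2+1)=-2$.
\item The weight $2^{|A|}(2^{|C|}-2^{|B|})-2^{|C|}$ becomes $2^{|C|}-2^{|B|+1}$.
\item The weight $(2^{|A|}-1)(2^{|C|}-2^{|B|})$ becomes $2^{|C|}-2^{|B|}$ again.
\end{itemize}

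A negative frequency is impossible, so the case $|A|=1$ probably sits on the boundary of some subcase in the proof of Theorem \ref{projective}. In particular, the subcases ``$\mathbf{c}\notin\Delta_A^\perp$ and $\mathbf{b}+\mathbf{c}\notin\Delta_A^\perp$'' with counts $2^m-2^{m-|A|+1}$ vanish when $|A|=1$. I would therefore not trust the raw substitution blindly. Instead I would re-derive the counts directly from the weight formula
\[
\wt_L(c_D(\mathbf{v}))=|D|-\tfrac12\bigl(2^{|B|}\psi(\mathbf{b}\mid B)+2^{|C|}\psi(\mathbf{b}\mid C)-2^{|B|+1}\psi(\mathbf{b}\mid B)\bigr)\bigl(2\psi(\mathbf{c}\mid A)+2\psi(\mathbf{b}+\mathbf{c}\mid A)-2\bigr),
\]
using $B\cap C=B$. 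The second factor can only be $-2$, $0$ or $2$, and the first factor can only be $2^{|C|}-2^{|B|}$, $-2^{|B|}$ or $0$. This gives the possible nonzero weights
\begin{align*}
2^{|C|}-2^{|B|}\pm(2^{|C|}-2^{|B|}) &\in \{0,\ 2^{|C|+1}-2^{|B|+1}\},\\
2^{|C|}-2^{|B|}\pm 2^{|B|} &\in \{2^{|C|},\ 2^{|C|}-2^{|B|+1}\},\\
&\phantom{\in}\ 2^{|C|}-2^{|B|},
\end{align*}
so the code has exactly four nonzero weights.

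The type will come from the Remark after Table \ref{table3.1}. The frequency of $0$ must be recomputed, and I expect the zero codewords to number $2^{|A|}=2$ in the normalized count. That gives type $4^{1}2^{|C|}$, matching the statement.

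The frequencies will be obtained by counting pairs $(\mathbf{b},\mathbf{c})$ in each cell and normalizing by the number of $\mathbf{v}$ giving the zero codeword. The cells are:
\begin{itemize}
\item $\mathbf{b}\in\Delta_C^\perp$, or $\mathbf{b}\in\Delta_B^\perp\setminus\Delta_C^\perp$;
\item $\mathbf{b}$ with $\mathbf{b}_A$ zero or nonzero;
\item $\mathbf{c}_A$ zero or nonzero.
\end{itemize}
For $|A|=1$, the condition $\mathbf{c}\notin\Delta_A^\perp$ and $\mathbf{b}+\mathbf{c}\notin\Delta_A^\perp$ forces $\mathbf{b}_A=0$, which simplifies the cases.

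The minimum distance is then the smallest of the four weights, namely $2^{|C|}-2^{|B|+1}$. It is positive because $|B|<|C|-1$. The hypothesis $|B|<|C|-1$ is used precisely here: it keeps this weight from collapsing to $0$, which would change the type and the weight count. This also agrees with Corollary \ref{corollary_projective}: the condition $2^{|C|}-2^{|B|}\le 2^{|C|}$ always holds, so its first case gives $2(2^{|C|}-2^{|B|})-2^{|C|}$. The main obstacle is the careful recounting of frequencies in the degenerate $|A|=1$ cells, where the table's general formulas break down.
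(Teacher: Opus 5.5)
\textbf{There is a genuine gap, and it starts with an arithmetic slip.}

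In the fourth row of Table~\ref{table3.1} you evaluated $2^{|A\cup C|-|B|}$ as $2^{|C|-|B|}$. With $|A|=1$ and $A\cap C=\emptyset$ it is actually $2^{|C|-|B|+1}$. The frequency is therefore $2(2^{|C|-|B|+1}-2^{|C|-|B|}-2+1)=2(2^{|C|-|B|}-1)$, which is positive.

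Nothing in the proof of Theorem~\ref{projective} degenerates at $|A|=1$. The subcases with $\#\mathbf{c}=2^{m}-2^{m-|A|+1}$ are indeed empty, and the general formula correctly records $0$ for them.

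Direct substitution into Table~\ref{table3.1} is exactly how this corollary follows from Corollary~\ref{corollary_projective}, and it finishes the job:
\begin{itemize}
\item Three rows have weight $2^{|C|}-2^{|B|}$. Their frequencies are $2$, $2(2^{|C|-|B|}-1)$ and $2^{|C|+2}-2^{|C|-|B|+2}$. These sum to $2^{|C|+2}-2^{|C|-|B|+1}$.
\item The fifth row's frequency simplifies to $2^{|C|-|B|}-1$.
\item The remaining rows give $1$ and $2^{|C|-|B|}-1$.
\end{itemize}
This is precisely Table~\ref{tab:projective_3}.

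\textbf{What your proposal does and does not establish.}

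Because you abandoned that route, you never establish any of the frequencies. The recount over cells of $(\mathbf{b},\mathbf{c})$ is announced but not carried out. So the weight distribution, which is the main content of the statement, remains unproved.

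Your derivation of the four possible nonzero weights from the two factors is correct. So is your minimum-distance argument.

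The type argument is wrong as written. After normalization the zero codeword has frequency $1$, not $2$. By the Remark after Table~\ref{table3.1}, a zero-frequency $2^{j}$ with $j=1$ would give type $4^{1}2^{|C|-1}$, not $4^{1}2^{|C|}$.

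The correct reasoning is as follows. For $|B|<|C|-1$, none of the six weights in Table~\ref{table3.1} vanishes; the smallest is $2^{|C|}-2^{|B|+1}>0$. Hence $j=0$. Equivalently, exactly $2^{2m-|A|-|A\cup B\cup C|}=2^{2m-|C|-2}$ vectors $\mathbf{v}$ map to $\mathbf{0}$, so there are $2^{|C|+2}$ codewords.

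Finally, the hypothesis $|B|<|C|-1$ does more than keep $2^{|C|}-2^{|B|+1}$ away from $0$. It also guarantees $2^{|C|}\neq 2(2^{|C|}-2^{|B|})$; these two weights coincide exactly when $|B|=|C|-1$. So it is also what makes the four nonzero weights distinct.
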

    \begin{table}[H]
\centering
\begin{tabular}{l|l}
\hline
Weight & Frequency \\
\hline
$0$ & $1$ \\
\hline
$2^{|C|}-2^{|B|}$ & $2^{|C|+2}-2^{|C|-|B|+1}$ \\
\hline
$2(2^{|C|}-2^{|B|})$ & $1$\\
\hline
$2^{|C|}$ & $2^{|C|-|B|}-1$ \\
\hline
$2^{|C|}-2^{|B|+1}$ & $2^{|C|-|B|}-1$ \\
\hline

\end{tabular}
\caption{Lee weight distribution of Corollary~\ref{cor:projective2}}
\label{tab:projective_3}
\end{table}
\begin{corollary}\label{cor:plotkin-defect1}
    If $B=\emptyset$ in Corollary~\ref{cor:projective2}, the code $\mathcal{C}_{D}$ in \eqref{C_D code} is at most a four-weight quaternary linear code of length $2^{|C|}-1$, type $4^{1}2^{|C|}$ and minimum Lee distance $2^{|C|}-2.$ Moreover, $\mathcal{C}_{D}$ has a Plotkin-defect $1$. So, codes in this family are at least almost optimal.
\end{corollary}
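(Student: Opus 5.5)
Set $B=\emptyset$ in Corollary~\ref{cor:projective2}. Its hypotheses are met: $B\subsetneq C$ and $|B|=0<|C|-1$ need $|C|\ge 2$, which is implicit in applying that corollary. Write $n=|C|$. Then the length is $2^{n}-2^{0}=2^n-1$ and the type is $4^{1}2^{n}$, so $\#\mathcal{C}_D=2^{n+2}$. Substituting $|B|=0$ into Table~\ref{tab:projective_3} gives the nonzero weights:
\begin{itemize}
\item $2^n-1$ with frequency $2^{n+2}-2^{n+1}=2^{n+1}$;
\item $2(2^n-1)$ with frequency $1$;
\item $2^n$ with frequency $2^n-1$;
\item $2^n-2$ with frequency $2^n-1$.
\end{itemize}
As a check, $1+2^{n+1}+1+2(2^n-1)=2^{n+2}$. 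Since all frequencies are positive for $n\ge 2$ there are at most four weights, and the minimum is $d_L=2^n-2$, as Corollary~\ref{cor:projective2} already states.

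Next I compute the Plotkin-defect. By Lemma~\ref{lem 2.1}, with $k=\#\mathcal{C}_D=2^{n+2}$ (the paper writes the size as $k$), the bound is
\[
\frac{kN}{k-1}=(2^n-1)\left(1+\frac{1}{2^{n+2}-1}\right)=2^n-1+\frac{2^n-1}{2^{n+2}-1}.
\]
The fractional term lies in $(0,1)$ for $n\ge1$, so $\lfloor kN/(k-1)\rfloor=2^n-1$. The Plotkin-defect is therefore $(2^n-1)-(2^n-2)=1$.

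The final claim follows from the remark after the definition of Plotkin-defect. If a Plotkin-optimal code of this length and type exists, a defect of $1$ means our code is almost optimal. Otherwise the best possible distance is at most $2^n-2$, so our code is optimal. Either way the family is at least almost optimal.

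The only real obstacle is reading Lemma~\ref{lem 2.1} correctly: $k$ must be the code size $2^{n+2}$, not the rank. I would sanity-check this on the smallest case $|C|=2$: length $3$, size $16$, bound $\lfloor 48/15\rfloor=3$, and $d_L=2$, so the defect is $1$.
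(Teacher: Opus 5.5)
Your proposal is correct and is essentially the paper's own (implicit) argument: you specialize Corollary~\ref{cor:projective2} to $|B|=0$, which forces $|C|\ge 2$, and evaluate the bound of Lemma~\ref{lem 2.1} with $k=2^{|C|+2}$ the code size, getting $\lfloor kN/(k-1)\rfloor=2^{|C|}-1$ and hence Plotkin-defect $1$. One cosmetic point: ``at most four weights'' follows simply from the table having four nonzero weight values, whereas your positivity observation, together with the distinctness of $2^{|C|}-2,\,2^{|C|}-1,\,2^{|C|},\,2^{|C|+1}-2$ for $|C|\ge 2$, actually shows there are exactly four.
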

\begin{example}
    This example illustrates Corollary \ref{cor:plotkin-defect1}. Let $m=6, A=\{5\}, B=\emptyset$ and $C=\{1,2,3,4,6\}$. Then $\mathcal{C}_{D}$ is a projective four-weight code of length $31$, type $4^1 2^5$ and minimum Lee distance $30$ with Lee weight enumerator $y^{62}+31 x^{30}y^{32} + 64 x^{31}y^{31}+31 x^{32}y^{30}+x^{62}$ as verified using Magma. Moreover, $\mathcal{C}_{D}$ has the same parameters as the best-known quaternary code $(31,4^1 2^5,30)$ reported in the database \cite{aydin2022updated}. We must note that $\mathcal{C}_{D}$ has Plotkin-defect 1 and no Plotkin-optimal code of length $31$ and type $4^{1}2^{5}$ exists, as verified from \cite{tang2025plotkin}, hence $\mathcal{C}_{D}$ is optimal. Furthermore, the dual $\mathcal{C}_{D}^{\perp}$ also has the same parameters as the best-known quaternary code $(31,4^{25}2^{5}, 4)$ reported in the database \cite{aydin2022updated}.
\end{example}

\begin{remark}\rm 
    We analyze the code parameters obtained from Theorem \ref{projective}. Note that we report a $(n,4^{k_1}2^{k_2},d_L)$ quaternary linear code (up to length 128) in Table \ref{tableprojective} only if (i) no Plotkin-optimal code of length $n$ and type $4^{k_1}2^{k_2}$ exists according to \cite[Theorems 31, 41]{tang2025plotkin} and (ii) either no code of length $n$ and type $4^{k_1}2^{k_2}$ is listed in the database \cite{aydin2022updated} (in this case, we call it new), or the best-known quaternary code of length $n$ and type $4^{k_1}2^{k_2}$ reported in the database has a smaller Lee distance than our code with the same length and type (in this case, we call it improved).

 Additionally, when no Plotkin-optimal code of length $n$ and type $4^{k_1}2^{k_2}$ exists according to \cite{tang2025plotkin}, we report the code in Table \ref{tableprojective}, even if it matches the parameters of the best-known code of length $n$ and type $4^{k_1}2^{k_2}$ listed in the database \cite{aydin2022updated}. The reason for reporting best-known (other than new or improved) codes in this case is that these codes are projective. In contrast, (i) all the quaternary codes obtained from simplicial complexes in the literature are non-projective, and (ii) no explicit information about the projectivity of the best-known quaternary codes listed in the database is available. Therefore, even when our codes have the same parameters as the currently reported best-known codes, they may still represent an improvement due to their projectivity. We note that for the best-known quaternary codes listed in the database \cite{aydin2022updated}, we verified their projectivity from the generator matrix (if available) and only reported those cases where either the best-known code is non-projective, or a generator matrix is not available.

 We emphasize that there may exist additional new codes that cannot be reported here due to the length limitation of the database \cite{aydin2022updated}.
\end{remark}

\begin{table}[H]
    \centering
    \resizebox{\textwidth}{!}{
    \begin{tabular}
    {|c|c|c|c|c|c|c|c|c|}
    \hline
         Ref.&$m$ & $A$ & $B$ & $C$ & Length & Type & $d_L$ &Remark 
         \\
         \hline
           Cor \ref{corollary_projective}&$5$ & $\{3\}$ & $\{1,4\}$ & $\{1,2,4,5\}$ & $12$ & $4^1 2^4$ & $8$ & new \\
           &&&&&&&&Plotkin-defect $=4$ \\
           \hline
           Thm \ref{projective}&$6$ & $\{5\}$ & $\{1,2,3\}$ & $\{2,4,6\}$ & $12$ & $4^1 2^5$ & $8$ & improved; $d_L^{best}=6$ \cite{aydin2022updated}\\
         &&&&&&&&Plotkin-defect $=4$ \\
           \hline
            Cor \ref{corollary_projective}&$5$ & $\{1,4\}$ & $\{2,3\}$ & $\{2,3,5\}$ & $12$ & $4^2 2^3$ & $8$ & improved; $d_L^{best}=4$ \cite{aydin2022updated}\\
         &&&&&&&&Plotkin-defect $=4$ \\
         \hline
         
            
            Thm \ref{projective}&$7$ & $\{3,4\}$ & $\{1,2,5\}$ & $\{1,5,7\}$ & $24$ & $4^2 2^4$ & $16$ & improved\\
         &&&&&&&&$d_L^{best}=8$ \cite{aydin2022updated}\\
           \hline
            Cor \ref{corollary_projective}&$6$ & $\{5\}$ & $\{1,2\}$ & $\{1,2,3,4,6\}$ & $28$ & $4^1 2^5$ & $24$ & new \\
            &&&&&&&&Plotkin-defect $=4$ \\
           \hline
           
           Cor \ref{corollary_projective}&$6$ & $\{1,2\}$ & $\{3,5\}$ & $\{3,4,5,6\}$ & $36$ & $4^2 2^4$ & $24$ & improved\\
          &&&&&&&&$d_L^{best}=18$ \cite{aydin2022updated} \\
           \hline
            Thm \ref{projective}&$7$ & $\{3,4\}$ & $\{1,2,5,6\}$ & $\{1,2,5,7\}$ & $48$ & $4^2 2^5$ & $32$ & new\\
           \hline
             Cor \ref{corollary_projective}&$7$ & $\{3\}$ & $\{6,7\}$ & $\{1,2,4,5,6,7\}$ & $60$ & $4^1 2^6$ & $56$ & improved; $d_L^{best}=52$ \cite{aydin2022updated} \\
          &&&&&&&&Plotkin-defect $=4$ \\
           \hline
            Cor \ref{corollary_projective}&$7$ & $\{3,4\}$ & $\{1,2,5\}$ & $\{1,2,5,6,7\}$ & $72$ & $4^2 2^5$ & $48$ & new\\
           \hline
            Cor \ref{corollary_projective}&$7$ & $\{3,4\}$ & $\{6,7\}$ & $\{1,2,5,6,7\}$ & $84$ & $4^2 2^5$ & $56$ & new \\
           \hline\hline
            Thm \ref{projective}&$4$ & $\{2,3\}$ & $\{4\}$ & $\{1\}$ & $6$ & $4^2 2^2$ & $4$ & best-known; $d_L^{best}=4$  \cite{aydin2022updated}\\
         &&&&&&&&Plotkin-defect $=2$\\
           \hline
           Cor \ref{corollary_projective}&$4$ & $\{4\}$ & $\emptyset$ & $\{1,2,3\}$ & $7$ & $4^1 2^3$ & $6$ & best-known \& optimal; $d_L^{best}=6$ \cite{aydin2022updated}\\
         &&&&&&&&Plotkin-defect $=1$\\
           \hline
           Cor \ref{corollary_projective}&$5$ & $\{3\}$ & $\{4\}$ & $\{1,2,4,5\}$ & $14$ & $4^1 2^4$ & $12$ & best-known; $d_L^{best}=12$ \cite{aydin2022updated}\\
         &&&&&&&&Plotkin-defect $=2$\\
         \hline
         Cor \ref{corollary_projective}&$5$ & $\{3\}$ & $\emptyset$ & $\{1,2,4,5\}$ & $15$ & $4^1 2^4$ & $14$ & best-known \& optimal; $d_L^{best}=14$ \cite{aydin2022updated} \\
         &&&&&&&&Plotkin-defect $=1$\\
         \hline
           Cor \ref{corollary_projective}&$6$ & $\{5\}$ & $\emptyset$ & $\{1,2,3,4,6\}$ & $31$ & $4^1 2^5$ & $30$ & best-known \& optimal; $d_L^{best}=30$ \cite{aydin2022updated}\\
          &&&&&&&& Plotkin-defect $=1$\\
           \hline
    \end{tabular}
    }
    \begin{tablenotes}
            \item[] $d_{L}^{best}$: best-known Lee distance for the given length and type listed in the database \cite{aydin2022updated}
        \end{tablenotes}
    \caption{\textbf{10 new or improved and 5 best-known projective} quaternary linear codes from Theorem \ref{projective}}
    \label{tableprojective}
\end{table}

\begin{theorem}\label{maintheorem2}
    Let $m\geq 2$ be a positive integer and $A,B,C \subseteq [m]$ such that $\Delta_{B,C} \neq \mathbb{Z}_{2}^{m}$. If $D = \Delta_{A}+ 2 (\Delta_{B,C})^{c},$ then the code $\mathcal{C}_{D}$ defined in \eqref{C_D code} is an at most nine-weight quaternary linear code of length $2^{|A|} (2^{m}-2^{|B|}-2^{|C|}+2^{|B \cap C|})$ and type $4^{|A|}2^{m-|A|}$. The corresponding Lee weight distribution is displayed in Table \ref{table4}.
\end{theorem}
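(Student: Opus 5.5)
We follow the template of the proof of Theorem \ref{maintheorem1}, taking $D_1=\Delta_A$ and $D_2=(\Delta_{B,C})^{c}=\mathbb{Z}_2^m\setminus(\Delta_B\cup\Delta_C)$. The length is $|D|=2^{|A|}\,|D_2|$. Lemma \ref{lemma_cardinality} gives $|\Delta_{B,C}|=2^{|B|}+2^{|C|}-2^{|B\cap C|}$, so $|D_2|=2^m-2^{|B|}-2^{|C|}+2^{|B\cap C|}$. The sum over $D_1$ is handled by Eq. \eqref{boolean_relation} exactly as before, giving $2^{|A|}\psi(\mathbf{c}\mid A)$ and $2^{|A|}\psi(\mathbf{b}+\mathbf{c}\mid A)$.

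For $D_2$ we use inclusion–exclusion:
\[
\sum_{\mathbf{d}_2\in D_2}(-1)^{\mathbf{b}\mathbf{d}_2}=2^m\delta_{\mathbf{b},\mathbf{0}}-2^{|B|}\psi(\mathbf{b}\mid B)-2^{|C|}\psi(\mathbf{b}\mid C)+2^{|B\cap C|}\psi(\mathbf{b}\mid B\cap C).
\]
Here $\delta_{\mathbf{b},\mathbf{0}}=\psi(\mathbf{b}\mid[m])$. Substituting into Eq. \eqref{weight_expession} gives
\[
\wt_L(c_D(\mathbf{v}))=|D|-2^{|A|-1}\,S(\mathbf{b})\,\bigl(\psi(\mathbf{c}\mid A)+\psi(\mathbf{b}+\mathbf{c}\mid A)\bigr),
\]
where $S(\mathbf{b})$ denotes the bracket above.

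Next we split into cases according to $\mathbf{b}$:
\begin{enumerate}
\item[(0)] $\mathbf{b}=\mathbf{0}$, so $S=|D_2|$;
\item[(1)] $\mathbf{0}\neq\mathbf{b}\in\Delta_B^\perp\cap\Delta_C^\perp$, so $S=-2^{|B|}-2^{|C|}+2^{|B\cap C|}$;
\item[(2)] $\mathbf{b}\in\Delta_B^\perp\setminus\Delta_C^\perp$, so $S=-2^{|B|}+2^{|B\cap C|}$;
\item[(3)] the case symmetric to (2), with $B$ and $C$ interchanged;
\item[(4)] $\mathbf{b}\in\Delta_{B\cap C}^\perp$ but in neither $\Delta_B^\perp$ nor $\Delta_C^\perp$, so $S=2^{|B\cap C|}$;
\item[(5)] $\mathbf{b}\notin\Delta_{B\cap C}^\perp$, so $S=0$ and the weight is $|D|$.
\end{enumerate}
Within each case we split further on whether $\mathbf{c}\in\Delta_A^\perp$ and whether $\mathbf{b}+\mathbf{c}\in\Delta_A^\perp$. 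The factor $\psi(\mathbf{c}\mid A)+\psi(\mathbf{b}+\mathbf{c}\mid A)$ then takes the values $2$, $1$ or $0$. The value $2$ forces $\mathbf{b}\in\Delta_A^\perp$.

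Because $S$ is now negative in several cases, some weights exceed $|D|$ (for example $|D|+2^{|A|}(2^{|B|}+2^{|C|}-2^{|B\cap C|})$ in case (1) with factor $2$). Collecting the distinct values, we expect at most nine nonzero weights: one from case (0) with factor $1$, two each from cases (1)–(4) with factors $2$ and $1$, and $|D|$ itself. Some of these coincide, which is why the statement says ``at most''.

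For the frequencies, we count pairs $(\mathbf{b},\mathbf{c})$ using $|\Delta_X^\perp\cap\Delta_Y^\perp|=2^{m-|X\cup Y|}$ and inclusion–exclusion, exactly as in Theorem \ref{maintheorem1}. In each case the counts for $\mathbf{c}$ are $2^{m-|A|}$, $2^{m-|A|}$, or the complement. The one new ingredient is that the $\mathbf{b}=\mathbf{0}$ case must be separated from case (1). Weight $0$ occurs only for $\mathbf{b}=\mathbf{0}$ and $\mathbf{c}\in\Delta_A^\perp$, because $S(\mathbf{b})\neq 0$ then requires $S\cdot 2=2|D_2|$. This gives frequency $2^{m-|A|}$, so each count is divided by $2^{m-|A|}$. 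Hence $|\mathcal{C}_D|=2^{m+|A|}$, consistent with type $4^{|A|}2^{m-|A|}$. The frequency of $|D|$ is obtained, as before, by subtracting the other counts from $2^{2m}$.

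The main obstacle is verifying the claimed type and kernel, namely that no nonzero $\mathbf{v}$ with $\mathbf{b}\neq\mathbf{0}$ yields weight $0$. This requires $|D|-2^{|A|}S(\mathbf{b})\neq0$ and $|D|-2^{|A|-1}S(\mathbf{b})\neq0$ in cases (1)–(4). In those cases $S\le 2^{|B\cap C|}$, and under the hypothesis $\Delta_{B,C}\neq\mathbb{Z}_2^m$ we have $B,C\neq[m]$. From this we will show $|D_2|>2^{|B\cap C|}$, or handle directly any small boundary cases where equality could occur. For the type, we argue as in Theorem \ref{maintheorem1}: the generating matrix has exactly $|A|$ rows with a unit entry, since $D_2$ together with $\Delta_A$ spans the remaining coordinates. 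Comparing with $|\mathcal{C}_D|=2^{m+|A|}$ then forces $k_1=|A|$ and $k_2=m-|A|$.
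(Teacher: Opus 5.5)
\textbf{There is a genuine gap, at exactly the step you single out as the main obstacle, and it cannot be closed because the statement is false there.} Otherwise your route is the paper's own: the same inclusion--exclusion over $\mathbb{Z}_2^m,\Delta_B,\Delta_C,\Delta_{B\cap C}$, the same split on $\mathbf{b}$ with $\mathbf{b}=\mathbf{0}$ separated, and the same normalization by the weight-$0$ count.

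From $B,C\neq[m]$ you only get $|D_2|-2^{|B\cap C|}=2^m-2^{|B|}-2^{|C|}\ge 0$, with equality precisely when $|B|=|C|=m-1$. In that case write $[m]\setminus B=\{i\}$ and $[m]\setminus C=\{j\}$ with $i\neq j$, and suppose $A\subseteq B\cap C$.
\begin{itemize}
\item Every $\mathbf{d}_2\in(\Delta_{B,C})^c$ has both $i$ and $j$ in its support.
\item No $\mathbf{d}_1\in\Delta_A$ has $i$ or $j$ in its support.
\item So every $\mathbf{d}\in D$ has entry $2$ in positions $i$ and $j$, and $\mathbf{v}=\mathbf{e}_i+\mathbf{e}_j\neq\mathbf{0}$ satisfies $c_D(\mathbf{v})=\mathbf{0}$.
\end{itemize}
This is your case (4) with factor $2$. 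In Table~\ref{table4} it shows up as $w_7=0$ with frequency $1$.

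Consequently the kernel has $2^{m-|A|+1}$ elements and $|\mathcal{C}_D|=2^{m+|A|-1}$. The type is $4^{|A|}2^{m-|A|-1}$, and every frequency in Table~\ref{table4} must be halved, with $w_7$ merged into weight $0$. Two concrete instances:
\begin{itemize}
\item $m=2$, $A=\emptyset$, $B=\{1\}$, $C=\{2\}$ gives $D=\{(2,2)\}$ and $\mathcal{C}_D=\{0,2\}$, of type $2^1$, not $2^2$.
\item $m=3$, $A=\{3\}$, $B=\{2,3\}$, $C=\{1,3\}$ gives $8$ codewords instead of $16$: Lee weight $0$ once, $4$ six times, $8$ once.
\end{itemize}
The paper's proof has the same hole, since it simply asserts that weight $0$ occurs $2^{m-|A|}$ times.

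What your argument does prove is the theorem with this configuration excluded, for example under the extra hypothesis $2^{|B|}+2^{|C|}<2^m$. Then $S<|D_2|$ for every $\mathbf{b}\neq\mathbf{0}$. Hence $\wt_L(c_D(\mathbf{v}))\ge 2^{|A|}(|D_2|-\max\{S,0\})>0$, and only $\mathbf{b}=\mathbf{0}$, $\mathbf{c}\in\Delta_A^\perp$ gives weight $0$.

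Two smaller points.
\begin{itemize}
\item Your weight tally is off. When $\mathbf{b}=\mathbf{0}$ the vectors $\mathbf{c}$ and $\mathbf{b}+\mathbf{c}$ coincide, so the factor is $0$ or $2$, never $1$. Case (0) therefore contributes only the weights $0$ and $|D|$. Your count (one from case (0), eight from cases (1)--(4), plus $|D|$) would give ten. The nine nonzero weights are the eight from cases (1)--(4) together with $|D|$.
\item For the type, the row-counting argument borrowed from Theorem~\ref{maintheorem1} is only heuristic. It is cleaner to note that $c_D(\mathbf{b}+2\mathbf{c})$ reduces mod $2$ to $(\mathbf{b}\mathbf{d}_1)_{\mathbf{d}\in D}$, and that $\mathbf{b}\mapsto(\mathbf{b}\mathbf{d}_1)_{\mathbf{d}_1\in\Delta_A}$ has kernel $\Delta_A^\perp$. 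So the residue code has dimension $k_1=|A|$, and $k_2=\log_2|\mathcal{C}_D|-2|A|$. This equals $m-|A|$ exactly when the kernel is $\{2\mathbf{c}:\mathbf{c}\in\Delta_A^\perp\}$.
\end{itemize}
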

\begin{proof}
     Here $D_{1}=\Delta_{A}$ and $D_{2} = (\Delta_{B,C})^{c}.$ 
\[
    \sum\limits_{\mathbf{d}_{2}\in D_{2}} (-1)^{\mathbf{b}\mathbf{d}_{2}} = \sum\limits_{\mathbf{d}_{2}\in \mathbb{Z}_{2}^{m}} (-1)^{\mathbf{b}\mathbf{d}_{2}} -\sum\limits_{\mathbf{d}_{2}\in \Delta_{B}} (-1)^{\mathbf{b}\mathbf{d}_{2}}-\sum\limits_{\mathbf{d}_{2}\in \Delta_{C}} (-1)^{\mathbf{b}\mathbf{d}_{2}}+\sum\limits_{\mathbf{d}_{2}\in \Delta_{B\cap C}} (-1)^{\mathbf{b}\mathbf{d}_{2}}.
\]
Then by Eq. \eqref{boolean_relation} and Lemma \ref{dual_psi}, we have
\[
\sum\limits_{\mathbf{d}_{2}\in D_{2}} (-1)^{\mathbf{b}\mathbf{d}_{2}} = 
\begin{cases}
2^{m}-2^{|B|}-2^{|C|}+2^{|B\cap C|}, & \text{if \;} \mathbf{b} = \mathbf{0}, \\
-2^{|B|}\psi(\mathbf{b}\mid B)-2^{|C|}\psi(\mathbf{b}\mid C)+2^{|B\cap C|}\psi(\mathbf{b}\mid B\cap C), & \text{if \;} \mathbf{b}\neq \mathbf{0}.
\end{cases}
\]
Using Eq. \eqref{weight_expession}, we get
\[
 \wt_{L}(c_{D}(\mathbf{v})) =
\begin{cases}
|D|-|D|\psi(\mathbf{c}\mid A), & \text{if \;} \mathbf{b}=\mathbf{0}, \\
\begin{aligned}
|D| - 2^{|A|-1} \Big(
&-2^{|B|}\psi(\mathbf{b}\mid B)
- 2^{|C|}\psi(\mathbf{b}\mid C) \\
&+ 2^{|B\cap C|}\psi(\mathbf{b}\mid B\cap C)
\Big)
(\psi(\mathbf{c}\mid A) + \psi(\mathbf{b}+\mathbf{c}\mid A)),
\end{aligned}
& \text{if \;} \mathbf{b}\neq \mathbf{0}.
\end{cases}
\]

There are $6$ possible cases which are as follows:
\begin{enumerate}
    \item[(1)] $\mathbf{b}=\mathbf{0}$. Then 
    \[
    \wt_{L}(c_{D}(\mathbf{v}))= |D|-|D|\psi(\mathbf{c}\mid A).
    \]
    \begin{itemize}
        \item If $\mathbf{c}\in \Delta_{A}^{\perp},$ then by Lemma \ref{dual_psi}, we have
         \[
    \wt_{L}(c_{D}(\mathbf{v})) = 0.
    \]
    In this case, 
    \begin{eqnarray*}
        \# \mathbf{b} &=& 1 \\
    \# \mathbf{c} &=& |\Delta_{A}^{\perp}| \\
    &=& 2^{m-|A|}.
    \end{eqnarray*}
    
    \item  If $\mathbf{c}\notin \Delta_{A}^{\perp},$ then by Lemma \ref{dual_psi}, we have
         \[
    \wt_{L}(c_{D}(\mathbf{v})) = |D|= 2^{|A|}(2^{m}-2^{|B|}-2^{|C|}+2^{|B \cap C|}).
    \]
    \end{itemize}

    \item[(2)]  $\mathbf{b}\neq \mathbf{0}, \mathbf{b}\in \Delta_{B}^{\perp}$ and $\mathbf{b}\in \Delta_{C}^{\perp}$. Then by Lemma \ref{dual_psi}, we get
    \[
\wt_{L}(c_{D}(\mathbf{v})) = |D|-2^{|A|-1}\left( -2^{|B|}-2^{|C|}+2^{|B \cap C|} \right) \left( \psi(\mathbf{c}\mid A)+ \psi(\mathbf{b}+\mathbf{c}\mid A) \right).
\]
\begin{itemize}
    \item If $\mathbf{c} \in \Delta_{A}^{\perp}$ and $\mathbf{b}+\mathbf{c} \in \Delta_{A}^{\perp}$, then by Lemma \ref{dual_psi}, we have 
    \[
    \wt_{L}(c_{D}(\mathbf{v})) = 2^{m+|A|}.
    \]
    In this case, 
    \begin{eqnarray*}
        \# \mathbf{b}
        &=& 2^{m-|A\cup B \cup C|}-1 \\
    \# \mathbf{c}
    &=& 2^{m-|A|}.
    \end{eqnarray*}
    
    \item If $\mathbf{c} \in \Delta_{A}^{\perp}$ and $\mathbf{b}+\mathbf{c} \notin \Delta_{A}^{\perp}$, then by Lemma \ref{dual_psi}, we have 
    \[
    \wt_{L}(c_{D}(\mathbf{v})) = 2^{|A|-1}(2^{m+1}-2^{|B|}-2^{|C|}+2^{|B \cap C|}).
    \]
    In this case, 
    \begin{eqnarray*}
        \# \mathbf{b} 
        &=& 2^{m-|B \cup C|}-2^{m-|A\cup B \cup C|} \\
    \# \mathbf{c}
    &=& 2^{m-|A|}.
    \end{eqnarray*}

    \item If $\mathbf{c} \notin \Delta_{A}^{\perp}$ and $\mathbf{b}+\mathbf{c} \in \Delta_{A}^{\perp}$, then by Lemma \ref{dual_psi}, we have 
    \[
    \wt_{L}(c_{D}(\mathbf{v})) = 2^{|A|-1}(2^{m+1}-2^{|B|}-2^{|C|}+2^{|B \cap C|}).
    \]
    In this case, 
    \begin{eqnarray*}
        \# \mathbf{b} 
        &=& 2^{m-|B \cup C|}-2^{m-|A\cup B \cup C|} \\
    \# \mathbf{c}
    &=& 2^{m-|A|}.
    \end{eqnarray*}

    \item If $\mathbf{c} \notin \Delta_{A}^{\perp}$ and $\mathbf{b}+\mathbf{c} \notin \Delta_{A}^{\perp}$, then by Lemma \ref{dual_psi}, we have 
    \[
    \wt_{L}(c_{D}(\mathbf{v})) = |D| = 2^{|A|}(2^{m}-2^{|B|}-2^{|C|}+2^{|B \cap C|}).
    \]
    \end{itemize} 

    \item[(3)] $\mathbf{b}\in \Delta_{B}^{\perp}$ and $\mathbf{b}\notin \Delta_{C}^{\perp}$. Then by Lemma \ref{dual_psi}, we get
    \[
\wt_{L}(c_{D}(\mathbf{v})) = |D|-2^{|A|-1}\left(-2^{|B|}+2^{|B \cap C|} \right) \left( \psi(\mathbf{c}\mid A)+ \psi(\mathbf{b}+\mathbf{c}\mid A) \right).
\]
\begin{itemize}
    \item If $\mathbf{c} \in \Delta_{A}^{\perp}$ and $\mathbf{b}+\mathbf{c} \in \Delta_{A}^{\perp}$, then by Lemma \ref{dual_psi}, we have 
    \[
    \wt_{L}(c_{D}(\mathbf{v})) = 2^{|A|}(2^{m}-2^{|C|}).
    \]
    In this case, 
    \begin{eqnarray*}
        \# \mathbf{b} 
        &=& 2^{m-|A\cup B|}-2^{m-|A\cup B \cup C|} \\
    \# \mathbf{c}
    &=& 2^{m-|A|}.
    \end{eqnarray*}
    
    \item If $\mathbf{c} \in \Delta_{A}^{\perp}$ and $\mathbf{b}+\mathbf{c} \notin \Delta_{A}^{\perp}$, then by Lemma \ref{dual_psi}, we have 
    \[
    \wt_{L}(c_{D}(\mathbf{v})) = 2^{|A|-1}(2^{m+1}-2^{|B|}-2^{|C|+1}+2^{|B \cap C|}).
    \]
    In this case, 
    \begin{eqnarray*}
        \# \mathbf{b}
        &=& 2^{m-|B|}-2^{m-|A\cup B|}-2^{m-|B \cup C|}+2^{m-|A\cup B \cup C|} \\
    \# \mathbf{c} &=& 2^{m-|A|}.
    \end{eqnarray*}

    \item If $\mathbf{c} \notin \Delta_{A}^{\perp}$ and $\mathbf{b}+\mathbf{c} \in \Delta_{A}^{\perp}$, then by Lemma \ref{dual_psi}, we have 
    \[
    \wt_{L}(c_{D}(\mathbf{v})) = 2^{|A|-1}(2^{m+1}-2^{|B|}-2^{|C|+1}+2^{|B \cap C|}).
    \]
    In this case, 
    \begin{eqnarray*}
        \# \mathbf{b}
        &=& 2^{m-|B|}-2^{m-|A\cup B|}-2^{m-|B \cup C|}+2^{m-|A\cup B \cup C|} \\
    \# \mathbf{c} &=& 2^{m-|A|}.
    \end{eqnarray*}

    \item If $\mathbf{c} \notin \Delta_{A}^{\perp}$ and $\mathbf{b}+\mathbf{c} \notin \Delta_{A}^{\perp}$, then by Lemma \ref{dual_psi}, we have 
    \[
    \wt_{L}(c_{D}(\mathbf{v})) = |D| = 2^{|A|}(2^{m}-2^{|B|}-2^{|C|}+2^{|B \cap C|}).
    \]    
\end{itemize}

\item[(4)] $\mathbf{b}\notin \Delta_{B}^{\perp}$ and $\mathbf{b}\in \Delta_{C}^{\perp}$. Then by Lemma \ref{dual_psi}, we get
    \[
\wt_{L}(c_{D}(\mathbf{v})) = |D|-2^{|A|-1}\left(-2^{|C|}+2^{|B \cap C|} \right) \left( \psi(\mathbf{c}\mid A)+ \psi(\mathbf{b}+\mathbf{c}\mid A) \right).
\]
\begin{itemize}
    \item If $\mathbf{c} \in \Delta_{A}^{\perp}$ and $\mathbf{b}+\mathbf{c} \in \Delta_{A}^{\perp}$, then by Lemma \ref{dual_psi}, we have 
    \[
    \wt_{L}(c_{D}(\mathbf{v})) = 2^{|A|}(2^{m}-2^{|B|}).
    \]
    In this case, 
    \begin{eqnarray*}
        \# \mathbf{b}
        &=& 2^{m-|A\cup C|}-2^{m-|A\cup B \cup C|} \\
    \# \mathbf{c} &=& 2^{m-|A|}.
    \end{eqnarray*}
    
    \item If $\mathbf{c} \in \Delta_{A}^{\perp}$ and $\mathbf{b}+\mathbf{c} \notin \Delta_{A}^{\perp}$, then by Lemma \ref{dual_psi}, we have 
    \[
    \wt_{L}(c_{D}(\mathbf{v})) = 2^{|A|-1}(2^{m+1}-2^{|B|+1}-2^{|C|}+2^{|B \cap C|}).
    \]
    In this case, 
    \begin{eqnarray*}
        \# \mathbf{b} &=& 2^{m-|C|}-2^{m-|A\cup C|}-2^{m-|B \cup C|}+2^{m-|A\cup B \cup C|} \\
    \# \mathbf{c} &=& 2^{m-|A|}.
    \end{eqnarray*}

    \item If $\mathbf{c} \notin \Delta_{A}^{\perp}$ and $\mathbf{b}+\mathbf{c} \in \Delta_{A}^{\perp}$, then by Lemma \ref{dual_psi}, we have 
    \[
    \wt_{L}(c_{D}(\mathbf{v})) = 2^{|A|-1}(2^{m+1}-2^{|B|+1}-2^{|C|}+2^{|B \cap C|}).
    \]
    In this case, 
    \begin{eqnarray*}
        \# \mathbf{b} &=& 2^{m-|C|}-2^{m-|A\cup C|}-2^{m-|B \cup C|}+2^{m-|A\cup B \cup C|} \\
    \# \mathbf{c} &=& 2^{m-|A|}.
    \end{eqnarray*}

    \item If $\mathbf{c} \notin \Delta_{A}^{\perp}$ and $\mathbf{b}+\mathbf{c} \notin \Delta_{A}^{\perp}$, then by Lemma \ref{dual_psi}, we have 
    \[
    \wt_{L}(c_{D}(\mathbf{v})) = |D| = 2^{|A|}(2^{m}-2^{|B|}-2^{|C|}+2^{|B \cap C|}).
    \]    
\end{itemize}

\item[(5)] $\mathbf{b}\notin \Delta_{B}^{\perp},\mathbf{b}\notin \Delta_{C}^{\perp}$ and $\mathbf{b}\in\Delta_{B\cap C}^{\perp}$. Then by Lemma \ref{dual_psi}, we get
    \[
\wt_{L}(c_{D}(\mathbf{v})) = |D|-2^{|A|-1}\left( 2^{|B \cap C|} \right) \left( \psi(\mathbf{c}\mid A)+ \psi(\mathbf{b}+\mathbf{c}\mid A) \right).
\]
\begin{itemize}
    \item If $\mathbf{c} \in \Delta_{A}^{\perp}$ and $\mathbf{b}+\mathbf{c} \in \Delta_{A}^{\perp}$, then by Lemma \ref{dual_psi}, we have 
    \[
    \wt_{L}(c_{D}(\mathbf{v})) = 2^{|A|}(2^{m}-2^{|B|}-2^{|C|}).
    \]
    In this case, 
    \begin{eqnarray*}
        \# \mathbf{b} &=& 2^{m-|A\cup (B\cap C)|}-2^{m-|A\cup B|}-2^{m-|A \cup C|}+2^{m-|A\cup B \cup C|} \\
    \# \mathbf{c} &=& 2^{m-|A|}.
    \end{eqnarray*}
    
    \item If $\mathbf{c} \in \Delta_{A}^{\perp}$ and $\mathbf{b}+\mathbf{c} \notin \Delta_{A}^{\perp}$, then by Lemma \ref{dual_psi}, we have 
    \[
    \wt_{L}(c_{D}(\mathbf{v})) = 2^{|A|-1}(2^{m+1}-2^{|B|+1}-2^{|C|+1}+2^{|B \cap C|}).
    \]
    In this case, 
    \[
\begin{aligned}
\#\mathbf{b}
&= 2^{m-|B\cap C|}
 - 2^{m-|A\cup(B\cap C)|}
 - 2^{m-|B|}
 - 2^{m-|C|} + 2^{m-|A\cup B|} \\
&\quad
 + 2^{m-|A\cup C|}
 + 2^{m-|B\cup C|}
 - 2^{m-|A\cup B\cup C|} \\
\#\mathbf{c}
&= 2^{m-|A|}.
\end{aligned}
\]
    \item If $\mathbf{c} \notin \Delta_{A}^{\perp}$ and $\mathbf{b}+\mathbf{c} \in \Delta_{A}^{\perp}$, then by Lemma \ref{dual_psi}, we have 
    \[
    \wt_{L}(c_{D}(\mathbf{v})) = 2^{|A|-1}(2^{m+1}-2^{|B|+1}-2^{|C|+1}+2^{|B \cap C|}).
    \]
    In this case, 
    \[
\begin{aligned}
\#\mathbf{b}
&= 2^{m-|B\cap C|}
 - 2^{m-|A\cup(B\cap C)|}
 - 2^{m-|B|}
 - 2^{m-|C|} + 2^{m-|A\cup B|} \\
&\quad
 + 2^{m-|A\cup C|}
 + 2^{m-|B\cup C|}
 - 2^{m-|A\cup B\cup C|} \\
\#\mathbf{c}
&= 2^{m-|A|}.
\end{aligned}
\]
    
    \item If $\mathbf{c} \notin \Delta_{A}^{\perp}$ and $\mathbf{b}+\mathbf{c} \notin \Delta_{A}^{\perp}$, then by Lemma \ref{dual_psi}, we have 
    \[
    \wt_{L}(c_{D}(\mathbf{v})) = |D| = 2^{|A|}(2^{m}-2^{|B|}-2^{|C|}+2^{|B \cap C|}).
    \]    
\end{itemize}

\item[(6)] $\mathbf{b}\notin \Delta_{B\cap C}^{\perp}$. Then by Lemma \ref{dual_psi}, we get 
\[
\wt_{L}(c_{D}(\mathbf{v})) = |D| = 2^{|A|}(2^{m}-2^{|B|}-2^{|C|}+2^{|B\cap C|}).
\]
\end{enumerate}
In order to find the frequency of the weight 
\[
|D| = 2^{|A|}(2^{m}-2^{|B|}-2^{|C|}+2^{|B\cap C|}),
\]
we sum the frequencies of all remaining weights and subtract this sum from $2^{2m}$. Hence, the resulting frequency is 
\[
2^{2m}+2^{2m-|A|-|A\cup(B\cap C)|}-2^{2m-|A|-|B\cap C|+1}.
\]
Since the frequency corresponding to Lee weight $0$ is $2^{m-|A|}$, all the above frequencies must be divided by $2^{m-|A|}$ to obtain the Lee weight distribution. The total number of codewords is $2^{m+|A|}$, with parameters $k_{1} = |A|$ and $k_{2}=m-|A|.$ \qed
\end{proof}

\begin{table}[H]
\centering
\resizebox{\textwidth}{!}{
\begin{tabular}{l|l}
\hline
Weight & Frequency \\
\hline
$0$ & $1$ \\
\hline
$w_{1}:= 2^{m+|A|}$ & $2^{m-|A\cup B\cup C|}-1$ \\
\hline
$w_{2}:= 2^{|A|-1}(2^{m+1}-2^{|B|}-2^{|C|}+2^{|B\cap C|})$ & $2(2^{m-|B\cup C|}-2^{m-|A\cup B\cup C|})$ \\
\hline
$w_{3}:= 2^{|A|}(2^{m}-2^{|C|})$ & $2^{m-|A\cup B|}-2^{m-|A\cup B\cup C|}$ \\
\hline
$w_{4}:= 2^{|A|-1}(2^{m+1}-2^{|B|}-2^{|C|+1}+2^{|B\cap C|})$ & $2(2^{m-|B|}-2^{m-|A\cup B|}-2^{m-|B\cup C|}+2^{m-|A\cup B\cup C|})$ \\
\hline
$w_{5}:= 2^{|A|}(2^{m}-2^{|B|})$ & $2^{m-|A\cup C|}-2^{m-|A\cup B\cup C|}$ \\
\hline
$w_{6}:= 2^{|A|-1}(2^{m+1}-2^{|B|+1}-2^{|C|}+2^{|B\cap C|})$ & $2(2^{m-|C|}-2^{m-|A\cup C|}-2^{m-|B\cup C|}+2^{m-|A\cup B\cup C|})$ \\
\hline
$w_{7}:= 2^{|A|}(2^{m}-2^{|B|}-2^{|C|})$ & $2^{m-|A\cup(B\cap C)|}-2^{m-|A\cup B|}-2^{m-|A\cup C|}+2^{m-|A\cup B\cup C|}$ \\
\hline
$w_{8}:= 2^{|A|-1}(2^{m+1}-2^{|B|+1}-2^{|C|+1}+2^{|B\cap C|})$ &
$
\begin{aligned}
    2(& 2^{m-|B\cap C|}-2^{m-|A\cup(B\cap C)|}-2^{m-|B|}-2^{m-|C|}+2^{m-|A\cup B|} \\[-1ex]
    & +2^{m-|A\cup C|}+2^{m-|B\cup C|}-2^{m-|A\cup B\cup C|})
\end{aligned}$\\
\hline
$w_{9}:= 2^{|A|}(2^{m}-2^{|B|}-2^{|C|}+2^{|B\cap C|})$ & $2^{m+|A|}-2^{m-|B\cap C|+1}+2^{m-|A\cup (B\cap C)|}$ \\
\hline
\end{tabular}
}
\caption{Lee weight distribution of Theorem \ref{maintheorem2}}
\label{table4}
\end{table}
\begin{example}
    This example illustrates Theorem \ref{maintheorem2}. Let $m=4, A=\{4\}, B=\{3,4\}$ and $C=\{1,2,3\}$. Then the  four-weight code $\mathcal{C}_{D}$ has length $12$, type $4^1 2^3$ and minimum Lee distance $10$ with Lee weight enumerator $y^{24}+6 x^{10}y^{14} + 20 x^{12}y^{12}+ 3 x^{16}y^{8} + 2 x^{18}y^{6}$, as verified using Magma. This improves the best-known quaternary code with parameters $(12,4^1 2^3,6)$ reported in the database \cite{aydin2022updated}. We must note that $\mathcal{C}_{D}$ has Plotkin-defect 2 and no Plotkin-optimal code of length $12$ and type $4^1 2^3$ exists by \cite{tang2025plotkin}, hence $\mathcal{C}_{D}$ is at least almost optimal.
\end{example}

\begin{corollary}\label{corollary1_maintheorem2}
    If $C=A$ and $B\nsubseteq C$, $C\nsubseteq B$ in Theorem \ref{maintheorem2}, then the code $\mathcal{C}_{D}$ has length \mbox{$2^{|A|}(2^{m}-2^{|A|}-2^{|B|}+2^{|A\cap B|})$}, type $4^{|A|}2^{m-|A|}$ and minimum Lee distance $2^{|A|-1}(2^{m+1}-2^{|A|+1}-2^{|B|+1}+2^{|A\cap B|}).$
     Its Lee weight distribution is displayed in Table \ref{table4.1}.
\end{corollary}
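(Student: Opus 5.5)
We specialize Theorem \ref{maintheorem2} and Table \ref{table4} to $C=A$. First, the hypothesis $\Delta_{B,C}\neq\mathbb{Z}_2^m$ holds automatically: if $\Delta_A\cup\Delta_B=\mathbb{Z}_2^m$, then the all-ones vector lies in $\Delta_A$ or in $\Delta_B$, forcing $A=[m]$ or $B=[m]$. Either choice contradicts $B\nsubseteq A$, $A\nsubseteq B$. Hence the length $2^{|A|}(2^m-2^{|A|}-2^{|B|}+2^{|A\cap B|})$ and the type $4^{|A|}2^{m-|A|}$ follow by direct substitution.

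Next, we substitute $C=A$ into Table \ref{table4}, using $A\cup C=A$, $A\cup B\cup C=A\cup B$ and $A\cup(B\cap C)=A$. The frequencies then simplify as follows.
\begin{itemize}
\item The frequency of $w_3$ is $2^{m-|A\cup B|}-2^{m-|A\cup B|}=0$, so $w_3$ disappears.
\item The frequency of $w_6$ is $2(2^{m-|A|}-2^{m-|A|}-2^{m-|A\cup B|}+2^{m-|A\cup B|})=0$, so $w_6$ disappears.
\item The frequency of $w_7$ is $2^{m-|A|}-2^{m-|A\cup B|}-2^{m-|A|}+2^{m-|A\cup B|}=0$, so $w_7$ disappears.
\item The frequency of $w_2$ becomes $2(2^{m-|A\cup B|}-2^{m-|A\cup B|})=0$, since $B\cup C=A\cup B$, so $w_2$ disappears as well.
\end{itemize}
The weights that survive are $w_1=2^{m+|A|}$, $w_4$, $w_5=2^{|A|}(2^m-2^{|B|})$, $w_8$ and $w_9=|D|$. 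Their frequencies come from the corresponding rows of Table \ref{table4} after the same simplifications.

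We still need to check that the surviving weights actually occur. The frequency of $w_8$ is
\[
2\bigl(2^{m-|A\cap B|}-2^{m-|A|}-2^{m-|B|}+2^{m-|A\cup B|}\bigr)=2\cdot 2^{m-|A\cup B|}\bigl(2^{|A\setminus B|}-1\bigr)\bigl(2^{|B\setminus A|}-1\bigr).
\]
This is positive precisely because $A\nsubseteq B$ and $B\nsubseteq A$, which is where both non-containment hypotheses are used. The frequency of $w_4$ is $2(2^{m-|B|}-2^{m-|A\cup B|})>0$ since $A\nsubseteq B$, and the frequency of $w_5$ is $2^{m-|A|}-2^{m-|A\cup B|}>0$ since $B\nsubseteq A$. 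The frequency of $w_1$ is $2^{m-|A\cup B|}-1$, which may vanish, but it does not affect the minimum.

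The main step is to identify the minimum among the surviving weights, namely
\begin{align*}
w_4&=2^{|A|-1}(2^{m+1}-2^{|B|}-2^{|A|+1}+2^{|A\cap B|}),\\
w_5&=2^{|A|-1}(2^{m+1}-2^{|B|+1}),\\
w_8&=2^{|A|-1}(2^{m+1}-2^{|B|+1}-2^{|A|+1}+2^{|A\cap B|}),\\
w_9&=2^{|A|-1}(2^{m+1}-2^{|A|+1}-2^{|B|+1}+2^{|A\cap B|+1}).
\end{align*}
We compare each with $w_8$:
\begin{itemize}
\item $w_4-w_8=2^{|A|-1}\,2^{|B|}>0$.
\item $w_9-w_8=2^{|A|-1}\,2^{|A\cap B|}>0$.
\item $w_5-w_8=2^{|A|-1}(2^{|A|+1}-2^{|A\cap B|})>0$, since $|A\cap B|\le|A|$.
\item $w_1-w_8>0$, because $w_1=2^{m+|A|}$ and $w_8$ is obtained from $2^{|A|-1}\cdot 2^{m+1}=2^{m+|A|}$ by subtracting the positive quantity $2^{|A|-1}(2^{|B|+1}+2^{|A|+1}-2^{|A\cap B|})$.
\end{itemize}
Therefore $d_L=w_8=2^{|A|-1}(2^{m+1}-2^{|A|+1}-2^{|B|+1}+2^{|A\cap B|})$. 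This matches the claimed formula, and the simplified rows give Table \ref{table4.1}.
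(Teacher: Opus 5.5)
Your proof is correct and follows the route the paper intends: specialize Theorem \ref{maintheorem2} and Table \ref{table4} to $C=A$, note that the rows $w_2,w_3,w_6,w_7$ get frequency zero, and compare the surviving weights to see that $w_8$ is the smallest. Your checks that $\Delta_{A,B}\neq\mathbb{Z}_2^m$ holds automatically, and that the frequency of $w_8$ factors as $2\cdot 2^{m-|A\cup B|}(2^{|A\setminus B|}-1)(2^{|B\setminus A|}-1)>0$, are welcome additions that the paper leaves implicit. One small point is worth stating explicitly: $w_8>0$. Since $A\nsubseteq B$ and $B\nsubseteq A$ force $1\le|A|\le m-1$ and $|B|\le m-1$, we get $w_8\ge 2^{|A|-1}$, so the surviving rows really are nonzero codewords, and the inherited kernel size and type are consistent.
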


\begin{table}[H]
\centering
\begin{tabular}{l|l}
\hline
Weight & Frequency \\
\hline
$0$ & $1$ \\
\hline
$2^{m+|A|}$ & $2^{m-|A\cup B|}-1$ \\
\hline
$2^{|A|-1}(2^{m+1}-2^{|A|+1}-2^{|B|}+2^{|A\cap B|})$ & $2(2^{m-|B|}-2^{m-|A\cup B|})$ \\
\hline
$2^{|A|}(2^{m}-2^{|B|})$ & $2^{m-|A|}-2^{m-|A\cup B|}$ \\
\hline
$2^{|A|-1}(2^{m+1}-2^{|A|+1}-2^{|B|+1}+2^{|A\cap B|})$ & $2(2^{m-|A\cap B|}-2^{m-|A|}-2^{m-|B|}+2^{m-|A\cup B|})$ \\
\hline
$2^{|A|}(2^{m}-2^{|A|}-2^{|B|}+2^{|A\cap B|})$ & $2^{m+|A|}-2^{m-|A\cap B|+1}+2^{m-|A|}$\\
\hline
\end{tabular}
\caption{Lee weight distribution of Corollary \ref{corollary1_maintheorem2}}
\label{table4.1}
\end{table}
\begin{example}
    This example illustrates Corollary \ref{corollary1_maintheorem2}.  Let $m=5, A=\{3\}, B=\{1,4\}$ and $C=\{3\}$. Then the five-weight code $\mathcal{C}_{D}$ has length $54$, type $4^1 2^4$ and minimum Lee distance $53$ with Lee weight enumerator is given by $y^{108}+24 x^{53}y^{55} + 16 x^{54}y^{54}+ 12 x^{56}y^{52}+8 x^{57}y^{51}+ 3 x^{64}y^{44}$, as verified using Magma. This code with parameters $(54,4^12^4,53)$ is new as per the database \cite{aydin2022updated}. We must note that $\mathcal{C}_{D}$ has Plotkin-defect 1 and no Plotkin-optimal code of length $54$ and type $4^1 2^4$ exists by \cite{tang2025plotkin}, hence $\mathcal{C}_{D}$ is optimal.
\end{example}

\begin{corollary}\label{Cor:3.19}
    If $A\cap B = \emptyset$ in Corollary \ref{corollary1_maintheorem2}, $\mathcal{C}_{D}$ has Plotkin-defect $1$ when $|A|=1$, and Plotkin-defect $2$ when $|A|=2$. 
\end{corollary}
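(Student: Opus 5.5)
Substitute $A\cap B=\emptyset$ into the parameters of Corollary \ref{corollary1_maintheorem2} and compare the minimum Lee distance with the Plotkin bound of Lemma \ref{lem 2.1}. Write $a=|A|$ and $b=|B|$. Since $|A\cap B|=0$, Corollary \ref{corollary1_maintheorem2} gives
\[
n=2^{a}(2^{m}-2^{a}-2^{b}+1),\qquad d_L=2^{a-1}(2^{m+1}-2^{a+1}-2^{b+1}+1)=2^{a}(2^{m}-2^{a}-2^{b})+2^{a-1}=n-2^{a-1}.
\]
The type is $4^{a}2^{m-a}$, so the number of codewords is $k=2^{m+a}$. (Here ``size $k$'' in Lemma \ref{lem 2.1} means the cardinality of the code, consistent with the examples.)

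Next compute the bound itself:
\[
\frac{kn}{k-1}=n+\frac{n}{2^{m+a}-1}.
\]
Since $n\ge 1$, this gives $\lfloor kn/(k-1)\rfloor = n+\lfloor n/(2^{m+a}-1)\rfloor$. The key estimate is
\[
0<n=2^{m+a}-2^{a}(2^{a}+2^{b}-1)<2^{m+a}-1,
\]
where the strict upper bound holds because $2^{a}(2^{a}+2^{b}-1)\ge 2^{a}\ge 2>1$ for $a\ge 1$. Positivity of $n$ follows from the hypothesis $\Delta_{B,C}\ne\mathbb{Z}_2^m$, i.e.\ from $|D|>0$. Hence $\lfloor n/(2^{m+a}-1)\rfloor=0$ and the Plotkin bound equals $n$.

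The Plotkin-defect is therefore $n-d_L=2^{a-1}$. This equals $1$ when $|A|=1$ and $2$ when $|A|=2$, which proves the claim.

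The only point needing care is confirming that $0<n<2^{m+a}-1$, so that the floor collapses to $n$. This needs $a\ge1$, which holds in both cases of interest, and it uses that the conditions $B\nsubseteq C=A$ together with $A\cap B=\emptyset$ make $B$ nonempty and keep the length positive. As a sanity check, the data of the example following Corollary \ref{corollary1_maintheorem2} ($m=5$, $a=1$, $b=2$) give $n=54$ and $d_L=53$, matching the formula $d_L=n-2^{a-1}$.
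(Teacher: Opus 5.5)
Your proposal is correct and is the direct verification the paper leaves implicit (the corollary is stated there without a separate proof): with $|A\cap B|=0$, Corollary~\ref{corollary1_maintheorem2} gives $d_L=n-2^{|A|-1}$, and since $n<2^{m+|A|}-1$ the Plotkin bound $\lfloor kn/(k-1)\rfloor$ collapses to $n$, so the Plotkin-defect is $2^{|A|-1}$. Your reading of $k$ as the number of codewords, $2^{m+|A|}$, is the one consistent with the paper's examples.
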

\begin{example}
    This example illustrates Corollary \ref{Cor:3.19}. Let $m=4, A=\{4\}, B=\{1,2,3\}$ and $C=\{4\}$. Then the  four-weight code $\mathcal{C}_{D}$ has length $14$, type $4^1 2^3$ and minimum Lee distance $13$ with Lee weight enumerator $y^{28}+14 x^{13}y^{15} + 8 x^{14}y^{14}+ 7 x^{16}y^{12} + 2 x^{21}y^{7}$, as verified using Magma. This improves the best-known quaternary code with parameters $(14,4^1 2^3,12)$ reported in the database \cite{aydin2022updated}. We must note that $\mathcal{C}_{D}$ has Plotkin-defect 1 and no Plotkin-optimal code of length $14$ and type $4^1 2^3$ exists by \cite{tang2025plotkin}, hence $\mathcal{C}_{D}$ is optimal.
\end{example}
\begin{example}
    This example illustrates Corollary \ref{Cor:3.19}. Let $m=4, A=\{3,4\}, B=\{1,2\}$ and $C=\{3,4\}$. Then the  four-weight code $\mathcal{C}_{D}$ has length $36$, type $4^2 2^2$ and minimum Lee distance $34$ with Lee weight enumerator $y^{72}+18 x^{34}y^{38} + 36 x^{36}y^{36}+ 6 x^{42}y^{30} + 3 x^{48}y^{24}$, as verified using Magma. This code with parameters $(36,4^2 2^2, 34)$ is new as per the database \cite{aydin2022updated}. We must note that, $\mathcal{C}_{D}$ has Plotkin-defect 2 and no Plotkin-optimal code of length $36$ and type $4^2 2^2$ exists by \cite{tang2025plotkin}, hence $\mathcal{C}_{D}$ is at least almost optimal.
\end{example}


\begin{remark}\rm 
We discuss new or improved quaternary linear codes obtained from Theorem \ref{maintheorem2}.
    


In the following Table \ref{table5}, we list all codes (up to length $128$) obtained from Theorem \ref{maintheorem2} with new or improved parameters. We emphasize that 7 of them, having Plotkin-defect 1, are also optimal, as there exist no Plotkin-optimal codes of the same length and type as verified from \cite{tang2025plotkin}.
\begin{table}[H]
    \centering
        \resizebox{\textwidth}{!}{
    \begin{tabular}
    {|c|c|c|c|c|c|c|c|c|}
    \hline
         Ref.&$m$ & $A$ & $B$ & $C$ & Length & Type & $d_{L}$ & Remark 
         \\
         \hline
          Thm \ref{maintheorem2}&$4$ & $\{4\}$ & $\{3,4\}$ & $\{1,2,3\}$ & $12$ & $4^1 2^3$ & $10$ & improved; $d_L^{best}=6$ \cite{aydin2022updated} \\
          &&&&&&&& Plotkin-defect $=2$  \\
           \hline
          Cor \ref{Cor:3.19}&$4$ & $\{4\}$ & $\{1,2,3\}$ & $\{4\}$ & $14$ & $4^1 2^3$ & $13$ & improved \& optimal; $d_L^{best}=12$ \cite{aydin2022updated} \\
          &&&&&&&& Plotkin-defect $=1$  \\
           \hline
          
          Thm \ref{maintheorem2}&$4$ & $\{4\}$ & $\{2,3\}$ & $\{1,4\}$ & $18$ & $4^1 2^3$ & $16$ & improved; $d_L^{best}=12$ \cite{aydin2022updated} \\
          &&&&&&&& Plotkin-defect $=2$ \\
          \hline
          Cor \ref{Cor:3.19}&$4$ & $\{4\}$ & $\{2,3\}$ & $\{4\}$ & $22$ & $4^1 2^3$ & $21$ & new \& optimal \\
          &&&&&&&& Plotkin-defect $=1$ \\
          \hline
          Thm \ref{maintheorem2}&$4$ & $\{2,3\}$ & $\{3\}$ & $\{1,2,4\}$ & $28$ & $4^2 2^2$ & $26$ & new \\
          &&&&&&&& Plotkin-defect $=2$\\
          \hline
           Cor \ref{Cor:3.19}&$5$ & $\{5\}$ & $\{1,2,3,4\}$ & $\{5\}$ & $30$ & $4^1 2^4$ & $29$ & improved \& optimal; $d_L^{best}=28$ \cite{aydin2022updated}  \\
          &&&&&&&& Plotkin-defect $=1$\\
          \hline
           Cor \ref{Cor:3.19}&$4$ & $\{3,4\}$ & $\{1,2\}$ & $\{3,4\}$ & $36$ & $4^2 2^2$ & $34$ & new \\
          &&&&&&&& Plotkin-defect $=2$\\
          \hline
          Thm \ref{maintheorem2}&$5$ & $\{4\}$ & $\{2,3\}$ & $\{1,4\}$ & $50$ & $4^1 2^4$ & $48$ & improved; $d_L^{best}=20$ \cite{aydin2022updated} \\
          &&&&&&&& Plotkin-defect $=2$\\
          \hline
          Cor \ref{Cor:3.19}&$5$ & $\{3\}$ & $\{1,4\}$ & $\{3\}$ & $54$ & $4^1 2^4$ & $53$ & new \& optimal \\
          &&&&&&&& Plotkin-defect $=1$\\
          \hline
          Cor \ref{Cor:3.19}&$6$ & $\{6\}$ & $\{1,2,3,4,5\}$ & $\{6\}$ & $62$ & $4^1 2^5$ & $61$ & improved \& optimal; $d_L^{best}=60$ \cite{aydin2022updated}  \\
          &&&&&&&& Plotkin-defect $=1$\\
          \hline
          Cor \ref{Cor:3.19}&$6$ & $\{6\}$ & $\{1,2,3,4\}$ & $\{6\}$ & $94$ & $4^1 2^5$ & $93$ & new \& optimal  \\
          &&&&&&&& Plotkin-defect $=1$\\
          \hline
          Cor \ref{Cor:3.19}&$5$ & $\{4,5\}$ & $\{1,2\}$ & $\{4,5\}$ & $100$ & $4^2 2^3$ & $98$ & new \\
          &&&&&&&& Plotkin-defect $=2$\\
          \hline
          Cor \ref{Cor:3.19}&$6$ & $\{6\}$ & $\{1,2,3\}$ & $\{6\}$ & $110$ & $4^1 2^5$ & $109$ & new \& optimal  \\
          &&&&&&&& Plotkin-defect $=1$\\
          \hline
    \end{tabular}
    }
    \begin{tablenotes}
            \item[] $d_{L}^{best}$: best-known Lee distance for the given length and type listed in the database \cite{aydin2022updated}
        \end{tablenotes}
    \caption{\textbf{13 new or improved} quaternary linear codes from Theorem \ref{maintheorem2}}
    \label{table5}
\end{table}
\end{remark}


\section{Binary Gray images and their properties}\label{Sec4}
In the following subsection, we study the linearity of $\Phi(\mathcal{C}_{D})$ to check whether we can obtain some distance-optimal few-weight or minimal binary linear code families.
\subsection{Linearity of the Gray image}\label{subseclinearity}
\begin{theorem}\label{linear_1}
    Let $D$ be as in Theorem \ref{maintheorem1}. Then $\Phi(\mathcal{C}_{D})$ is a binary linear code if and only if $|A|\leq 1.$ Without loss of generality, assume that $|B|\leq |C| $.
    \begin{enumerate}
        \item If $B\subsetneq C$, then $\Phi(\mathcal{C}_{D})$ has length $2^{|A|+1}(2^{|C|}-2^{|B|})$, dimension $ |A|+|A\cup C|$ and minimum distance 
   \[
d = 
\begin{cases}
2^{|A|-1}(2^{|C|}-2^{|B|}) & \text{if } |A\cup C| \neq |C|, \\
2^{|A|}(2^{|C|}-2^{|B|}) & \text{if } |A\cup C| = |C|.
\end{cases}
\]
\item If $B\nsubseteq C$ and $|B|=|C|$, then $\Phi(\mathcal{C}_{D})$ has length $2^{|A|+2}(2^{|B|}-2^{|B\cap C|})$, dimension $ |A|+|A\cup B\cup C|$ and minimum distance 
 \[
d = 
\begin{cases}
2^{|A|}(2^{|B|}-2^{|B\cap C|}) & \text{if } A \nsubseteq B \cup C, \\
2^{|A|+|B|} & \text{if } A \subseteq B \cup C, |B \cup C| \neq |A\cup B| \text{ or } |B \cup C| \neq |A\cup C|, \\
2^{|A|-1}(3. 2^{|B|}-2^{|B\cap C|+1}) & \text{otherwise}.
\end{cases}
\]
\item If $B\nsubseteq C$ and $|B|<|C|$, then $\Phi(\mathcal{C}_{D})$ has length $2^{|A|+1}(2^{|B|}+2^{|C|}-2^{|B\cap C|+1})$, dimension $|A|+|A\cup B\cup C|$ and  minimum distance 
 \[
d = 
\begin{cases}
2^{|A|+|B|} & \text{if } B \nsubseteq A\cup C, \\
2^{|A|-1}(2^{|B|}+2^{|C|}-2^{|B\cap C|+1}) & \text{if } B \subseteq A\cup C, |A\cup C| \neq |B \cup C|,\\
2^{|A|-1} (2^{|B|+1}+2^{|C|}-2^{|B\cap C|+1}) & \text{otherwise}.
\end{cases}
\]
    \end{enumerate}
\end{theorem}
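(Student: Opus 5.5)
Our plan is to use Lemma \ref{lem 2.3}: $\Phi(\mathcal{C}_D)$ is linear if and only if $2(\mathbf{x}*\mathbf{y})\in\mathcal{C}_D$ for all $\mathbf{x},\mathbf{y}\in\mathcal{C}_D$. Since $\mathbf{x}\mapsto 2\mathbf{x}$ kills the $2$-part, we have $2(c_D(\mathbf{v})*c_D(\mathbf{w}))=2\big((\mathbf{b}\mathbf{d}_1)(\mathbf{b}'\mathbf{d}_1)\big)_{\mathbf{d}\in D}$ for $\mathbf{v}=\mathbf{b}+2\mathbf{c}$ and $\mathbf{w}=\mathbf{b}'+2\mathbf{c}'$. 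Here $\mathbf{d}=\mathbf{d}_1+2\mathbf{d}_2$, and the products are taken over $\mathbb{Z}_2$. The test is whether the Boolean function $\mathbf{d}_1\mapsto(\mathbf{b}\mathbf{d}_1)(\mathbf{b}'\mathbf{d}_1)$ on $\Delta_A$, constant in $\mathbf{d}_2$, equals, times $2$, some codeword $2c_D(\mathbf{u})=(2\,\mathbf{u}_1\mathbf{d}_1)_{\mathbf{d}}$, where $\mathbf{u}_1$ is the mod-$2$ reduction of $\mathbf{u}$. So we need it to be a linear function of $\mathbf{d}_1\in\Delta_A$ that does not depend on $\mathbf{d}_2$.

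\textbf{Sufficiency.} If $|A|\le 1$, every function on $\Delta_A\cong\mathbb{Z}_2^{|A|}$ vanishing at $\mathbf{0}$ is linear. For $|A|=1$ with $A=\{i\}$, the product equals $b_ib'_i d_{1,i}$, which is $\mathbf{u}_1\mathbf{d}_1$ with $\mathbf{u}_1=b_ib'_i\mathbf{e}_i$. For $|A|=0$ the product is identically zero.

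\textbf{Necessity.} If $|A|\ge 2$, pick $i\ne j$ in $A$ and take $\mathbf{b}=\mathbf{e}_i$, $\mathbf{b}'=\mathbf{e}_j$. The function $d_{1,i}d_{1,j}$ is a quadratic monomial on $\Delta_A$, so it cannot coincide with any linear function $\mathbf{u}_1\mathbf{d}_1$: evaluate at $\mathbf{e}_i$, $\mathbf{e}_j$ and $\mathbf{e}_i+\mathbf{e}_j$. It remains to check that this exhausts the elements $2\mathbf{x}$ of $\mathcal{C}_D$. Every $2\mathbf{x}\in 2\mathcal{C}_D$ has the form $2c_D(\mathbf{u})$, whose entries depend only on $\mathbf{u}_1\mathbf{d}_1$. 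The elements of $\mathcal{C}_D\cap 2\mathbb{Z}_4^n$ also include $c_D(2\mathbf{c}+\mathbf{b})$ with $\mathbf{b}\mathbf{d}_1=0$ for all $\mathbf{d}_1\in\Delta_A$. Such elements have entries $2(\mathbf{c}\mathbf{d}_1+\mathbf{b}\mathbf{d}_2)$, which are affine-linear in $(\mathbf{d}_1,\mathbf{d}_2)$. The target depends only on $\mathbf{d}_1$, and a function linear in $\mathbf{d}_1$ plus a term in $\mathbf{d}_2$ still cannot produce $d_{1,i}d_{1,j}$: fix $\mathbf{d}_2$ and compare as above. Hence linearity fails.

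For the parameters, with $|A|\le1$ the Gray image has length $2|D|$ and $2$-dimension $2k_1+k_2=|A|+|A\cup B\cup C|$, using the type from Theorem \ref{maintheorem1}. In case 1 we have $B\cap C=B$ and $B\cup C=C$, so this becomes $|A|+|A\cup C|$. The minimum distance is read off from Table \ref{table1}. We specialise the weights $w_1,\dots,w_8$ in each case and discard those whose frequency vanishes: for example $w_1$ exists iff $|A\cup B\cup C|\ne|B\cup C|$, and $w_5$ exists iff $|A\cup B\cup C|\ne |A\cup C|$, i.e.\ $B\not\subseteq A\cup C$. We then compare the surviving values using $|A|\le1$, $|B|\le|C|$, and $2^{|B\cap C|}\le 2^{|B|-1}$ when $B\not\subseteq C$.

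The main obstacle is this final case bookkeeping. In case 3, for instance, one must show $2^{|A|+|B|}\le w_1$, which follows from $2^{|B|+1}\le 2^{|C|}\le 2^{|C|}+2^{|B|}-2^{|B\cap C|+1}$ when $|B|<|C|$. One must also verify exactly which frequencies vanish. We would tabulate the inequalities between $w_1,w_3,w_4,w_5,w_6,w_8$ once, and then dispatch each subcase.
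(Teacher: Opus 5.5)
Your proof is correct, and for the ``only if'' direction it takes a genuinely different and more robust route than the paper.

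The paper splits on whether $A\subseteq B\cup C$. In that case it argues that $\mathbf{v}=2(\mathbf{x}_k*\mathbf{x}_l)$ has Lee weight $w_1$, while $w_1$ has frequency $0$ in Table~\ref{table1}. Otherwise it picks $r\in A\setminus(B\cup C)$, $s\in A$, $t\in B\setminus C$ and solves $\mathbf{v}=\mathbf{u}G$ on the columns $2\mathbf{e}_t$, $\mathbf{e}_r+2\mathbf{e}_t$, $\mathbf{e}_s+2\mathbf{e}_t$, $\mathbf{e}_r+\mathbf{e}_s+2\mathbf{e}_t$, with a separate subcase $s=t$.

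You instead describe all $\{0,2\}$-valued codewords at once and run one affinity test on $\{\mathbf{0},\mathbf{e}_i,\mathbf{e}_j,\mathbf{e}_i+\mathbf{e}_j\}\times\{\mathbf{d}_2\}$. Indexing columns by pairs $(\mathbf{d}_1,\mathbf{d}_2)$ removes both the case split and the $s=t$ subcase, and you never touch the weight distribution. That matters. The paper's first case tacitly assumes $w_1$ is numerically distinct from every weight of positive frequency, and this fails e.g.\ for $A=\{3,4\}$, $B=\{1,2\}$, $C=\{2,3,4\}$. There $w_1=w_5=16$ and $w_5$ has frequency $1$, so ``$w_1$ has frequency $0$'' alone does not exclude $\mathbf{v}\in\mathcal{C}_D$. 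Your argument covers this case verbatim.

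Three small fixes to the write-up:
\begin{itemize}
\item Say that the codewords $c_D(\mathbf{b}''+2\mathbf{c}'')$ with $\Supp(\mathbf{b}'')\cap A=\emptyset$ are \emph{exactly} the $\{0,2\}$-valued ones; this holds because the mod-$2$ reduction of $c_D(\mathbf{u})$ is $(\mathbf{u}_1\mathbf{d}_1)$.
\item Note that $D_2\neq\emptyset$ because $B\neq C$, so a $\mathbf{d}_2$ can actually be fixed.
\item After fixing $\mathbf{d}_2$ a constant term appears, so include $\mathbf{0}$ among the test points.
\end{itemize}

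For the parameters, the paper gives no argument beyond the table, so your read-off plan is the intended one. It will, however, expose one exception. In case 1 with $A=B=\emptyset$, the frequencies of $w_2$ and $w_4$ vanish and the code is twice a simplex code, with all nonzero Lee weights equal to $2^{|C|}$. The stated $d=2^{|C|}-1$ is therefore false there, and that case must be excluded as in Corollary~\ref{corollary3_maintheorem1}.

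You will also find that for $|A|\le 1$ the third line of case 2 and the second line of case 3 are vacuous. In the surviving ``otherwise'' subcase of case 3 one has $B\setminus C=A$, $|D_2|=2^{|C|}$, and $w_1$ absent, giving $d=w_6=2^{|C|}+2^{|B|}$.
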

\begin{proof}
    Let $D=\{\mathbf{g}_{1},\mathbf{g}_{2},\ldots,\mathbf{g}_{|D|}\} \subseteq \mathbb{Z}_{4}^{m}$. Then 
    \[
G = (\, \mathbf{g}_1^{T}\; \mathbf{g}_2^{T}\; \cdots\; \mathbf{g}_{|D|}^{T} \,)
=
\begin{pmatrix}
\mathbf{x}_1 \\
\vdots \\
\mathbf{x}_m
\end{pmatrix},
\]
is a generating matrix for $\mathcal{C}_{D}$ so that $\{\mathbf{x}_{1},\ldots, \mathbf{x}_{m}\}$ is a set of generators of $\mathcal{C}_{D}.$

First, we prove the converse part. Suppose that $|A|\leq 1$. There are two cases.
\begin{enumerate}
    \item If $|A| = 0$, then $k_{1} = 0$. Consequently, 
\[
2 (\mathbf{x}_{i}*\mathbf{x}_{j}) = \mathbf{0} \in \mathcal{C}_{D},\; \text{ for all} \; i,j\in [m].
\] Hence, by Lemma \ref{lem 2.3}, the code $\mathcal{C}_{D}$ is linear.
\item  If $|A|=1$, then $k_{1} = 1$. Without loss of generality, assume that $\mathbf{x}_{1}$ contains $1$ or $3$. Then 
\[
2(\mathbf{x}_{i}*\mathbf{x}_{j}) = \mathbf{0} \in \mathcal{C}_{D},
\]
whenever $i$ and $j$ are not both equal to $1$. Moreover,
\[
2(\mathbf{x}_{1}*\mathbf{x}_{1}) = 2 \mathbf{x}_{1} \in \mathcal{C}_{D}.
\]
Hence, by Lemma \ref{lem 2.3}, the code $\mathcal{C}_{D}$ is linear.
\end{enumerate}

For the forward part, suppose $\Phi(\mathcal{C}_{D})$ is linear and $|A|\geq 2.$ Thus, there exists $k,l \in A$ such that $k\neq l.$ Let $\mathbf{g}_{j} = (g_{j}^{(1)},g_{j}^{(2)},\ldots, g_{j}^{(m)})$. Then 
\[
\#\left\{\, j\; :
g_j^{(k)} \equiv g_j^{(l)} \equiv 1 \pmod{2}
\right\}
=
\left(2^{|B|} + 2^{|C|} - 2^{|B \cap C|+1}\right) 2^{|A|-2}.
\]
and the Lee weight of the vector $\mathbf{v}:=2(\mathbf{x}_{k}*\mathbf{x}_{l})$ is $\left(2^{|B|} + 2^{|C|} - 2^{|B \cap C|+1}\right) 2^{|A|-1} (= w_{1})$. By Lemma \ref{lem 2.3},  $\mathbf{v}\in \mathcal{C}_{D}$. There are two cases.

\begin{enumerate}
    \item If $A \subseteq B \cup C$, then, by Table \ref{table1}, the frequency of weight $w_{1}$ is zero. Therefore, $\mathbf{v}\notin \mathcal{C}_{D}$, which is a contradiction.
    \item  If $A \nsubseteq B \cup C$, then there exists $r \in A\setminus(B\cup C)$. Since $|A|\geq 2$, we can choose $s\in A$ with $s\neq r$. By Lemma \ref{lem 2.3}, $\mathbf{v}=2(\mathbf{x}_{r}*\mathbf{x}_{s})\in \mathcal{C}_{D}$. Therefore, there exists $\mathbf{u}=(u_{1},u_{2},\ldots,u_{m})\in \mathbb{Z}_{4}^{m}$ such that $\mathbf{v}=\mathbf{u}G$. Without loss of generality, there exists $t\in B$ such that $\mathbf{e}_{t}\in (\Delta_{B}\cup\Delta_{C})\setminus \Delta_{B\cap C}$, where $\mathbf{e}_{t}$ denotes the vector whose $t$-th coordinate is $1$ and the remaining coordinates are $0$.
    \begin{enumerate}
        \item  Suppose $t\neq s$. Then $\mathbf{g}_{i_{1}}=2\mathbf{e}_{t}\in D$. Thus
    \[ \mathbf{u}\cdot\mathbf{g}_{i_{1}}^{T} = 2 u_{t} = v_{i_{1}} = 0,
    \] which implies $u_{t} \in \{0,2\}$. Next, since $\mathbf{g}_{i_{2}} = \mathbf{e}_{r}+2\mathbf{e}_{t} \in D$, we have
    \[\mathbf{u}\cdot \mathbf{g}_{i_{2}}^{T} = u_{r} + 2 u_{t} = v_{i_{2}} = 0,
    \]
    and hence $u_{r} = 0$. Similarly, $\mathbf{g}_{i_{3}} = \mathbf{e}_{s}+2\mathbf{e}_{t} \in D$,
    \[\mathbf{u}\cdot \mathbf{g}_{i_{3}}^{T} = u_{s} + 2 u_{t} = v_{i_{3}} = 0,
    \]
    so that $u_{s} = 0$. Finally, $\mathbf{g}_{i_{4}} = \mathbf{e}_{r}+\mathbf{e}_{s}+2\mathbf{e}_{t} \in D$. Then
    \[\mathbf{u}\cdot \mathbf{g}_{i_{4}}^{T} = u_{r}+u_{s} + 2 u_{t} = v_{i_{4}} = 2,
    \]
    implies
     $2 u_{t} = 2$, which is a contradiction.
    \item Suppose $s=t$. Since $\mathbf{g}_{i_{1}}=\mathbf{e}_{s}+2\mathbf{e}_{s} = 3e_{s}\in D$, we have
    \[ \mathbf{u}\cdot \mathbf{g}_{i_{1}}^{T} = 3 u_{s} = v_{i_{1}} = 0,
    \] which implies $u_{s}=0$. Next, $\mathbf{g}_{i_{2}}=\mathbf{e}_{r}+2\mathbf{e}_{s} \in D$,
    \[ \mathbf{u}\cdot \mathbf{g}_{i_{2}}^{T} = u_{r} + 2u_{s} = v_{i_{2}} = 0,
    \]
    so that $u_{r} = 0$. Finally, $\mathbf{g}_{i_{3}}=\mathbf{e}_{r}+\mathbf{e}_{s}+2\mathbf{e}_{s} = \mathbf{e}_{r}+3\mathbf{e}_{s} \in D$,
    \[ \mathbf{u}\cdot\mathbf{g}_{i_{3}}^{T} = u_{r}+3 u_{s} = v_{i_{3}} = 2,
    \]
    which implies $0 = 2$, which is a contradiction. 
\end{enumerate}
\end{enumerate}\qed
 \end{proof}
\begin{remark}\label{distanceoptimal} \rm 
  We analyze the Gray image parameters of $\mathcal{C}_D$ as given in Corollary \ref{corollary3_maintheorem1}. If $|A|\leq 1$, then by Theorem \ref{linear_1}, $\Phi(\mathcal{C}_{D})$ is a binary linear code with parameters $[2^{|A|+1}(2^{|C|}-2^{|B|}), |A|+|C|, 2^{|A|}(2^{|C|}-2^{|B|})]$. This family satisfies the Griesmer bound, and its parameters coincide with those in \cite[Theorem 5]{chen2025griesmer}.  
 \end{remark}

\begin{theorem}\label{projectivelinear}
      Let $D$ be as in Theorem \ref{projective}. If $|A| =  1$, then $\Phi(C_{D})$ is a binary projective linear code. Without loss of generality, assume that $|B|\leq |C| $.
    \begin{enumerate}
        \item If $B\subsetneq C$, then $\Phi(\mathcal{C}_{D})$ has length $ 2(2^{|C|}-2^{|B|})$, dimension $|C| + 2$ and minimum distance $2^{|C|}-2^{|B|+1}$. Moreover, if $|B|<|C|-1$, then the minimum distance of the dual code is $d((\Phi(\mathcal{C}_{D}))^{\perp}) = 4$.
        \item  If $B\nsubseteq C$, then $\Phi(\mathcal{C}_{D})$ has length $2( 2^{|B|}+2^{|C|}-2^{|B\cap C|+1})$, dimension $|B \cup C|+2$ and minimum distance $2^{|B|}$.
        \end{enumerate}
 \end{theorem}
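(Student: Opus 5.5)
Linearity comes first. With $|A|=1$, say $A=\{a\}$, the generating matrix $\mathcal{G}_D$ has exactly one row $\mathbf{x}_a$ with entries in $\{0,1\}$; the other rows have entries in $\{0,2\}$, because $A\cap(B\cup C)=\emptyset$ and $\Delta_A^*=\{\mathbf{e}_a\}$. So, exactly as in the converse part of Theorem \ref{linear_1}, $2(\mathbf{x}_i*\mathbf{x}_j)=\mathbf{0}$ unless $i=j=a$, and $2(\mathbf{x}_a*\mathbf{x}_a)=2\mathbf{x}_a\in\mathcal{C}_D$. Lemma \ref{lem 2.3} then shows that $\Phi(\mathcal{C}_D)$ is linear.

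Projectivity of the binary image comes next. Every column of $\mathcal{G}_D$ has the form $\mathbf{e}_a+2\mathbf{y}$ with $\mathbf{y}\in\Delta_{B,C}\setminus\Delta_{B\cap C}$. I would write down a binary generator matrix of $\Phi(\mathcal{C}_D)$ explicitly: the codeword $c_D(\mathbf{b}+2\mathbf{c})$ has coordinates $b_a+2(c_a+\mathbf{b}\mathbf{y})$. Its Gray image is therefore the pair $(c_a+\mathbf{b}\mathbf{y},\ b_a+c_a+\mathbf{b}\mathbf{y})$ as $\mathbf{y}$ ranges over $D_2$. This is a binary linear code in the variables $(b_a,c_a,\mathbf{b}|_{B\cup C})$, which gives dimension $|B\cup C|+2$, consistent with $|C|+2$ when $B\subsetneq C$. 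Its columns are $(0,1,\mathbf{y})$ and $(1,1,\mathbf{y})$. All of them are distinct and nonzero, so the code is projective, i.e.\ the dual distance is at least $3$.

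The parameters then come from Table \ref{table3} with $|A|=1$. The length is $2|D|$. The minimum distance is the least nonzero weight whose frequency is nonzero; $w_1$ has frequency $0$ since $A\cup B\cup C$ and $B\cup C$ differ by one element. For $B\subsetneq C$, Corollary \ref{cor:projective2} and Corollary \ref{corollary_projective} give $2^{|C|}-2^{|B|+1}$. For $B\nsubseteq C$ with $|B|\le|C|$, I compare the surviving weights: $w_3=2^{|C|}$, $w_6=2^{|B|}$, $w_4=2^{|B|-1}+\dots$ and so on. Expanding $w_i$ with $|A|=1$ shows that the smallest value is $w_6=2^{|B|}$, whose frequency $2^{|A\cup B\cup C|-|A\cup C|}-1$ is positive because $B\nsubseteq C$.

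The dual distance $4$ in part 1 is where I expect the main obstacle. Since the columns are distinct and nonzero, the dual distance is at least $3$. To get exactly $4$ (rather than $\ge 4$), I would show two things. First, no three columns sum to zero: a sum of three columns has first coordinate $1$ whenever it uses an odd number of columns with leading $1$; otherwise the second coordinate is $1+1+1=1$. Either way it is nonzero, so the dual distance is at least $4$. Second, I would exhibit four columns summing to zero: $(0,1,\mathbf{y}_1),(0,1,\mathbf{y}_2),(1,1,\mathbf{y}_1),(1,1,\mathbf{y}_2)$ with $\mathbf{y}_1\ne\mathbf{y}_2$ in $D_2$. Such a pair needs $|D_2|\ge2$, which holds when $|B|<|C|-1$ (or in general). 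Alternatively, I could check this via the MacWilliams identities in Lemma \ref{lem:macwilliams} applied to Table \ref{tab:projective_3}, computing $A_3^\perp=0$ and $A_4^\perp>0$; I would use that as a cross-check.
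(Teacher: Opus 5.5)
Your route differs from the paper's. The paper gets linearity from Lemma~\ref{lem 2.3} as in Theorem~\ref{linear_1}. It treats length, dimension and distance as read off from Table~\ref{table3} and its corollaries. It obtains $d((\Phi(\mathcal{C}_D))^{\perp})=4$ by feeding Table~\ref{tab:projective_3} into the MacWilliams identities.

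You instead write $\Phi(\mathcal{C}_D)$ as the row space of an explicit binary matrix $G$. Its columns are $(0,1,\mathbf{y})$ and $(1,1,\mathbf{y})$, $\mathbf{y}\in D_2$, in the coordinates $(b_a,c_a,\mathbf{b}|_{B\cup C})$. This proves linearity directly. Your projectivity and dual-distance arguments are correct and stronger than the paper's:
\begin{itemize}
\item every column has second coordinate $1$, so no odd number of columns sums to $\mathbf{0}$;
\item $(0,1,\mathbf{y}_1)+(0,1,\mathbf{y}_2)+(1,1,\mathbf{y}_1)+(1,1,\mathbf{y}_2)=\mathbf{0}$.
\end{itemize}
Together these give $d^{\perp}=4$ whenever $|D|\ge 2$, in both parts, without $|B|<|C|-1$. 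This also covers binary projectivity in part~2, which the paper never argues. These column arguments hold even if the rows of $G$ are dependent.

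\textbf{The gap is the dimension.} Counting $|B\cup C|+2$ free variables only gives an upper bound. You need the map from $(b_a,c_a,\mathbf{b}|_{B\cup C})$ to the codeword to be injective. Its kernel consists of the $(0,\gamma,\mathbf{z})$ with $\mathbf{z}\cdot\mathbf{y}=\gamma$ for all $\mathbf{y}\in D_2$. It is nontrivial in two cases:
\begin{itemize}
\item in part~1 when $C\setminus B=\{t\}$: take $\mathbf{z}=\mathbf{e}_t$, $\gamma=1$, since every $\mathbf{y}\in\Delta_C\setminus\Delta_B$ contains $t$;
\item in part~2 when $B\setminus C=\{s\}$ and $C\setminus B=\{t\}$: take $\mathbf{z}=\mathbf{e}_s+\mathbf{e}_t$, $\gamma=1$.
\end{itemize}
In exactly these cases the claimed dimension is false, and in part~1 so is the claimed distance. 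So no argument can close this step there. For instance, $A=\{3\}$, $B=\{1\}$ with $C=\{1,2\}$ or with $C=\{2\}$ gives $\mathcal{C}_D=\{(x,y)\in\mathbb{Z}_4^2: x\equiv y \pmod 2\}$. Its Gray image is the $[4,3,2]$ even-weight code, not of dimension $4$. In the first example the claimed distance $2^{|C|}-2^{|B|+1}$ is $0$.

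You can see this in Table~\ref{table3} with $|A|=1$. The entries $w_5=2^{|C|}-2^{|B|+1}$ (part~1) and $w_{11}=2^{|B|}+2^{|C|}-2^{|B\cap C|+2}$ (part~2) vanish with frequency $1$ precisely in these cases. The frequency of $w_{11}$ is then $(2^{|C\setminus B|}-1)(2^{|B\setminus C|}-1)$; compare the remark after Table~\ref{table3.1}. Your appeal to Corollary~\ref{corollary_projective} for the part~1 distance misses this. So does your claim that $w_6$ is the smallest value in part~2. Note that Corollary~\ref{cor:projective2} already assumes $|B|<|C|-1$.

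To repair the proof, add the hypothesis $|C\setminus B|\ge 2$ (in part~1 this is $|B|\le|C|-2$) and show the kernel is trivial. Fix $t\in C\setminus B$.
\begin{itemize}
\item $\mathbf{e}_t\in D_2$ gives $z_t=\gamma$.
\item $\mathbf{e}_t+\mathbf{e}_j\in D_2$ for $j\in C\setminus\{t\}$ gives $z_j=0$.
\item $\mathbf{e}_j\in D_2$ for $j\in B\setminus C$ gives $z_j=\gamma$.
\item A second $t'\in C\setminus B$ forces $\gamma=z_{t'}=0$, hence $\mathbf{z}=\mathbf{0}$.
\end{itemize}
The part~2 distance $2^{|B|}$ survives even in the exceptional case.

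A minor slip: $w_1$ has frequency $2(2^1-1)=2$, not $0$. This is harmless, since $w_1=|D|$ when $|A|=1$.
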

 \begin{proof}
      The proof for the linearity part follows verbatim from Theorem \ref{linear_1}. To show the minimum distance of the dual code is $d((\Phi(\mathcal{C}_{D}))^{\perp}) = 4$ when $B\subsetneq C$ with $|B|<|C|-1$, we use the MacWilliams identities as follows: Clearly, $\Phi(\mathcal{C}_{D})$ is a $[2^{|C|+1}-2^{|B|+1},|C|+2,2^{|C|}-2^{|B|+1}]$ binary linear code. 
From Table \ref{tab:projective_3}, we have 
\begin{eqnarray*}
     A_{0} &=& 1,\\
     A_{2^{|C|}-2^{|B|}} &=& 2^{|C|+2}-2^{|C|-|B|+1}, \\
     A_{2^{|C|+1}-2^{|B|+1}} &=& 1,\\
    A_{2^{|C|}} &=& 2^{|C|-|B|}-1,\\
     A_{2^{|C|}-2^{|B|+1}} &=& 2^{|C|-|B|}-1,
\end{eqnarray*}
and all remaining $A_{i}$'s are zero. Applying the MacWilliams identities \eqref{eq:MacWilliams}, we obtain
\begin{eqnarray*}
    A_{0}^{\perp} &=& 1, \\
     A_{1}^{\perp} &=& 0, \\
      A_{2}^{\perp} &=& 0, \\
       A_{3}^{\perp} &=& 0, \\
        A_{4}^{\perp} &=& \frac{1}{6}\left(2^{|C|}-2^{|B|}\right)\left(1+3\times 2^{|B|}+3\times 2^{2|B|+1}+2^{2|C|+1}-3\times 2^{|C|}(1+2^{|B|+1})\right).
\end{eqnarray*}
Since $A_{4}^{\perp}>0$ due to given conditions on $B$ and $C$, $d((\Phi(\mathcal{C}_{D}))^{\perp}) = 4$. \qed
 \end{proof}
 \begin{corollary}\label{cor:grayimage}
     If $|A|=1$ and $B\subsetneq C$ with $|B|<|C|-1$ in Theorem \ref{projective}, then the dual code $(\mathcal{C}_{D})^\perp$ of the quaternary linear code $\mathcal{C}_{D}$ in \eqref{C_D code} is a quaternary linear code of length $2^{|C|}-2^{|B|}$, type $4^{2^{|C|}-2^{|B|}-1-|C|}2^{|C|}$ and minimum Lee distance $4$.
 \end{corollary}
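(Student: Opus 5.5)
The plan has three parts: the length of $(\mathcal{C}_D)^\perp$, its type, and its minimum Lee distance.

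\textbf{Length and type.} The length of $(\mathcal{C}_D)^\perp$ equals that of $\mathcal{C}_D$. By Corollary \ref{cor:projective2} this is $2^{|C|}-2^{|B|}$, and $\mathcal{C}_D$ has type $4^{1}2^{|C|}$. Since $A\cap(B\cup C)=\emptyset$, $|A|=1$ and $B\subsetneq C$, the weight $w_i=0$ never occurs beyond the zero codeword, so the Remark after Table \ref{table3.1} introduces no reduction of $k_2$. The standard fact recalled in Section \ref{Sec2} says the dual of a length-$n$ code of type $4^{k_1}2^{k_2}$ has type $4^{n-k_1-k_2}2^{k_2}$. With $k_1=1$ and $k_2=|C|$ this gives $4^{2^{|C|}-2^{|B|}-1-|C|}2^{|C|}$.

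\textbf{Minimum Lee distance.} I would transfer the binary dual distance from Theorem \ref{projectivelinear} to the quaternary dual. The key ingredient is the identity $\Phi(\mathcal{C}^\perp)=\Phi(\mathcal{C})^\perp$, which holds whenever $\Phi(\mathcal{C})$ is linear. I would justify it as follows. Since $|A|=1$, Theorem \ref{projectivelinear} gives that $\Phi(\mathcal{C}_D)$ is linear. The type of $(\mathcal{C}_D)^\perp$ means $|(\mathcal{C}_D)^\perp|=4^{n}/|\mathcal{C}_D|$. Hence $\Phi((\mathcal{C}_D)^\perp)$ and $\Phi(\mathcal{C}_D)^\perp$ have the same cardinality $2^{2n-|C|-2}$. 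It then remains to prove the inclusion $\Phi((\mathcal{C}_D)^\perp)\subseteq \Phi(\mathcal{C}_D)^\perp$. This is the classical statement (Hammons et al., or \cite{huffman2010book}) that the Gray images of a $\mathbb{Z}_4$-code and its dual are formally dual. In the linear case it becomes genuine duality: the map $\mathbf{u}\mapsto\Phi(\mathbf{u})$ satisfies $\Phi(\mathbf{u})\cdot\Phi(\mathbf{w})\equiv$ (a function of $\mathbf{u}\cdot\mathbf{w}$ modulo terms vanishing under linearity of $\Phi(\mathcal{C})$).

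\textbf{Main obstacle.} I expect this inclusion to be the hardest step. If it resists a direct check, I would argue via weight enumerators instead. The Lee weight enumerator of $(\mathcal{C}_D)^\perp$ is obtained from that of $\mathcal{C}_D$ by the binary MacWilliams transform, because the Lee/Gray MacWilliams identity for $\mathbb{Z}_4$ coincides with the binary one on Gray images. Consequently the Lee weight distribution of $(\mathcal{C}_D)^\perp$ equals the Hamming weight distribution $(A_i^\perp)$ computed in the proof of Theorem \ref{projectivelinear}, whether or not the images are literally dual. This route needs only Table \ref{tab:projective_3}, and I would prefer it.

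Either way, the proof of Theorem \ref{projectivelinear} gives $A_1^\perp=A_2^\perp=A_3^\perp=0$. Under $|B|<|C|-1$ it also gives $A_4^\perp>0$. Therefore $(\mathcal{C}_D)^\perp$ has no nonzero codeword of Lee weight at most $3$ and has one of Lee weight $4$, so $d_L((\mathcal{C}_D)^\perp)=4$.
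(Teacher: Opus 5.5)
Your weight-enumerator route is correct and is essentially the paper's argument. The paper treats length and type as immediate (Corollary~\ref{cor:projective2} together with the dual-type formula $4^{n-k_1-k_2}2^{k_2}$). It obtains $d_L=4$ by asserting that the Lee distance of $(\mathcal{C}_D)^\perp$ equals the Hamming distance of $(\Phi(\mathcal{C}_D))^\perp$, which is $4$ by Theorem~\ref{projectivelinear}. Your remark that the $\mathbb{Z}_4$ Lee MacWilliams transform is literally the binary MacWilliams transform of the Gray image's enumerator (with $|\mathcal{C}_D|=|\Phi(\mathcal{C}_D)|=2^{|C|+2}$ and Gray length $2n$) is exactly the justification the paper leaves implicit. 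So $A_1^\perp=A_2^\perp=A_3^\perp=0<A_4^\perp$ is indeed the Lee weight distribution of $(\mathcal{C}_D)^\perp$ in those weights.

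You should drop the first route rather than keep it as an ``either way'' option. The identity $\Phi(\mathcal{C}^\perp)=\Phi(\mathcal{C})^\perp$ is false in general, even when $\Phi(\mathcal{C})$ is linear: formal duality does not become genuine duality. Take $\mathbf{u}=\mathbf{a}+2\mathbf{b}$ and $\mathbf{w}=\mathbf{a}'+2\mathbf{b}'$ with $\mathbf{a},\mathbf{b},\mathbf{a}',\mathbf{b}'$ binary, and let $N=\#(\Supp(\mathbf{a})\cap\Supp(\mathbf{a}'))$. Then $\Phi(\mathbf{u})\Phi(\mathbf{w})\equiv N+\mathbf{a}\mathbf{b}'+\mathbf{b}\mathbf{a}'\pmod 2$, whereas $\mathbf{u}\cdot\mathbf{w}\equiv N+2(\mathbf{a}\mathbf{b}'+\mathbf{b}\mathbf{a}')\pmod 4$. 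Hence if $\mathbf{u}\cdot\mathbf{w}=0$, then $\Phi(\mathbf{u})\Phi(\mathbf{w})=1$ precisely when $N\equiv 2\pmod 4$.

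For example, $\mathcal{C}=\{00,11,22,33\}$ has linear Gray image $\{0000,0011,1100,1111\}$. Yet $(1,3)\in\mathcal{C}^\perp$, and $\Phi((1,3))=(0,1,1,0)$ is not orthogonal to $\Phi((1,1))=(0,0,1,1)$.

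The same failure occurs inside the present family. Take $A=\{a\}$, $|C|=4$, $B=\emptyset$ (length $15$).
\begin{itemize}
\item The all-ones vector $\mathbf{1}=c_D(\mathbf{e}_a)$ lies in $\mathcal{C}_D$.
\item The residue code of $(\mathcal{C}_D)^\perp$ is the even-weight subcode of the $[15,11,3]$ Hamming code, so it contains a word $\mathbf{a}$ of weight $6$.
\item Any lift $\mathbf{a}+2\mathbf{b}\in(\mathcal{C}_D)^\perp$ must have $\wt_H(\mathbf{b})$ odd, because its inner product with $\mathbf{1}$ must vanish mod $4$.
\item Its Gray image therefore has inner product $1$ with $\Phi(\mathbf{1})=(\mathbf{0},\mathbf{1})$.
\end{itemize}
So the inclusion you planned to prove is false for this code. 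Only the enumerator identity survives, and that is all you need.

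A minor point concerns the type. The absence of extra zero weights, and hence the type $4^1 2^{|C|}$, uses $|B|<|C|-1$: the weight $2^{|C|}-2^{|B|+1}$ in Table~\ref{tab:projective_3} vanishes when $|B|=|C|-1$. Your sentence attributes it to $A\cap(B\cup C)=\emptyset$, $|A|=1$ and $B\subsetneq C$ alone. This is harmless only because you take the type from Corollary~\ref{cor:projective2}.
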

 \begin{proof}The parameters of $(\mathcal{C}_{D})^\perp$ are straightforward due to the fact that the Lee distance of the dual code $(\mathcal{C}_{D})^\perp$ of the quaternary linear code $\mathcal{C}_{D}$ is the same as the Hamming distance of the dual code $(\Phi(\mathcal{C}_{D}))^{\perp}$ of the Gray image of the quaternary linear code $\mathcal{C}_{D}$.
 \end{proof}
 \begin{remark}\rm 
 We discuss new or improved quaternary linear codes obtained from Corollary \ref{cor:grayimage}.

In the following Table \ref{table6}, we list all codes (up to length $128$) obtained from Corollary \ref{cor:grayimage} with new or improved parameters. 
\begin{table}[H]
    \centering
    \begin{tabular}
    {|c|c|c|c|c|c|c|c|}
    \hline
         $m$ & $A$ & $B$ & $C$ & Length & Type & $d_{L}$ & Remark 
         \\
         \hline
          $5$ & $\{5\}$ & $\{1,2\}$ & $\{1,2,3,4\}$ & $12$ & $4^7 2^4$ & $4$ & new  \\
           \hline
          $6$ & $\{5\}$ & $\{1,2,3\}$ & $\{1,2,3,4,6\}$ & $24$ & $4^{18} 2^5$ & $4$ & new \\
          \hline
          $6$ & $\{5\}$ & $\{1,2\}$ & $\{1,2,3,4,6\}$ & $28$ & $4^{22} 2^5$ & $4$ & new \\
          \hline
          $7$ & $\{3\}$ & $\{1,2,5,7\}$ & $\{1,2,4,5,6,7\}$ & $48$ & $4^{41} 2^6$ & $4$ & new \\
          \hline
          $7$ & $\{3\}$ & $\{1,2,5\}$ & $\{1,2,4,5,6,7\}$ & $56$ & $4^{49} 2^6$ & $4$ & new \\
          \hline
          $7$ & $\{3\}$ & $\{6,7\}$ & $\{1,2,4,5,6,7\}$ & $60$ & $4^{53} 2^6$ & $4$ & new \\
          \hline
          $7$ & $\{3\}$ & $\emptyset$ & $\{1,2,4,5,6,7\}$ & $63$ & $4^{56} 2^6$ & $4$ & new \\
          \hline
          $8$ & $\{1\}$ & $\{2,3,4,5,6\}$ & $\{2,3,4,5,6,7,8\}$ & $96$ & $4^{88} 2^7$ & $4$ & new \\
          \hline
          $8$ & $\{1\}$ & $\{2,3,4,5\}$ & $\{2,3,4,5,6,7,8\}$ & $112$ & $4^{104} 2^7$ & $4$ & new  \\
          \hline
    \end{tabular}
    \caption{\textbf{9 new} quaternary linear codes from Corollary \ref{cor:grayimage}}
    \label{table6}
\end{table}
\end{remark}

\begin{remark}\label{distanceoptimal2} \rm 
  We analyze the Gray image parameters of $\mathcal{C}_D$ as given in Corollary \ref{cor:plotkin-defect1}. By Theorem \ref{projectivelinear}, $\Phi(\mathcal{C}_{D})$ is a binary linear code with parameters $[2^{|C|+1}-2, |C|+2, 2^{|C|}-2]$. This family is distance-optimal by using the Griesmer bound, as there exists no binary linear code with parameters $[2^{|C|+1}-2, |C|+2, 2^{|C|}-2+s]$, for any integer $s\ge 1$ as $\sum\limits_{i=0}^{|C|+1}\left\lceil\frac{2^{|C|}-1}{2^i}\right\rceil= 2^{|C|+1}-1 > 2^{|C|+1}-2.$
  We note that there exists no binary linear code with parameters $[2^{|C|+1}-2, 2^{|C|+1}-|C|-4+s, 4]$, for any $s\ge 2$, by using the Sphere-packing bound. Hence, $(\Phi(\mathcal{C}_{D}))^\perp$ is at least almost dimension-optimal. Furthermore, using the database \cite{Grassl:codetables}, we verified that $(\Phi(\mathcal{C}_{D}))^\perp$ is both dimension-optimal and distance-optimal for $2\le |C|\le 7$.
 \end{remark}

 \begin{remark}\rm 
     The following example shows that the condition in Theorem \ref{projectivelinear} is not necessary. Let $m=4, A=\{1,4\}, B=\{2\}$ and $C=\{3\}$. Then the four-weight code $\Phi(\mathcal{C}_{D})$ is a  projective binary linear code with parameters $[12,6,4]$, and its Hamming weight enumerator is given by $y^{12}+15x^{4}y^{8}+32x^{6}y^{6}+15x^{8}y^{4}+x^{12}$, as checked by Magma.  Moreover, this code is distance-optimal.
 \end{remark}

 \begin{theorem}\label{linear_2}
     Let $D$ be as in Theorem \ref{maintheorem2}. Then $\Phi(C_{D})$ is a binary linear code if and only if $|A|\leq 1.$ Without loss of generality, assume that $|B|\leq |C| $.
    \begin{enumerate}
        \item If $B\subseteq C$, then $\Phi(\mathcal{C}_{D})$ has length $2^{|A|+1} (2^{m}-2^{|C|})$, dimension $m+|A|$ and minimum distance $2^{|A|}(2^{m}-2^{|C|})$.
        \item If $B\nsubseteq C$, then $\Phi(\mathcal{C}_{D})$ has length $2^{|A|+1} (2^{m}-2^{|B|}-2^{|C|}+2^{|B \cap C|})$, dimension $m+|A|$ and minimum distance 
         \[
d = 
\begin{cases}
2^{|A|-1}(2^{m+1}-2^{|B|+1}-2^{|C|+1}+2^{|B\cap C|}) & \text{if }    B \subseteq A \cup C \text{ or } C\subseteq A\cup B, \\
 2^{|A|}(2^{m}-2^{|B|}-2^{|C|}) & \text{otherwise}.
\end{cases}
\]
    \end{enumerate}
 \end{theorem}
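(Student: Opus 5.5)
The proof follows the template of Theorem \ref{linear_1}, with Table \ref{table4} now supplying the weights. We write the generating matrix $G$ of $\mathcal{C}_D$ with rows $\mathbf{x}_1,\dots,\mathbf{x}_m$. Only the $|A|$ rows indexed by $A$ contain entries $1$ or $3$; every other row has entries in $\{0,2\}$.

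For the sufficiency direction, suppose $|A|\le 1$. Then $2(\mathbf{x}_i*\mathbf{x}_j)=\mathbf{0}$ unless $i=j\in A$. In that remaining case $2(\mathbf{x}_i*\mathbf{x}_i)=2\mathbf{x}_i\in\mathcal{C}_D$, because $2\cdot\text{odd}\equiv 2$ and $2\cdot 0=2\cdot 2=0$ modulo $4$. Lemma \ref{lem 2.3} then gives linearity. The parameters come directly from Theorem \ref{maintheorem2}: the length is $2|D|$, and the dimension is $2k_1+k_2=2|A|+(m-|A|)=m+|A|$. The minimum distance is the least nonzero weight in Table \ref{table4} whose frequency is positive.

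For the distance computation we assume $|B|\le|C|$ and split into two cases.
\begin{itemize}
\item If $B\subseteq C$, then $B\cap C=B$. So $w_9=2^{|A|}(2^m-2^{|C|})=w_3$, and several frequencies vanish. In particular, $w_7$ and $w_8$ have zero frequency, because the sets $\Delta_{B\cap C}^\perp\setminus(\Delta_B^\perp\cup\Delta_C^\perp)$ are empty. Also $w_4=w_3$ has zero frequency, since $\Delta_B^\perp\setminus\Delta_C^\perp$ has the same count whether or not $A$ is imposed. We then compare $w_2,w_3,w_5,w_6$ and check that the minimum is $2^{|A|}(2^m-2^{|C|})$. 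The main check is that $w_6=2^{|A|-1}(2^{m+1}-2^{|B|+1}-2^{|C|}+2^{|B|})\ge w_3$ whenever its frequency is nonzero, using $|A|\le1$ and $|B|\le|C|$.
\item If $B\nsubseteq C$, the candidate minima are $w_8$ and $w_7$, with $w_8<w_7$ always. Hence the minimum is $w_8$ exactly when its frequency is positive. That frequency counts $\mathbf{b}\in\Delta_{B\cap C}^\perp$ with $\mathbf{b}\notin\Delta_B^\perp\cup\Delta_C^\perp$ and $\mathbf{b}\notin\Delta_A^\perp$. By inclusion--exclusion, such $\mathbf{b}$ fail to exist iff every $\mathbf{b}$ meeting both $B\setminus C$ and $C\setminus B$ automatically meets $A$. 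I expect this to be equivalent to neither $B\subseteq A\cup C$ nor $C\subseteq A\cup B$, though the direction of the stated dichotomy needs careful rechecking against the formula. Otherwise the minimum is $w_7$, whose frequency is then positive.
\end{itemize}

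For the necessity direction, we argue as in Theorem \ref{linear_1}. Suppose $|A|\ge2$, pick $r\ne s$ in $A$, and set $\mathbf{v}=2(\mathbf{x}_r*\mathbf{x}_s)$. Because $\Delta_{B,C}\ne\mathbb{Z}_2^m$, there is some $\mathbf{y}\in(\Delta_{B,C})^c$. We test a candidate $\mathbf{u}$ with $\mathbf{v}=\mathbf{u}G$ on the columns $2\mathbf{y}$, $\mathbf{e}_r+2\mathbf{y}$, $\mathbf{e}_s+2\mathbf{y}$ and $\mathbf{e}_r+\mathbf{e}_s+2\mathbf{y}$. The first three force $u_r=u_s=0$, using $\mathbf{u}\cdot2\mathbf{y}\in\{0,2\}$ and the value $0$ of $\mathbf{v}$ there. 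The last column then yields $0=2$, a contradiction.

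The main obstacle is this necessity step. The entries $\mathbf{e}_r+2\mathbf{y}$ must actually be distinct columns of $D$, and the overlap $\Supp(\mathbf{y})\cap\{r,s\}$ has to be handled as in subcase (b) of Theorem \ref{linear_1}. The values of $\mathbf{v}$ must also be read correctly: they are $2$ exactly where both the $r$- and $s$-coordinates are odd. After that, the remaining effort is the bookkeeping of which frequencies in Table \ref{table4} vanish.
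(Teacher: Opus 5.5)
The genuine gap is the minimum distance in the case $B\nsubseteq C$: you have $w_7$ and $w_8$ in the wrong order. From Table~\ref{table4}, $w_8-w_7=2^{|A|-1+|B\cap C|}>0$, so $w_7$ is the smallest candidate. Your rule ``the minimum is $w_8$ exactly when its frequency is positive'' is therefore false. For $m=4$, $A=\{1\}$, $B=\{2\}$, $C=\{3\}$ the $w_8$-frequency is $4$, yet $d=w_7=24<25=w_8$ (see the example after Theorem~\ref{minimalthm2}).

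The criterion you then write down is attached to the wrong weight, and your expected equivalence is reversed:
\begin{itemize}
\item The $w_8$-frequency counts $\mathbf b$ avoiding $B\cap C$ and meeting $B$, $C$ \emph{and} $A$, so it vanishes iff $A\subseteq B\cap C$.
\item Your condition (every $\mathbf b$ avoiding $B\cap C$ and meeting $B\setminus C$ and $C\setminus B$ also meets $A$) is the vanishing condition of the $w_7$-frequency.
\item That condition holds iff $B\setminus C\subseteq A$ or $C\setminus B\subseteq A$, i.e.\ iff $B\subseteq A\cup C$ or $C\subseteq A\cup B$. This is not the negation you expected, and it is the mismatch with the statement that you sensed.
\end{itemize}

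The correct argument runs as follows. In the ``otherwise'' branch the $w_7$-frequency is positive, so $d=w_7$. In the first branch it vanishes. Say $B\subseteq A\cup C$; then $B\nsubseteq C$ and $C\nsubseteq B$ (the latter forced by $|B|\le|C|$) give $A\nsubseteq C$, hence $A\nsubseteq B\cap C$. A $\mathbf b$ supported on one point each of $B\setminus C$, $C\setminus B$ and $A\setminus(B\cap C)$ then makes the $w_8$-frequency positive. Finally you must check that $w_8$ lies below all remaining weights, e.g.\ $w_4-w_8=2^{|A|-1+|B|}$, $w_6-w_8=2^{|A|-1+|C|}$, $w_9-w_8=2^{|A|-1+|B\cap C|}$.

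The rest is fine, with small corrections. Your necessity argument is sound, and it is cleaner than the template of Theorem~\ref{linear_1}. The four vectors $\mathbf x+2\mathbf y$ with $\mathbf x\in\{\mathbf 0,\mathbf e_r,\mathbf e_s,\mathbf e_r+\mathbf e_s\}$ are distinct elements of $D$. Moreover, $\mathbf u\cdot(\mathbf e_r+2\mathbf y)=u_r+2(\mathbf u\cdot\mathbf y)$ holds by linearity whatever $\Supp(\mathbf y)\cap\{r,s\}$ is. So the overlap you flag as the main obstacle needs no special handling, and no weight-frequency case is needed.

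For $B\subseteq C$, the $w_4$-frequency does not vanish in general (e.g.\ $m=3$, $A=\{2\}$, $B=\emptyset$, $C=\{1\}$ gives $4$). It is the $w_6$-frequency that is zero. Your slip is harmless only because $w_4=w_3$.

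Finally, like the paper, you take the type from Theorem~\ref{maintheorem2}, and this breaks down when $|B|=|C|=m-1$, $B\neq C$. Write $\{p\}=[m]\setminus C$ and $\{q\}=[m]\setminus B$; then $(\Delta_{B,C})^c=\{\mathbf y: y_p=y_q=1\}$ and $w_7=0$. If moreover $p,q\notin A$, then $c_D(\mathbf e_p+\mathbf e_q)=\mathbf 0$, so the dimension $m+|A|$ and the value $d=w_7$ both fail. That case has to be excluded.
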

 \begin{proof}
      The proof follows verbatim from Theorem \ref{linear_1}. \qed
 \end{proof}

In the following subsection, we study the minimality of the binary linear code $\Phi(\mathcal{C}_{D})$ after establishing its linearity in Subsection \ref{subseclinearity}.

\subsection{Minimality of the linear Gray image}
 \begin{theorem}\label{minimalthm1}
     Let $\mathcal{C}_{D}$ be as in Theorem \ref{maintheorem1}, and assume that $|A|=1$.
     \begin{enumerate}
         \item[(1)] If $B\subsetneq C$ and $A\subseteq C$, then $\Phi(\mathcal{C}_{D})$ is minimal.
         \item[(2)] If $B\nsubseteq C,|B|<|C|, B \subseteq A \cup C$ and $|A\cup C| = |B\cup C|$, then $\Phi(\mathcal{C}_{D})$ is minimal.
     \end{enumerate}
 \end{theorem}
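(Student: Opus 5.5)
The plan is to combine the linearity result with the Ashikhmin--Barg criterion. Since $|A|=1$, Theorem \ref{linear_1} shows that $\Phi(\mathcal{C}_{D})$ is a binary linear code. Its Hamming weights are the Lee weights of $\mathcal{C}_{D}$, which are listed in Table \ref{table1}. In each of the two cases I will specialise Table \ref{table1} to $|A|=1$, discard the weights whose frequency vanishes under the stated hypotheses, find $\wt_{\min}$ and $\wt_{\max}$, and check $\wt_{\min}/\wt_{\max}>1/2$ so that Lemma \ref{minimal_lemma} applies.

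\emph{Case (1).} Here $B\subsetneq C$ and $A\subseteq C$, so Corollary \ref{corollary3_maintheorem1} applies. With $|A|=1$ the only possible nonzero weights are
\[
2^{|C|+1},\qquad 2^{|C|+1}-2^{|B|},\qquad 2^{|C|+1}-2^{|B|+1}.
\]
The largest weight that actually occurs is $2^{|C|+1}$ if $A\cup B\neq C$, and $2^{|C|+1}-2^{|B|}$ otherwise. The minimum weight is $2^{|C|+1}-2^{|B|+1}$.
\begin{itemize}
\item When $|B|\le |C|-2$, the ratio is at least $(2^{|C|+1}-2^{|B|+1})/2^{|C|+1}\ge 3/4$, so Lemma \ref{minimal_lemma} gives minimality.
\item When $|B|=|C|-1$ and $A\not\subseteq B$, we have $A\cup B=C$. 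The ratio is then $2^{|B|+1}/(3\cdot 2^{|B|})=2/3$, which is again enough.
\item When $|B|=|C|-1$ and $A\subseteq B$, the ratio is exactly $1/2$ and the criterion fails. Worse, the weight $2^{|C|+1}$ then equals the full length $4(2^{|C|}-2^{|B|})=2^{|C|+1}$. So the all-ones word lies in the code and covers every other codeword, and the code cannot be minimal.
\end{itemize}
I expect this boundary subcase to be the main obstacle. I would first re-derive the frequency of the weight $2^{|A|+|C|}$ to confirm this. If it is confirmed, the statement must be read with $|B|<|C|-1$ or $A\not\subseteq B$ added, and I would state that restriction explicitly.

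\emph{Case (2).} Here $B\not\subseteq C$, $|B|<|C|$, $B\subseteq A\cup C$ and $|A\cup C|=|B\cup C|$. The hypotheses give $A\cup B\cup C=A\cup C$ and $|A\cup B\cup C|=|B\cup C|$, so the frequencies of $w_1$ and $w_5$ in Table \ref{table1} vanish. Put $s=2^{|B|}+2^{|C|}-2^{|B\cap C|+1}$. The remaining nonzero weights are
\[
w_2=2s,\quad w_3=2^{|C|+1},\quad w_4=s+2^{|C|},\quad w_6=s+2^{|B|},\quad w_7=2^{|B|+1}+2^{|C|+1},\quad w_8=2s+2^{|B\cap C|+1}.
\]
Using $|B\cap C|<|B|<|C|$, I will show that $w_7$ is the largest weight and $w_6$ is the smallest. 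The condition $2w_6>w_7$ then reduces to $|B|\ge |B\cap C|+2$, which gives minimality by Lemma \ref{minimal_lemma}.

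In the residual situation $|B|=|B\cap C|+1$ the ratio is exactly $1/2$. There I would do two things. First, check from the frequency formula whether $w_7$ actually occurs; if its frequency is zero, recompute the maximum as $\max(w_3,w_8)$, which restores the strict inequality. Second, if $w_7$ does occur, argue minimality directly: for $\mathbf{v}=\mathbf{b}+2\mathbf{c}$ of weight $w_7$ and $\mathbf{v}'$ of weight $w_6$, I will use the case description in the proof of Theorem \ref{maintheorem1} to show that $\Supp\Phi(c_D(\mathbf{v}'))\not\subseteq \Supp\Phi(c_D(\mathbf{v}))$.
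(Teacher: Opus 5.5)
Your route is the paper's: linearity of $\Phi(\mathcal{C}_D)$ from Theorem~\ref{linear_1}, then the Ashikhmin--Barg ratio from Lemma~\ref{minimal_lemma} read off Tables~\ref{table1} and~\ref{table1.3}. Two points need comment.

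\textbf{Part (1).} Your case analysis is correct and sharper than the paper's. The paper takes $\wt_{\max}=2^{|A|+|C|}$ unconditionally and concludes that the ratio exceeds $1/2$ iff $|B|\le|C|-2$. It never adds this condition to the statement, so its proof covers only that range. Its ``iff'' also misses your second bullet. When $A\cup B=C$, the weight $2^{|A|+|C|}$ has frequency $2^{|C|-|A\cup B|}-1=0$, so the ratio there is $2/3$.

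The frequency check you planned confirms your third bullet: when $A\subseteq B$ and $|B|=|C|-1$, the weight $2^{|C|+1}$ occurs with frequency $2^{|C|-|B|}-1=1$. This is a genuine counterexample to the statement as written. Take $m=2$, $A=B=\{1\}$, $C=\{1,2\}$. Then $D=\{(0,2),(2,2),(1,2),(3,2)\}$, and $\mathbf{v}=(0,1)$ gives the codeword $(2,2,2,2)$. So $\Phi(\mathcal{C}_D)$ is an $[8,3,4]$ code with weight enumerator $y^8+6x^4y^4+x^8$. It contains the all-ones word and is not minimal. Your added hypothesis ($|B|<|C|-1$ or $A\not\subseteq B$) is exactly what is needed.

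\textbf{Part (2).} Your ``main'' branch is empty. With $|A|=1$, the hypotheses $B\not\subseteq C$ and $B\subseteq A\cup C$ force $B\setminus C=A$. Hence $|B|=|B\cap C|+1$ always, and $|A\cup C|=|B\cup C|$ is automatic. The condition $|B|\ge|B\cap C|+2$ therefore never holds, and the whole proof must go through what you call the residual situation.

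The check you propose settles it. Put $N=|A\cup B\cup C|=|A\cup C|$ and use $A\cup B=A\cup(B\cap C)$. The frequency of $w_7$ is then $2^{N-|A\cup B|}-2^{N-|A\cup B|}-1+1=0$, and similarly the frequencies of $w_1,w_4,w_5$ vanish. So $\wt_{\max}=w_8$, and $2w_6>w_8$ iff $|B|>|B\cap C|$, which holds. Concretely, the code is three-weight with weights $2^{|C|}+2^{|B|}$, $2^{|C|+1}$ and $2^{|C|+1}+2^{|B|}$. This is precisely the paper's argument, which takes $\wt_{\min}=w_6$ and $\wt_{\max}=w_8$. Your second fallback, the direct support comparison, is not needed.

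So the proposal is sound provided you do two things. State part (1) with the extra hypothesis. Present part (2) entirely through the vanishing of the frequency of $w_7$, rather than as a boundary case of an argument that never applies.
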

 \begin{proof}
     For part (1), by Corollary \ref{corollary3_maintheorem1}, $\wt_{\min} = 2^{|A|}(2^{|C|}-2^{|B|})$ and $\wt_{\max} = 2^{|A|+|C|}$. Hence, by Lemma \ref{minimal_lemma},
\[
\frac{\wt_{\min}}{\wt_{\max}}>\frac{1}{2} \iff  |B| \leq |C|-2.
\]

For part (2), by Theorems \ref{maintheorem1} and \ref{linear_1}, $\wt_{\min} = 2^{|A|-1}(2^{|B|+1}+2^{|C|}-2^{|B\cap C|+1})$ and $\wt_{\max} = 2^{|A|}(2^{|B|}+2^{|C|}-2^{|B\cap C|})$. Hence, by Lemma \ref{minimal_lemma}, 
\[
\frac{\wt_{\min}}{\wt_{\max}}>\frac{1}{2} \iff  |B| \geq |B\cap C|+1,
\]
which always holds.
\qed
 \end{proof}
 \begin{example}
    This example illustrates Theorem \ref{minimalthm1} (2). Let $m=3, A = \{3\}, B=\{3\} $ and $ C = \{1,2\}$. Then the three-weight code $\Phi(\mathcal{C}_{D})$ is a minimal binary linear code with parameters $[16,4,6]$ and Hamming weight enumerator $y^{16}+2x^{6}y^{10}+7x^{8}y^{8}+6x^{10}y^{6}$, as verified by Magma.
 \end{example}
 \begin{theorem}\label{minimalthm2}
     Let $\mathcal{C}_{D}$ be as in Theorem \ref{maintheorem2}, and assume that $|A|=1$ and $|A\cup B\cup C|< m$.
     \begin{enumerate}
         \item[(1)] If $B\subseteq C$, and  $|C|\leq m-2$, then $\Phi(\mathcal{C}_{D})$ is minimal.
         \item[(2)] If $B\nsubseteq C$, $B \subseteq A\cup C$ or $C\subseteq A\cup B$, and $2^{m}+2^{|B\cap C|}>2(2^{|B|}+2^{|C|})$ then $\Phi(\mathcal{C}_{D})$ is minimal.
         \item[(3)] If $B\nsubseteq C, B \nsubseteq A\cup C$, $C\nsubseteq A\cup B$, and $2^{m}>2(2^{|B|}+2^{|C|})$, then $\Phi(\mathcal{C}_{D})$ is minimal.
     \end{enumerate}
 \end{theorem}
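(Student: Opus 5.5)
The plan is to apply the Ashikhmin--Barg criterion (Lemma \ref{minimal_lemma}) to the binary linear code $\Phi(\mathcal{C}_{D})$. Linearity is guaranteed by Theorem \ref{linear_2}, since $|A|=1$. Because $\Phi$ is an isometry, the Hamming weight distribution of $\Phi(\mathcal{C}_{D})$ is exactly the Lee weight distribution in Table \ref{table4}. I will read off $\wt_{\max}$ and $\wt_{\min}$ from that table and from Theorem \ref{linear_2}, specialised to $|A|=1$.

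\textbf{Step 1 (the maximum weight).} Put $|A|=1$ in Table \ref{table4}. The weight $w_1=2^{m+1}$ has frequency $2^{m-|A\cup B\cup C|}-1$. This frequency is positive exactly because of the hypothesis $|A\cup B\cup C|<m$, so $w_1$ actually occurs in the code. Every other weight $w_2,\dots,w_9$ has the form $2^{m+1}$ minus a positive combination of $2^{|B|},2^{|C|},2^{|B\cap C|}$. For example, $w_2=2^{m+1}-(2^{|B|}+2^{|C|}-2^{|B\cap C|})$ and $w_9=2(2^m-2^{|B|}-2^{|C|}+2^{|B\cap C|})$. Each of these subtracted quantities is positive, since $2^{|B|}+2^{|C|}-2^{|B\cap C|}\ge 2^{|B|}\geq 1$, and similarly for the others. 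Hence all $w_i\le 2^{m+1}$, and $\wt_{\max}=2^{m+1}$. Checking this sign for each of the nine rows, including the degenerate coincidences that can occur when $B\subseteq C$, is the only step that needs care. I expect it to be routine.

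\textbf{Step 2 (the minimum weight and the inequality).} The condition $\wt_{\min}/\wt_{\max}>1/2$ becomes $\wt_{\min}>2^m$. I take $\wt_{\min}$ from Theorem \ref{linear_2} with $|A|=1$ and treat the three cases separately.
\begin{itemize}
\item[(1)] If $B\subseteq C$, then $\wt_{\min}=2(2^m-2^{|C|})$. The inequality $2^{m+1}-2^{|C|+1}>2^m$ is equivalent to $2^m>2^{|C|+1}$, that is, $|C|\le m-2$.
\item[(2)] If $B\nsubseteq C$ and either $B\subseteq A\cup C$ or $C\subseteq A\cup B$, then $\wt_{\min}=2^{m+1}-2^{|B|+1}-2^{|C|+1}+2^{|B\cap C|}$. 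The inequality $\wt_{\min}>2^m$ rearranges to precisely the stated hypothesis $2^m+2^{|B\cap C|}>2(2^{|B|}+2^{|C|})$.
\item[(3)] In the remaining case, $\wt_{\min}=2(2^m-2^{|B|}-2^{|C|})$. The inequality $\wt_{\min}>2^m$ is equivalent to $2^m>2(2^{|B|}+2^{|C|})$.
\end{itemize}
In each case the hypothesis gives $\wt_{\min}/\wt_{\max}>1/2$, and Lemma \ref{minimal_lemma} yields that $\Phi(\mathcal{C}_{D})$ is minimal.
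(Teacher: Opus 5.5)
Your proposal is correct and is essentially the paper's proof: apply Lemma \ref{minimal_lemma} with $\wt_{\max}=2^{m+1}$ and with $\wt_{\min}$ taken from Theorem \ref{linear_2} (for $|A|=1$) in each of the three cases, which turns the ratio condition into exactly the stated hypotheses. Your explicit check that every $w_i\le 2^{m+1}$ fills in a step the paper only asserts, and it also shows that this upper bound alone suffices for the ratio test, so the hypothesis $|A\cup B\cup C|<m$ is needed only to make $2^{m+1}$ the attained maximum.
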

 \begin{proof}
     For part (1), $\wt_{\min}=2^{|A|}(2^m-2^{|C|})$ and $\wt_{\max} = 2^{m+|A|}$. Therefore, by Lemma \ref{minimal_lemma}, 
     \[
\frac{\wt_{\min}}{\wt_{\max}}>\frac{1}{2} \iff  |C| \leq m-2.
\]

For part (2), by Theorems \ref{maintheorem2} and \ref{linear_2}, $\wt_{\min} = 2^{|A|-1}(2^{m+1}-2^{|B|+1}-2^{|C|+1}+2^{|B\cap C|})$ and $\wt_{
\max} = 2^{m+|A|}$. Therefore, by Lemma \ref{minimal_lemma},
\[
\frac{\wt_{\min}}{\wt_{\max}}>\frac{1}{2} \iff  2^{m}+2^{|B\cap C|}> 2(2^{|B|}+2^{|C|}).
\]

Part (3) is analogous to part (2). \qed
 \end{proof}
 \begin{example}
     This example illustrates Theorem \ref{minimalthm2} (2). Let $m=4, A = \{1\}, B=\{1\} $ and $ C = \{2\}$. Then the five-weight code $\Phi(\mathcal{C}_{D})$ is a binary minimal linear code with parameters $[52,5,25]$ and Hamming weight enumerator $y^{52}+8x^{25}y^{27}+8x^{26}y^{26}+8x^{27}y^{25}+4x^{28}y^{24}+3x^{32}y^{20}$, as verified by Magma.
 \end{example}
  \begin{example}
  This example illustrates Theorem \ref{minimalthm2} (3). Let $m=4, A = \{1\}, B=\{2\} $ and $ C = \{3\}$. Then the seven-weight code $\Phi(\mathcal{C}_{D})$ is a binary minimal linear code with parameters $[52,5,24]$ and Hamming weight enumerator $y^{52}+2x^{24}y^{28}+4x^{25}y^{27}+8x^{26}y^{26}+8x^{27}y^{25}+4x^{28}y^{24}+4x^{29}y^{23}+x^{32}y^{20}$, as verified by Magma.
 \end{example}

\section{Conclusion}\label{Sec5}
In this article, we used defining sets consisting of punctured two-generator simplicial complexes or the complement of two-generator simplicial complexes to construct quaternary linear codes. To be particular,
\begin{itemize}
    \item In Theorem \ref{maintheorem1}, we used the defining set $\Delta_{A}+ 2 (\Delta_{B,C}\setminus \Delta_{B \cap C})$ and found one Plotkin-optimal family in Corollary \ref{corollary3_maintheorem1}.
    \item We modified the above defining set to $\Delta_{A}^{*}+ 2 (\Delta_{B,C}\setminus \Delta_{B \cap C})$ with $|D|>1$ and $A\cap(B\cup C) = \emptyset$ in Theorem \ref{projective} in order to construct projective quaternary linear codes. We obtained a projective family with a Plotkin-defect 1 in Corollary \ref{cor:plotkin-defect1}. We found at least 10 new or improved parameters of projective quaternary linear codes. Also, we reported 5 projective quaternary linear codes with best-known parameters, which might outperform the currently reported best-known codes due to their projectivity. These 15 codes are listed in Table \ref{tableprojective}.
    \item In Theorem \ref{maintheorem2}, we used the defining set $\Delta_{A}+ 2 (\Delta_{B,C})^{c}$ and obtained two families with Plotkin-defect 1 \& 2 in Corollary \ref{Cor:3.19}. We reported at least 13 new or improved parameters, including 7 optimal parameters, listed in Table \ref{table5}.
    \item Furthermore, in Subsection \ref{subseclinearity}, we proved that these codes have a linear Gray image if and only if $|A|\leq 1$. For the projective codes obtained in Theorem \ref{projective}, however, this condition is only sufficient. We also found a Griesmer code family in Remark \ref{distanceoptimal}, a distance-optimal and an at least almost dimension-optimal family of binary linear codes in Remark \ref{distanceoptimal2}, along with five families of minimal binary linear codes in Theorems \ref{minimalthm1} and \ref{minimalthm2}. Moreover, using the MacWilliams identities on the Gray image parameters of quaternary linear codes constructed in Corollary \ref{cor:projective2}, we found 9 new parameters from its dual code in Table \ref{table6}.
    \item Finally, we summarized our findings in Table \ref{tablelast} and presented a consolidated list of 32 new or improved parameters, including 7 optimal and 19 projective quaternary linear codes in Table \ref{tableconsolidated}. 
\end{itemize} 

We invite further research in finding quaternary projective linear codes with small Plotkin-defects arising from other suitable choices of defining sets that outperform the new or improved codes reported in Table \ref{tableconsolidated} having a Plotkin-defect of 2 or more.

\begin{table}[H]
\centering
\resizebox{\textwidth}{!}{
\begin{tabular}{|c|c|c|c|}
\hline
Ref. & $D$ & Remarks on $\mathcal{C}_D$ & Remarks on $\Phi(\mathcal{C}_D)$ \\ 
\hline
Thm \ref{maintheorem1}&$\Delta_{A}+ 2 (\Delta_{B,C}\setminus \Delta_{B \cap C})$&a Plotkin-optimal family&a Griesmer code family\\ 
&&&\cite[Theorem 5]{chen2025griesmer}\\
&&& two minimal code families \\ \hline
Thm \ref{projective}&$\Delta_{A}^{*}+ 2 (\Delta_{B,C}\setminus \Delta_{B \cap C})$ &projective families& a distance-optimal family\\ 
&$|D|>1$, $A\cap(B\cup C) = \emptyset$&a family with Plotkin-defect 1&\\
&&at least 10 new or improved and & \\
&&$5$ best-known parameters&\\
&&at least 9 new parameters from $(\mathcal{C}_{D})^\perp$&an at least almost dimension-optimal\\
&&&family from $(\Phi(\mathcal{C}_{D}))^\perp$ \\
\hline
Thm \ref{maintheorem2}&$\Delta_{A}+ 2 (\Delta_{B,C})^{c}$&
 two families with Plotkin-defect 1 \& 2 & three minimal code families\\
&& 
at least 13 new or improved & \\
&&
including 7 optimal parameters& \\ 
\hline
\end{tabular}
}
\caption{Summary of results obtained in this article}
\label{tablelast}
\end{table}

\begin{table}[H]
\centering
\begin{minipage}{0.48\textwidth}
\centering
\begin{tabular}{|c|c|c|c|c|}
\hline
$n$ & Type & $d_L$ &$d_L^{best}$& Remark \\
\hline
$12^*$ & $4^2 2^3$ & $8$ & $4$ & Plotkin-defect $=4$\\
\hline
$12^*$ & $4^1 2^4$ & $8$ & $\times$ &  Plotkin-defect $=4$\\
\hline
$12^*$ & $4^1 2^5$ & $8$ & $6$ & Plotkin-defect $=4$\\
\hline
$12$ & $4^1 2^3$ & $10$ & $6$ & Plotkin-defect $=2$ \\\hline
$12^*$ & $4^7 2^4$ & $4$ & $\times$
&---\\\hline
$14$ & $4^1 2^3$ & $13$ & $12$ & optimal\\
&&&&Plotkin-defect $=1$ \\\hline
$18$ & $4^1 2^3$ & $16$ & $12$ &  Plotkin-defect $=2$ \\\hline
$22$ & $4^1 2^3$ & $21$ & $\times$ & optimal\\
&&&&Plotkin-defect $=1$ \\\hline
$24^*$ & $4^2 2^4$ & $16$ & $8$ & ---\\\hline
$24^*$ & $4^{18} 2^5$ & $4$ & $\times$
& ---\\\hline
$28^*$ & $4^1 2^5$ & $24$ & $\times$ &  Plotkin-defect $=4$\\\hline
$28^*$ & $4^{22} 2^{5}$ & $4$ & $\times$
& --- \\
\hline
$28$ & $4^2 2^2$ & $26$ & $\times$ &  Plotkin-defect $=2$ \\\hline
$30$ & $4^1 2^4$ & $29$ & $28$ & optimal\\
&&&&Plotkin-defect $=1$ 
\\\hline
$36$ & $4^2 2^2$ & $34$ & $\times$ &  Plotkin-defect $=2$ \\\hline
$36^*$ & $4^2 2^4$ & $24$ & $18$ & ---\\\hline
\end{tabular}
\end{minipage}
\hfill
\begin{minipage}{0.48\textwidth}
\centering
\begin{tabular}{|c|c|c|c|c|}
\hline
$n$ & Type & $d_L$ &$d_L^{best}$& Remark \\
\hline
$48^*$ & $4^{41} 2^6$ & $4$ & $\times$
& ---\\\hline
$48^*$ & $4^2 2^5$ & $32$ & $\times$ & ---\\\hline
$50$ & $4^1 2^4$ & $48$ & $20$ &  Plotkin-defect $=2$ \\\hline
$54$ & $4^1 2^4$ & $53$ & $\times$ & optimal\\
&&&&Plotkin-defect $=1$ \\\hline
$56^*$ & $4^{49} 2^6$ & $4$ & $\times$
& ---\\\hline
$60^*$ & $4^1 2^6$ & $56$ & $52$ & Plotkin-defect $=4$\\\hline
$60^*$ & $4^{53} 2^6$ & $4$ & $\times$
& ---\\\hline
$62$ & $4^1 2^5$ & $61$ & $60$ & optimal\\
&&&&Plotkin-defect $=1$ \\\hline
$63^*$ & $4^{56} 2^6$ & $4$ & $\times$
& ---\\\hline
$72^*$ & $4^2 2^5$ & $48$ & $\times$ & ---\\\hline
$84^*$ & $4^2 2^5$ & $56$ & $\times$ & ---\\\hline
$94$ & $4^1 2^5$ & $93$ & $\times$ & optimal\\
&&&&Plotkin-defect $=1$ \\\hline
$96^*$ & $4^{88} 2^7$ & $4$ & $\times$
& ---\\\hline
$100$ & $4^2 2^3$ & $98$ & $\times$ &  Plotkin-defect $=2$ \\\hline
$110$ & $4^1 2^5$ & $109$ & $\times$ & optimal\\
&&&&Plotkin-defect $=1$ \\\hline
$112^*$ & $4^{104} 2^7$ & $4$ & $\times$
& ---\\\hline
\end{tabular}
\end{minipage}
\caption{\textbf{32 new or improved} quaternary linear codes obtained in this article}
\label{tableconsolidated}
    \begin{tablenotes}
            \item[] *: projective
            \item[] $d_{L}^{best}$: best-known Lee distance for the given length and type listed in the database \cite{aydin2022updated}
        \end{tablenotes}

\end{table}

\section*{Acknowledgements}
The first author expresses gratitude to MHRD, India, for financial support in the form of a Senior Research Fellowship at the Indian Institute of Technology Delhi. The FIST Lab (Project No. SR/FST/MS-1/2019/45) was used to perform the computations.

\section*{Declarations}
\subsection*{Conflict of Interest}
All authors declare that they have no conflict of interest.
\bibliographystyle{abbrv}
\bibliography{ref}

@article {shi2022few-weight,
    AUTHOR = {Shi, Minjia and Li, Xiaoxiao},
     TITLE = {Few-weight codes over a non-chain ring associated with
              simplicial complexes and their distance optimal {G}ray image},
   JOURNAL = {Finite Fields Appl.},
  FJOURNAL = {Finite Fields and their Applications},
    VOLUME = {80},
      YEAR = {2022},
     PAGES = {Paper No. 101994, 15},
      ISSN = {1071-5797,1090-2465},
   MRCLASS = {94B25 (94B60)},
  MRNUMBER = {4385855},
MRREVIEWER = {Wei\ Zhao},
       DOI = {10.1016/j.ffa.2022.101994},
       URL = {https://doi.org/10.1016/j.ffa.2022.101994},
}

@article {Chang_Hyun2018simplicial,
    AUTHOR = {Chang, Seunghwan and Hyun, Jong Yoon},
     TITLE = {Linear codes from simplicial complexes},
   JOURNAL = {Des. Codes Cryptogr.},
  FJOURNAL = {Designs, Codes and Cryptography. An International Journal},
    VOLUME = {86},
      YEAR = {2018},
    NUMBER = {10},
     PAGES = {2167--2181},
      ISSN = {0925-1022,1573-7586},
   MRCLASS = {94B05 (94A60 94C10)},
  MRNUMBER = {3845307},
MRREVIEWER = {Simon\ \v Spacapan},
       DOI = {10.1007/s10623-017-0442-5},
       URL = {https://doi.org/10.1007/s10623-017-0442-5},
}

@article{aydin2022updated,
  title={An Updated Database of $\mathbb{Z}_{4}$ Codes},
  author={Aydin, Nuh and Lu, Yiang and Onta, Vishad R},
  journal={arXiv preprint arXiv:2208.06832},
  year={2022},
  note={Available: \url{http://quantumcodes.info/Z4/}}
}

@article {wu2024quaternary,
    AUTHOR = {Wu, Yansheng and Li, Chao and Zhang, Lin and Xiao, Fu},
     TITLE = {Quaternary codes and their binary images},
   JOURNAL = {IEEE Trans. Inform. Theory},
  FJOURNAL = {Institute of Electrical and Electronics Engineers.
              Transactions on Information Theory},
    VOLUME = {70},
      YEAR = {2024},
    NUMBER = {7},
     PAGES = {4759--4768},
      ISSN = {0018-9448,1557-9654},
   MRCLASS = {94B05 (94B60)},
  MRNUMBER = {4762860},
       DOI = {10.1109/tit.2024.3358332},
       URL = {https://doi.org/10.1109/tit.2024.3358332},
}

@article {tang2023general,
    AUTHOR = {Tang, Hopein Christofen and Suprijanto, Djoko},
     TITLE = {A general family of {P}lotkin-optimal two-weight codes over {$\mathbb{Z}_{4}$}},
   JOURNAL = {Des. Codes Cryptogr.},
  FJOURNAL = {Designs, Codes and Cryptography. An International Journal},
    VOLUME = {91},
      YEAR = {2023},
    NUMBER = {5},
     PAGES = {1737--1750},
      ISSN = {0925-1022,1573-7586},
   MRCLASS = {94B05 (05E30)},
  MRNUMBER = {4578162},
MRREVIEWER = {Anuradha\ Sharma},
       DOI = {10.1007/s10623-022-01176-3},
       URL = {https://doi.org/10.1007/s10623-022-01176-3},
}

@article {shi2020two,
    AUTHOR = {Shi, Minjia and Xuan, Wang and Sol\'e, Patrick},
     TITLE = {Two families of two-weight codes over {$\mathbb{Z}_4$}},
   JOURNAL = {Des. Codes Cryptogr.},
  FJOURNAL = {Designs, Codes and Cryptography. An International Journal},
    VOLUME = {88},
      YEAR = {2020},
    NUMBER = {12},
     PAGES = {2493--2505},
      ISSN = {0925-1022,1573-7586},
   MRCLASS = {94B05 (94B65)},
  MRNUMBER = {4171313},
MRREVIEWER = {Yan\ Liu},
       DOI = {10.1007/s10623-020-00796-x},
       URL = {https://doi.org/10.1007/s10623-020-00796-x},
}

@book {huffman2010book,
    AUTHOR = {Huffman, W. Cary and Pless, Vera},
     TITLE = {Fundamentals of error-correcting codes},
 PUBLISHER = {Cambridge University Press, Cambridge},
      YEAR = {2003},
     PAGES = {xviii+646},
      ISBN = {0-521-78280-5},
   MRCLASS = {94Bxx (94-01)},
  MRNUMBER = {1996953},
MRREVIEWER = {H.\ F.\ Mattson, Jr.},
       DOI = {10.1017/CBO9780511807077},
       URL = {https://doi.org/10.1017/CBO9780511807077},
}

@article{hammons1994Z4,
	  title={The $\mathbb{Z}_4$-linearity of \text{Kerdock}, \text{Preparata}, \text{Goethals}, and related codes},
  author={Hammons, A Roger and Kumar, P Vijay and Calderbank, A Robert and Sloane, Neil JA and Sol{\'e}, Patrick},
  journal={IEEE Transactions on Information Theory},
  volume={40},
  number={2},
  pages={301--319},
  year={1994},
  publisher={IEEE}
}

@book {wan1997quaternary,
    AUTHOR = {Wan, Zhe Xian},
     TITLE = {Quaternary codes},
    SERIES = {Series on Applied Mathematics},
    VOLUME = {8},
 PUBLISHER = {World Scientific Publishing Co., Inc., River Edge, NJ},
      YEAR = {1997},
     PAGES = {xii+242},
      ISBN = {981-02-3274-8},
   MRCLASS = {94B05 (11T71 94-02 94B15)},
  MRNUMBER = {1609736},
MRREVIEWER = {Bram\ van Asch},
       DOI = {10.1142/9789812798121},
       URL = {https://doi.org/10.1142/9789812798121},
}

@article {shi2017consta,
    AUTHOR = {Shi, Minjia and Qian, Liqin and Sok, Lin and Aydin, Nuh and
              Sol\'e, Patrick},
     TITLE = {On constacyclic codes over {$\mathbb{Z}_{4}[u]/\langle
              u^2-1\rangle$} and their {G}ray images},
   JOURNAL = {Finite Fields Appl.},
  FJOURNAL = {Finite Fields and their Applications},
    VOLUME = {45},
      YEAR = {2017},
     PAGES = {86--95},
      ISSN = {1071-5797,1090-2465},
   MRCLASS = {94B15 (11T71)},
  MRNUMBER = {3631355},
MRREVIEWER = {Mehmet\ E.\ K\"oro\u glu},
       DOI = {10.1016/j.ffa.2016.11.016},
       URL = {https://doi.org/10.1016/j.ffa.2016.11.016},
}

@article {Meng2024generalized,
    AUTHOR = {Meng, Xiangrui and Gao, Jian and Fu, Fang-Wei},
     TITLE = {On generalized quasi-cyclic codes over {$\mathbb{Z}_4$}},
   JOURNAL = {Discrete Math.},
  FJOURNAL = {Discrete Mathematics},
    VOLUME = {347},
      YEAR = {2024},
    NUMBER = {3},
     PAGES = {Paper No. 113821, 15},
      ISSN = {0012-365X,1872-681X},
   MRCLASS = {94B15},
  MRNUMBER = {4671061},
MRREVIEWER = {Zlatko\ G.\ Varbanov},
       DOI = {10.1016/j.disc.2023.113821},
       URL = {https://doi.org/10.1016/j.disc.2023.113821},
}

@article {habibul2021z4,
    AUTHOR = {Islam, Habibul and Prakash, Om},
     TITLE = {New {$\mathbb{Z}_4$} codes from constacyclic codes over a
              non-chain ring},
   JOURNAL = {Comput. Appl. Math.},
  FJOURNAL = {Computational \& Applied Mathematics},
    VOLUME = {40},
      YEAR = {2021},
    NUMBER = {1},
     PAGES = {Paper No. 12, 17},
      ISSN = {2238-3603,1807-0302},
   MRCLASS = {94B15 (94B05 94B35 94B60)},
  MRNUMBER = {4195964},
MRREVIEWER = {Hiram\ H.\ L\'opez},
       DOI = {10.1007/s40314-020-01398-y},
       URL = {https://doi.org/10.1007/s40314-020-01398-y},
}

@article {maheshanand2017skew,
    AUTHOR = {Sharma, Amit and Bhaintwal, Maheshanand},
     TITLE = {A class of skew-cyclic codes over {$\mathbb{Z}_{4}+u\mathbb {Z}_4$} with
              derivation},
   JOURNAL = {Adv. Math. Commun.},
  FJOURNAL = {Advances in Mathematics of Communications},
    VOLUME = {12},
      YEAR = {2018},
    NUMBER = {4},
     PAGES = {723--739},
      ISSN = {1930-5346,1930-5338},
   MRCLASS = {94B15},
  MRNUMBER = {3917639},
MRREVIEWER = {Ghulam\ Mohammad},
       DOI = {10.3934/amc.2018043},
       URL = {https://doi.org/10.3934/amc.2018043},
}

@article {hyun2020infinite,
    AUTHOR = {Hyun, Jong Yoon and Lee, Jungyun and Lee, Yoonjin},
     TITLE = {Infinite families of optimal linear codes constructed from
              simplicial complexes},
   JOURNAL = {IEEE Trans. Inform. Theory},
  FJOURNAL = {Institute of Electrical and Electronics Engineers.
              Transactions on Information Theory},
    VOLUME = {66},
      YEAR = {2020},
    NUMBER = {11},
     PAGES = {6762--6773},
      ISSN = {0018-9448,1557-9654},
   MRCLASS = {94B05 (94A60 94B65)},
  MRNUMBER = {4173606},
       DOI = {10.1109/TIT.2020.2993179},
       URL = {https://doi.org/10.1109/TIT.2020.2993179},
}

@article {ding2007cyclotomic,
    AUTHOR = {Ding, Cunsheng and Niederreiter, Harald},
     TITLE = {Cyclotomic linear codes of order 3},
   JOURNAL = {IEEE Trans. Inform. Theory},
  FJOURNAL = {Institute of Electrical and Electronics Engineers.
              Transactions on Information Theory},
    VOLUME = {53},
      YEAR = {2007},
    NUMBER = {6},
     PAGES = {2274--2277},
      ISSN = {0018-9448,1557-9654},
   MRCLASS = {94B05 (05E30)},
  MRNUMBER = {2321882},
MRREVIEWER = {Marcus\ Greferath},
       DOI = {10.1109/TIT.2007.896886},
       URL = {https://doi.org/10.1109/TIT.2007.896886},
}

@article{yadav2025optimal,
  title={Optimal binary codes from $\mathcal{C}_{D}$-codes over a non-chain ring},
  author={Yadav, Ankit and Sarma, Ritumoni and Bhagat, Anuj Kumar},
  journal={arXiv preprint arXiv:2510.09057},
  year={2025}
}

@article {vidya2024nonunital,
    AUTHOR = {Sagar, Vidya and Sarma, Ritumoni},
     TITLE = {Codes over the non-unital non-commutative ring {$E$} using
              simplicial complexes},
   JOURNAL = {IEEE Trans. Inform. Theory},
  FJOURNAL = {Institute of Electrical and Electronics Engineers.
              Transactions on Information Theory},
    VOLUME = {70},
      YEAR = {2024},
    NUMBER = {5},
     PAGES = {3373--3384},
      ISSN = {0018-9448,1557-9654},
   MRCLASS = {94B05 (94B15 94B60 94B65)},
  MRNUMBER = {4740787},
MRREVIEWER = {Elif\ Segah\ Oztas},
       DOI = {10.1109/tit.2024.3374420},
       URL = {https://doi.org/10.1109/tit.2024.3374420},
}

@article {Mondal2024mixed_alphabet,
    AUTHOR = {Mondal, Nilay Kumar and Lee, Yoonjin},
     TITLE = {Optimal binary few-weight codes using a mixed alphabet ring
              and simplicial complexes},
   JOURNAL = {IEEE Trans. Inform. Theory},
  FJOURNAL = {Institute of Electrical and Electronics Engineers.
              Transactions on Information Theory},
    VOLUME = {70},
      YEAR = {2024},
    NUMBER = {7},
     PAGES = {4865--4878},
      ISSN = {0018-9448,1557-9654},
   MRCLASS = {94B05 (94B25)},
  MRNUMBER = {4762868},
       DOI = {10.1109/tit.2024.3384067},
       URL = {https://doi.org/10.1109/tit.2024.3384067},
}

@article {wu2020optimal,
    AUTHOR = {Wu, Yansheng and Zhu, Xiaomeng and Yue, Qin},
     TITLE = {Optimal few-weight codes from simplicial complexes},
   JOURNAL = {IEEE Trans. Inform. Theory},
  FJOURNAL = {Institute of Electrical and Electronics Engineers.
              Transactions on Information Theory},
    VOLUME = {66},
      YEAR = {2020},
    NUMBER = {6},
     PAGES = {3657--3663},
      ISSN = {0018-9448,1557-9654},
   MRCLASS = {94B05 (94B65)},
  MRNUMBER = {4115124},
MRREVIEWER = {Figen\ \"Oke},
       DOI = {10.1109/TIT.2019.2946840},
       URL = {https://doi.org/10.1109/TIT.2019.2946840},
}

@article {hu2024new,
    AUTHOR = {Hu, Zhao and Xu, Yunge and Li, Nian and Zeng, Xiangyong and
              Wang, Lisha and Tang, Xiaohu},
     TITLE = {New constructions of optimal linear codes from simplicial
              complexes},
   JOURNAL = {IEEE Trans. Inform. Theory},
  FJOURNAL = {Institute of Electrical and Electronics Engineers.
              Transactions on Information Theory},
    VOLUME = {70},
      YEAR = {2024},
    NUMBER = {3},
     PAGES = {1823--1835},
      ISSN = {0018-9448,1557-9654},
   MRCLASS = {94B05},
  MRNUMBER = {4709763},
       DOI = {10.1109/tit.2023.3305609},
       URL = {https://doi.org/10.1109/tit.2023.3305609},
}

@article {chen2025optimal,
    AUTHOR = {Chen, Bing and Xu, Yunge and Hu, Zhao and Li, Nian and Zeng,
              Xiangyong},
     TITLE = {Optimal linear codes with few weights from simplicial
              complexes},
   JOURNAL = {IEEE Trans. Inform. Theory},
  FJOURNAL = {Institute of Electrical and Electronics Engineers.
              Transactions on Information Theory},
    VOLUME = {71},
      YEAR = {2025},
    NUMBER = {5},
     PAGES = {3531--3543},
      ISSN = {0018-9448,1557-9654},
   MRCLASS = {94B05},
  MRNUMBER = {4897815},
       DOI = {10.1109/tit.2025.3550182},
       URL = {https://doi.org/10.1109/tit.2025.3550182},
}

@article{wu2024survey,
  title={A survey on codes from simplicial complexes},
  author={Wu, Yansheng and Li, Chao and Hyun, Jong Yoon},
  journal={arXiv preprint arXiv:2409.16310},
  year={2024}
}

@article {anuj2025subfield,
    AUTHOR = {Bhagat, Anuj Kumar and Sarma, Ritumoni and Sagar, Vidya},
     TITLE = {Subfield codes of {$C_D$}-codes over {$\mathbb{F}_2[x]/\langle
              x^3-x\rangle$}},
   JOURNAL = {Discrete Math.},
  FJOURNAL = {Discrete Mathematics},
    VOLUME = {348},
      YEAR = {2025},
    NUMBER = {1},
     PAGES = {Paper No. 114223, 22},
      ISSN = {0012-365X,1872-681X},
   MRCLASS = {94B05 (94B60)},
  MRNUMBER = {4791283},
MRREVIEWER = {Jin\ Li},
       DOI = {10.1016/j.disc.2024.114223},
       URL = {https://doi.org/10.1016/j.disc.2024.114223},
}

@article {chen2025griesmer,
    AUTHOR = {Chen, Hao},
     TITLE = {Griesmer and optimal linear codes from the affine
              {S}olomon-{S}tiffler construction},
   JOURNAL = {IEEE Trans. Inform. Theory},
  FJOURNAL = {Institute of Electrical and Electronics Engineers.
              Transactions on Information Theory},
    VOLUME = {71},
      YEAR = {2025},
    NUMBER = {9},
     PAGES = {6834--6843},
      ISSN = {0018-9448,1557-9654},
   MRCLASS = {94B05 (94B65)},
  MRNUMBER = {4949414},
MRREVIEWER = {Joel\ E.\ Iiams},
       DOI = {10.1109/tit.2025.3588778},
       URL = {https://doi.org/10.1109/tit.2025.3588778},
}

@article {wu2024two,
    AUTHOR = {Wu, Yansheng and Li, Bowen and Fan, Weibei and Xiao, Fu},
     TITLE = {Two infinite families of quaternary codes},
   JOURNAL = {IEEE Trans. Inform. Theory},
  FJOURNAL = {Institute of Electrical and Electronics Engineers.
              Transactions on Information Theory},
    VOLUME = {70},
      YEAR = {2024},
    NUMBER = {12},
     PAGES = {8723--8733},
      ISSN = {0018-9448,1557-9654},
   MRCLASS = {94B15 (94B65)},
  MRNUMBER = {4832297},
       DOI = {10.1109/tit.2024.3454959},
       URL = {https://doi.org/10.1109/tit.2024.3454959},
}

@article {wang2026trace,
    AUTHOR = {Wang, Zhexin and Li, Nian and Zeng, Xiangyong and Tang,
              Xiaohu},
     TITLE = {Trace {C}odes {O}ver {$\mathbb{Z}_{4}$} and {T}heir {L}ee {W}eight
              {D}istributions},
   JOURNAL = {IEEE Trans. Inform. Theory},
  FJOURNAL = {Institute of Electrical and Electronics Engineers.
              Transactions on Information Theory},
    VOLUME = {72},
      YEAR = {2026},
    NUMBER = {2},
     PAGES = {1051--1066},
      ISSN = {0018-9448,1557-9654},
   MRCLASS = {99-06},
  MRNUMBER = {5021061},
       DOI = {10.1109/tit.2025.3645266},
       URL = {https://doi.org/10.1109/tit.2025.3645266},
}

@article {hyun2025multivariable,
    AUTHOR = {Hyun, Jong Yoon and Jeong, Jihye and Lee, Yoonjin},
     TITLE = {Constructing optimal few weight quaternary linear codes via
              multivariable functions},
   JOURNAL = {Cryptogr. Commun.},
  FJOURNAL = {Cryptography and Communications. Discrete Structures, Boolean
              Functions and Sequences},
    VOLUME = {17},
      YEAR = {2025},
    NUMBER = {1},
     PAGES = {57--85},
      ISSN = {1936-2447,1936-2455},
   MRCLASS = {11T71 (94B05)},
  MRNUMBER = {4856637},
       DOI = {10.1007/s12095-024-00739-6},
       URL = {https://doi.org/10.1007/s12095-024-00739-6},
}

@article {shi2019trace,
    AUTHOR = {Shi, Minjia and Liu, Yan and Randriam, Hugues and Sok, Lin and
              Sol\'e, Patrick},
     TITLE = {Trace codes over {$\mathbb{Z}_4$}, and {B}oolean functions},
   JOURNAL = {Des. Codes Cryptogr.},
  FJOURNAL = {Designs, Codes and Cryptography. An International Journal},
    VOLUME = {87},
      YEAR = {2019},
    NUMBER = {6},
     PAGES = {1447--1455},
      ISSN = {0925-1022,1573-7586},
   MRCLASS = {94B05},
  MRNUMBER = {3947355},
       DOI = {10.1007/s10623-018-0542-x},
       URL = {https://doi.org/10.1007/s10623-018-0542-x},
}

@article{zhu2019new,
  title={New quaternary codes derived from posets of the disjoint union of two chains},
  author={Zhu, Xiaomeng and Wu, Yansheng and Yue, Qin},
  journal={IEEE Communications Letters},
  volume={24},
  number={1},
  pages={20--24},
  year={2019},
  publisher={IEEE}
}

@article {yuan2006secret,
    AUTHOR = {Yuan, Jin and Ding, Cunsheng},
     TITLE = {Secret sharing schemes from three classes of linear codes},
   JOURNAL = {IEEE Trans. Inform. Theory},
  FJOURNAL = {Institute of Electrical and Electronics Engineers.
              Transactions on Information Theory},
    VOLUME = {52},
      YEAR = {2006},
    NUMBER = {1},
     PAGES = {206--212},
      ISSN = {0018-9448,1557-9654},
   MRCLASS = {94A62 (94B05)},
  MRNUMBER = {2237344},
       DOI = {10.1109/TIT.2005.860412},
       URL = {https://doi.org/10.1109/TIT.2005.860412},
}

@article {shamir1979secret,
    AUTHOR = {Shamir, Adi},
     TITLE = {How to share a secret},
   JOURNAL = {Comm. ACM},
  FJOURNAL = {Communications of the Association for Computing Machinery},
    VOLUME = {22},
      YEAR = {1979},
    NUMBER = {11},
     PAGES = {612--613},
      ISSN = {0001-0782},
   MRCLASS = {94B99 (68E99)},
  MRNUMBER = {549252},
       DOI = {10.1145/359168.359176},
       URL = {https://doi.org/10.1145/359168.359176},
}

@incollection {chabanne2014twoparty,
    AUTHOR = {Chabanne, Herv\'e{} and Cohen, G\'erard and Patey, Alain},
     TITLE = {Towards secure two-party computation from the wire-tap
              channel},
 BOOKTITLE = {Information security and cryptology---{ICISC} 2013},
    SERIES = {Lecture Notes in Comput. Sci.},
    VOLUME = {8565},
     PAGES = {34--46},
 PUBLISHER = {Springer, Cham},
      YEAR = {2014},
      ISBN = {978-3-319-12160-4; 978-3-319-12159-8},
   MRCLASS = {94A60},
  MRNUMBER = {3297255},
       DOI = {10.1007/978-3-319-12160-4\_3},
       URL = {https://doi.org/10.1007/978-3-319-12160-4_3},
}

@article {Ashikhmin1998,
    AUTHOR = {Ashikhmin, A. and Barg, A.},
     TITLE = {Minimal vectors in linear codes},
   JOURNAL = {IEEE Trans. Inform. Theory},
  FJOURNAL = {Institute of Electrical and Electronics Engineers.
              Transactions on Information Theory},
    VOLUME = {44},
      YEAR = {1998},
    NUMBER = {5},
     PAGES = {2010--2017},
      ISSN = {0018-9448,1557-9654},
   MRCLASS = {94B05 (94A62)},
  MRNUMBER = {1664103},
MRREVIEWER = {Bram\ van Asch},
       DOI = {10.1109/18.705584},
       URL = {https://doi.org/10.1109/18.705584},
}

@article{bosma1997magma,
  title={The {M}agma algebra system \textnormal{I}: The user language},
  author={Bosma, Wieb and Cannon, John and Playoust, Catherine},
  journal={J. Symb. Comput.},
  volume={24},
  number={3-4},
  pages={235--265},
  year={1997},
  publisher={Elsevier}
}

@article{wu2020binary,
  title={Binary {LCD} codes and self-orthogonal codes via simplicial complexes},
  author={Wu, Yansheng and Lee, Yoonjin},
  journal={IEEE Communications Letters},
  volume={24},
  number={6},
  pages={1159--1162},
  year={2020},
  publisher={IEEE}
}

@article {Wyner1968upperbound,
    AUTHOR = {Wyner, A. D. and Graham, R. L.},
     TITLE = {An upper bound on minimum distance for a {$k$}-ary code},
   JOURNAL = {Information and Control},
  FJOURNAL = {Information and Control},
    VOLUME = {13},
      YEAR = {1968},
     PAGES = {46--52},
      ISSN = {0019-9958,1878-2981},
   MRCLASS = {94.10},
  MRNUMBER = {235901},
}

@inproceedings{carlet2000one,
  title={One-weight $\mathbb{Z}_{4}$-linear codes},
  author={Carlet, Claude},
  booktitle={Coding Theory, Cryptography and Related Areas: Proceedings of an International Conference on Coding Theory, Cryptography and Related Areas, held in Guanajuato, Mexico, in April 1998},
  pages={57--72},
  year={2000},
  organization={Springer}
}

@article {dougherty2001maximum,
    AUTHOR = {Dougherty, Steven T. and Shiromoto, Keisuke},
     TITLE = {Maximum distance codes over rings of order 4},
   JOURNAL = {IEEE Trans. Inform. Theory},
  FJOURNAL = {Institute of Electrical and Electronics Engineers.
              Transactions on Information Theory},
    VOLUME = {47},
      YEAR = {2001},
    NUMBER = {1},
     PAGES = {400--404},
      ISSN = {0018-9448,1557-9654},
   MRCLASS = {94B65},
  MRNUMBER = {1820385},
       DOI = {10.1109/18.904544},
       URL = {https://doi.org/10.1109/18.904544},
}

@article {Kiermaier2016new,
    AUTHOR = {Kiermaier, Michael and Wassermann, Alfred and Zwanzger,
              Johannes},
     TITLE = {New upper bounds on binary linear codes and a {$\mathbb{Z}_4$}-code with a better-than-linear {G}ray image},
   JOURNAL = {IEEE Trans. Inform. Theory},
  FJOURNAL = {Institute of Electrical and Electronics Engineers.
              Transactions on Information Theory},
    VOLUME = {62},
      YEAR = {2016},
    NUMBER = {12},
     PAGES = {6768--6771},
      ISSN = {0018-9448,1557-9654},
   MRCLASS = {94B05},
  MRNUMBER = {3578230},
MRREVIEWER = {K.\ T.\ Phelps},
       DOI = {10.1109/TIT.2016.2612654},
       URL = {https://doi.org/10.1109/TIT.2016.2612654},
}

@article {Kiermaier2011hyperoval,
    AUTHOR = {Kiermaier, Michael and Zwanzger, Johannes},
     TITLE = {A {$\mathbb{Z}_4$}-linear code of high minimum {L}ee distance
              derived from a hyperoval},
   JOURNAL = {Adv. Math. Commun.},
  FJOURNAL = {Advances in Mathematics of Communications},
    VOLUME = {5},
      YEAR = {2011},
    NUMBER = {2},
     PAGES = {275--286},
      ISSN = {1930-5346,1930-5338},
   MRCLASS = {94B05 (51C05)},
  MRNUMBER = {2801593},
MRREVIEWER = {Giorgio\ Faina},
       DOI = {10.3934/amc.2011.5.275},
       URL = {https://doi.org/10.3934/amc.2011.5.275},
}

@article {Kiermaier2013new,
    AUTHOR = {Kiermaier, Michael and Zwanzger, Johannes},
     TITLE = {New ring-linear codes from dualization in projective
              {H}jelmslev geometries},
   JOURNAL = {Des. Codes Cryptogr.},
  FJOURNAL = {Designs, Codes and Cryptography. An International Journal},
    VOLUME = {66},
      YEAR = {2013},
    NUMBER = {1-3},
     PAGES = {39--55},
      ISSN = {0925-1022,1573-7586},
   MRCLASS = {94B05 (51C05)},
  MRNUMBER = {3016555},
MRREVIEWER = {Edgar\ Mart\'inez-Moro},
       DOI = {10.1007/s10623-012-9650-1},
       URL = {https://doi.org/10.1007/s10623-012-9650-1},
}

@article {shi2014optimal,
    AUTHOR = {Shi, Minjia and Wang, Yu},
     TITLE = {Optimal binary codes from one-{L}ee weight codes and two-{L}ee
              weight projective codes over {$\mathbb{Z}_4$}},
   JOURNAL = {J. Syst. Sci. Complex.},
  FJOURNAL = {Journal of Systems Science \& Complexity},
    VOLUME = {27},
      YEAR = {2014},
    NUMBER = {4},
     PAGES = {795--810},
      ISSN = {1009-6124,1559-7067},
   MRCLASS = {94B25},
  MRNUMBER = {3246028},
MRREVIEWER = {Xu\ Xiang},
       DOI = {10.1007/s11424-014-2188-8},
       URL = {https://doi.org/10.1007/s11424-014-2188-8},
}

@article {tang2025plotkin,
    AUTHOR = {Tang, Hopein Christofen},
     TITLE = {The existence of {P}lotkin-optimal linear codes over {$\mathbb{Z}_4$}},
   JOURNAL = {IEEE Trans. Inform. Theory},
  FJOURNAL = {Institute of Electrical and Electronics Engineers.
              Transactions on Information Theory},
    VOLUME = {71},
      YEAR = {2025},
    NUMBER = {9},
     PAGES = {6712--6726},
      ISSN = {0018-9448,1557-9654},
   MRCLASS = {94B05},
  MRNUMBER = {4949406},
       DOI = {10.1109/tit.2025.3582936},
       URL = {https://doi.org/10.1109/tit.2025.3582936},
}

@article {tang2023new,
    AUTHOR = {Tang, Hopein Christofen and Suprijanto, Djoko},
     TITLE = {New optimal linear codes over {${\mathbb{Z}_4}$}},
   JOURNAL = {Bull. Aust. Math. Soc.},
  FJOURNAL = {Bulletin of the Australian Mathematical Society},
    VOLUME = {107},
      YEAR = {2023},
    NUMBER = {1},
     PAGES = {158--169},
      ISSN = {0004-9727,1755-1633},
   MRCLASS = {94B05 (94B65)},
  MRNUMBER = {4531700},
MRREVIEWER = {Jens\ Zumbr\"agel},
       DOI = {10.1017/S0004972722000399},
       URL = {https://doi.org/10.1017/S0004972722000399},
}

@Misc{Grassl:codetables,
  author =       "Grassl, Markus",
  title =        "{Bounds on the minimum distance of linear codes and quantum codes}",
  howpublished = "Online available at \url{http://www.codetables.de}",
  year =         "2007",
  note =         "Accessed on 2026-07-14"
}
\end{document}